\newtheorem{theorem}{Theorem}[section]
\newtheorem{lemma}[theorem]{Lemma}
\newtheorem{proposition}[theorem]{Proposition}
\theoremstyle{definition}
\newtheorem{definition}[theorem]{Definition}
\newtheorem{example}[theorem]{Example}
\newtheorem{corollary}[theorem]{Corollary}
\theoremstyle{remark}
\newtheorem{remark}[theorem]{Remark}
\numberwithin{equation}{section}
\let\oldmarginpar\marginpar
\renewcommand\marginpar[1]{\-\oldmarginpar[\raggedleft\footnotesize #1]
{\raggedright\footnotesize #1}}
\theoremstyle{plain}
\newcommand{\cV}{\mathcal{V}}
\newcommand{\cE}{\mathcal{E}}
\newcommand{\C}{\mathbb{C}}
\newcommand{\CP}{\mathbb{C}\mathbf{P}}
\newcommand{\R}{\mathbb{R}}
\newcommand{\Z}{\mathbb{Z}}
\newcommand{\g}{\mathfrak{g}}
\newcommand{\F}{\mathcal{F}}
\newcommand{\A}{\mathcal{A}}
\DeclareMathOperator{\id}{id}
\DeclareMathOperator{\Ima}{Im}
\begin{document}

\title{Poisson Cohomology of holomorphic toric Poisson manifolds. I.}

\author{Wei Hong}
\address{School of Mathematics and Statistics, Wuhan University, China}
\address{Mathematics Research Unit, University of Luxembourg, Luxembourg}
\email{hong\textunderscore  w@whu.edu.cn}

\keywords{holomorphic Poisson manifolds, Poisson cohomology, toric variety, standard Poisson structure}

\begin{abstract}
A holomorphic toric Poisson manifold is a nonsingular toric variety equipped with a holomorphic Poisson structure, which is invariant under the torus action. 
In this paper, we computed the Poisson cohomology groups for all holomorphic toric Poisson structures on $\CP^n$, with the stand Poisson structure on $\CP^n$ 
as a special case. 
We also computed the algebraic and the formal Poisson cohomology groups of holomorphic toric Poisson structures on $\C^n$. 
\end{abstract}

\maketitle


\section{introduction}
Holomorphic Poisson manifolds play an important role in modern mathematics.
Many interesting works appeared in recently years.
The algebraic geometry of the Poisson brackets on projective spaces were studied by Bondal \cite{ Bondal} and Polishchuk \cite{Polishchuk}. 
 The close relation of holomorphic Poisson structures with generalized complex 
 geometry and mathematical physics were revealed by 
 Hitchin \cite{Hitchin 06, Hitchin 11} and Gualtieri \cite{Gualtieri 11}.  Deformations of holomorphic Poisson structures appeared in the 
 work of \cite{Goto 10}, \cite{Hitchin 12} and \cite{Kim 14}. 
In \cite{B-G-Y 06, G-Y 09}, Brown, Goodear and Yakimov  studied the standard 
Poisson structures on affine spaces and flag varieties. 
Laurent-Gengoux, Sti\'{e}non and Xu \cite{L-S-X 08} described the Poisson cohomology 
of holomorphic Poisson manifolds using Lie algebroids. 
In various situations, the Poisson cohomology of holomorphic Poisson manifolds were computed \cite{Hong-Xu 11, Mayansky 15, C-F-P 16,  Poon 16}.

This paper is devoted to the study of the Poisson geometry of toric varieties, 
especially, the Poisson cohomology of holomorphic toric Poisson manifolds.
A holomorphic toric Poisson manifold is a nonsingular toric variety $X$, equipped with a holomorphic Poisson structure $\pi$, which is invariant under the torus action
( Notice that real toric Poisson structures were studied in \cite{Caine}).
Holomorphic toric Poisson manifold is a special case of the ``$T$-Poisson manifold'' 
in the sense of \cite{Lu-Mouquin 15}. 
In this paper, we computed the Poisson cohomology groups for all holomorphic toric Poisson structures on $X=\CP^n$.
We also computed the algebraic Poisson cohomology groups and the formal Poisson cohomology groups for all holomorphic toric Poisson structures on $X=\C^n$.

This paper is organized in the following way. 
In Section \ref{section2}, we give some background knowledge. In Section \ref{section3} and Section \ref{section4}, we study the Poisson cohomology groups of $X=\CP^n$. In Section \ref{section5}, we study the Poisson cohomology groups of $X=\C^n$. 

The main results of this paper are:
\begin{itemize}
\item We give a weight spaces decomposition according to the $(\C^*)^n$-representation on the space of holomorphic vector fields and multi-vector fields on $X=\CP^n$ (Theorem \ref{CPn-multiVec-thm} and Proposition \ref{CPn-multiVec-prop}).
\item  We described the Poisson cohomology groups of all holomorphic toric Poisson structures on $X=\CP^n$  in Theorem \ref{CPn-cohomology-thm} and Proposition \ref{CPn-cohomology-prop}, which gives the weight spaces decomposition for the $(\C^*)^n$-representation on $H^{\bullet}_\pi(X)$.
\item  We computed the Poisson cohomology groups for the standard Poisson structure on $X=\CP^n$ in certain situations. ( Theorem \ref{ST-thm}, Proposition \ref{CP2-coh-prop} and Proposition \ref{CP3-coh-prop})
\item For all holomorphic toric Poisson structures on $X=\C^n$, we described the corresponding algebraic Poisson cohomology groups and the formal Poisson cohomology groups in Theorem \ref{Cn-cohomology-thm}, 
which developed Monnier's work in \cite{Monnier 02}.
\end{itemize}

{\bf Acknowledgements}\quad
I would like to thank Sam Evens,  Zhangju Liu, Jianghua Lu, Yannick Voglaire and Ping Xu for their helpful discussions and comments. 
Special thanks go to Yat Sun Poon and Bing Zhang for their valuable opinions to the draft of this paper.
I wish to express my deep gratitude to Martin Schlichenmaier for his support during the author's stay in Luxembourg. Hong's research was partially supported by NSFC grant 11401441 and FNR grant 5650104.

\section{Preliminary}\label{section2}
\subsection{Poisson cohomology of holomorphic Poisson manifolds}
\begin{definition}
A holomorphic Poisson manifold is a complex manifold $X$ equipped with a holomorphic bivector field $\pi$ such that $[\pi,\pi]=0$, where $[\cdot,\cdot]$ is the Schouten bracket.
\end{definition}
The Poisson cohomology of a holomorphic Poisson manifold is defined in the following way:
\begin{definition}
Let $(X, \pi)$ be a holomorphic Poisson manifold of dimension n.
 The Poisson cohomology  $H^\bullet_\pi (X) $ is
 the cohomology group of the complex of sheaves:
\begin{equation}
\mathcal{O}_{X}\xrightarrow{d_{\pi}}T_{X}\xrightarrow{d_{\pi}}.....
\xrightarrow{d_{\pi}}\wedge^{i-1}T_{X}\xrightarrow{d_{\pi}}\wedge^{i}T_{X}
\xrightarrow{d_{\pi}}\wedge^{i+1}T_{X}\xrightarrow{d_{\pi}}......
\xrightarrow{d_{\pi}}\wedge^{n}T_{X},
\end{equation}
where $d_{\pi}=[\pi,\cdot]$.
\end{definition}

\begin{lemma}\cite{L-S-X 08}
  The Poisson cohomology of a holomorphic
  Poisson manifold $(X,\pi)$ is isomorphic to the  total cohomology of the double complex \\
$$\begin{array}{ccccccc}
......& &......& &......& & \\
d_{\pi}\big\uparrow & & d_{\pi}\big\uparrow & & d_{\pi}\big\uparrow & &  \\
\Omega^{0,0}(X,T^{2,0}X) & \xrightarrow{\bar{\partial}} &
 \Omega^{0,1}(X, T^{2,0}X) & \xrightarrow{\bar{\partial}} &
 \Omega^{0,2}(X, T^{2,0}X) &\xrightarrow{\bar{\partial}}   &
 ......\\
 d_{\pi}\big\uparrow & & d_{\pi}\big\uparrow & & d_{\pi}\big\uparrow & &  \\
\Omega^{0,0}(X,T^{1,0}X) & \xrightarrow{\bar{\partial}} &
 \Omega^{0,1}(X, T^{1,0}X) & \xrightarrow{\bar{\partial}} &
 \Omega^{0,2}(X,T^{1,0}X) &\xrightarrow{\bar{\partial}}   &
 ......\\
 d_{\pi}\big\uparrow & & d_{\pi}\big\uparrow & & d_{\pi}\big\uparrow & &  \\
\Omega^{0,0}(X, T^{0,0}X) & \xrightarrow{\bar{\partial}} &
\Omega^{0,1}(X, T^{0,0}X) & \xrightarrow{\bar{\partial}} & \Omega^{0,2}(X, T^{0,0}X) & \xrightarrow{\bar{\partial}} &
 ......\\
\end{array}$$
\end{lemma}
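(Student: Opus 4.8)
The plan is to identify both the Poisson cohomology $H^\bullet_\pi(X)$ and the total cohomology of the displayed double complex with the hypercohomology $\HH^\bullet\big(X,(\wedge^\bullet T_X, d_\pi)\big)$ of the complex of sheaves of holomorphic polyvector fields. Since $H^\bullet_\pi(X)$ is by definition the cohomology of that complex of sheaves, the content of the lemma is to show that the double complex of global smooth forms computes the same hypercohomology.

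First I would set up the double complex of sheaves $\mathcal{K}^{p,q} = \mathcal{A}^{0,q}(\wedge^p T^{1,0}X)$, the sheaf of smooth $(0,q)$-forms valued in $\wedge^p T^{1,0}X$, equipped with the horizontal differential $\bar\partial$ and the vertical differential $d_\pi = [\pi,\cdot]$, the latter extended from holomorphic polyvector fields to polyvector-valued forms by acting only on the polyvector part. I would then verify the two structural identities that make this a genuine double complex: $d_\pi^2 = \tfrac12[[\pi,\pi],\cdot] = 0$, which follows from the graded Jacobi identity together with the Poisson condition $[\pi,\pi]=0$; and $\bar\partial d_\pi = \pm\, d_\pi \bar\partial$. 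This commutation is exactly where the holomorphicity of $\pi$ enters: because $\bar\partial\pi = 0$, the operator $[\pi,\cdot]$ differentiates only in holomorphic directions and hence commutes with $\bar\partial$. I expect this compatibility to be the main technical obstacle, since it requires carefully extending the Schouten bracket to the bigraded algebra $\Omega^{0,\bullet}(X,\wedge^\bullet T^{1,0}X)$ and checking the derivation and sign conventions; this is precisely the point at which a holomorphic Poisson structure, rather than a merely smooth one, is needed.

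Next I would invoke the Dolbeault--Grothendieck lemma. For each fixed $p$, the row
$$0 \to \wedge^p T_X \to \mathcal{A}^{0,0}(\wedge^p T^{1,0}X) \xrightarrow{\bar\partial} \mathcal{A}^{0,1}(\wedge^p T^{1,0}X) \xrightarrow{\bar\partial} \cdots$$
is exact, i.e. a resolution of the sheaf $\wedge^p T_X$ of holomorphic $p$-vector fields. Because the sheaves $\mathcal{A}^{0,q}(\wedge^p T^{1,0}X)$ are fine (they admit partitions of unity), this is an acyclic resolution, and the inclusion of complexes of sheaves $(\wedge^\bullet T_X, d_\pi) \hookrightarrow (\mathrm{Tot}\,\mathcal{K},\ \bar\partial + d_\pi)$ is a quasi-isomorphism. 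Consequently the two complexes of sheaves have the same hypercohomology.

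Finally I would conclude by passing to global sections. Since each $\mathcal{K}^{p,q}$ is fine and therefore $\Gamma(X,-)$-acyclic, the hypercohomology of $\mathrm{Tot}\,\mathcal{K}$ is computed by the cohomology of the total complex of its global sections $\Gamma(X,\mathcal{K}^{p,q}) = \Omega^{0,q}(X,\wedge^p T^{1,0}X)$, which is exactly the displayed double complex. Stringing the isomorphisms together yields
$$H^\bullet_\pi(X) = \HH^\bullet\big(X,(\wedge^\bullet T_X, d_\pi)\big) \cong H^\bullet_{\mathrm{tot}}\big(\Omega^{0,\bullet}(X,\wedge^\bullet T^{1,0}X),\ \bar\partial + d_\pi\big),$$
as desired. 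As a cross-check, one may run the spectral sequence of the double complex filtered in the $\bar\partial$-direction: its $E_1$-page is $H^q(X,\wedge^p T_X)$ with differential induced by $d_\pi$, which is precisely the hypercohomology spectral sequence of $(\wedge^\bullet T_X, d_\pi)$, confirming convergence to the Poisson cohomology.
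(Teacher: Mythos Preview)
Your argument is correct and is essentially the standard hypercohomology proof: resolve each sheaf $\wedge^p T_X$ by the fine Dolbeault complex $\mathcal{A}^{0,\bullet}(\wedge^p T^{1,0}X)$, check that $d_\pi$ and $\bar\partial$ (anti)commute because $\pi$ is holomorphic, and conclude that global sections of the total complex compute the hypercohomology $\HH^\bullet(X,(\wedge^\bullet T_X,d_\pi))$, which is $H^\bullet_\pi(X)$ by definition.

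Note, however, that the paper does not give its own proof of this lemma at all: it is stated with a citation to \cite{L-S-X 08} and used as a black box. So there is no in-paper argument to compare against. Your proof is precisely the argument underlying the cited result, and the spectral-sequence cross-check you mention is exactly what the paper exploits in the subsequent Lemma~\ref{LSX-lem}.
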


\begin{lemma}\label{LSX-lem}
Let $(X,\pi)$ be a holomorphic Poisson manifold. If all the higher
cohomology groups $H^{i}(X,\wedge^{j}T_{X})$ vanish for $i>0$, then the
Poisson cohomology  $H^\bullet_\pi (X)$
is isomorphic to the cohomology of the complex
\begin{equation}
H^{0}(X,\mathcal{O}_{X})\xrightarrow {d_{\pi}}
H^{0}(X,T_{X})\xrightarrow{d_{\pi}}
H^{0}(X,\wedge^{2}T_{X})\xrightarrow{d_{\pi}}
\ldots \xrightarrow{d_{\pi}}H^{0}(X,\wedge^{n}T_{X}),
\end{equation}
where $d_{\pi}=[\pi,\cdot]$.
\end{lemma}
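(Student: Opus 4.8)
The plan is to combine the double complex of the preceding lemma with the vanishing hypothesis through the spectral sequence of a double complex. Write $K^{p,q} = \Omega^{0,q}(X, \wedge^{p} T_{X})$ for the entries of that double complex, with horizontal differential $\bar{\partial} \colon K^{p,q} \to K^{p,q+1}$ and vertical differential $d_{\pi} \colon K^{p,q} \to K^{p+1,q}$. Since $\wedge^{p}T_{X} = 0$ for $p > n$ and $\Omega^{0,q} = 0$ for $q > n$, this is a bounded (first-quadrant) double complex, so there are no convergence issues. By the preceding lemma, $H^{\bullet}_{\pi}(X)$ is the cohomology of the total complex $(\mathrm{Tot}^{\bullet}K, D)$ with $D = d_{\pi} \pm \bar{\partial}$ and $\mathrm{Tot}^{m}K = \bigoplus_{p+q=m} K^{p,q}$.

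First I would run the spectral sequence that computes cohomology in the horizontal ($\bar{\partial}$) direction first. For each fixed $p$ the row $(K^{p,\bullet}, \bar{\partial})$ is precisely the Dolbeault complex of the holomorphic vector bundle $\wedge^{p}T_{X}$, so Dolbeault's theorem identifies its $q$-th cohomology with the sheaf cohomology group. This gives the first page $E_{1}^{p,q} = H^{q}(X, \wedge^{p}T_{X})$, with the induced differential $d_{1} \colon E_{1}^{p,q} \to E_{1}^{p+1,q}$ coming from $d_{\pi} = [\pi, \cdot]$.

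The decisive step is to feed in the hypothesis $H^{q}(X, \wedge^{p}T_{X}) = 0$ for all $q > 0$. This forces $E_{1}^{p,q} = 0$ whenever $q > 0$, so the first page is concentrated in the single row $q = 0$, where $E_{1}^{p,0} = H^{0}(X, \wedge^{p}T_{X})$. Along this row $d_{1}$ is the map induced by $[\pi, \cdot]$ on global sections, so $E_{2}^{p,0}$ is exactly the $p$-th cohomology of the complex $\big(H^{0}(X, \wedge^{\bullet}T_{X}), d_{\pi}\big)$ in the statement, and $E_{2}^{p,q} = 0$ for $q > 0$.

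Finally, since $E_{2}$ lives in the single row $q = 0$, every higher differential $d_{r}$ with $r \ge 2$ has either its source or its target in a row $q \neq 0$ and therefore vanishes; the spectral sequence degenerates at $E_{2}$. Hence $H^{m}_{\pi}(X) \cong \bigoplus_{p+q=m} E_{\infty}^{p,q} = E_{2}^{m,0}$, which is the $m$-th cohomology of the displayed complex, as claimed. The argument is standard, so I expect the only delicate points to be bookkeeping: verifying that the first differential of the chosen spectral sequence really is $\bar{\partial}$ so that Dolbeault's theorem applies, and that the surviving $E_{1}$-differential is genuinely $[\pi, \cdot]$ on $H^{0}(X, \wedge^{\bullet}T_{X})$. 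The vanishing hypothesis then does the substantive work by collapsing the spectral sequence onto the bottom row.
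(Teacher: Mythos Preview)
The paper does not actually prove this lemma; it is stated without proof as a direct consequence of the preceding double-complex lemma cited from \cite{L-S-X 08}. Your spectral-sequence argument is exactly the standard way to extract it: take the filtration whose $E_1$-page computes $\bar{\partial}$-cohomology first, identify $E_1^{p,q}\cong H^q(X,\wedge^p T_X)$ via Dolbeault, use the vanishing hypothesis to collapse to the row $q=0$, and read off degeneration at $E_2$. This is correct and is precisely what the paper is implicitly invoking.
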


\subsection{Toric varieties}
In this section, we recall some classical knowledge of toric varieties.
One may consult \cite{Cox}, \cite{Fulton} and \cite{Oda}.
\begin{definition}
A toric variety is an irreducible variety $X$ such that
\begin{enumerate}
\item $(\C^*)^n$ is a Zariski open set of $X$, and
\item the action of $(\C^*)^n$ on itself extends to an action of $(\C^*)^n$ on $X$.
\end{enumerate}
\end{definition}

Toric varieties can also be described by a Lattice $N\cong\Z^n$ and a fan $\Delta$ in $N_{\mathbb{R}}=N\otimes_\Z\mathbb{R}\cong\mathbb{R}^n$. 

Let $M=Hom_\Z(N,\Z)$ and $T_N=Hom_\Z(M,\C^*)=N\otimes_{\Z}\C^*$. Then $T_N\cong(\C^*)^n$.
Moreover, we have $M\cong Hom(T_N,\C^*)$ and $N\cong Hom(\C^*, T_N)$.

Each element $m$ in $M$ gives rise to a character $\chi^m\in Hom(T_N,\C^*)$,
given by $$\chi^m(t)=\langle t,m\rangle \quad\text{for}\quad t\in T_N.$$
Each element $a$ in $N$ gives rise to a one-parameter subgroup 
$\gamma_a\in Hom(\C^*,T_N)$ given by
$$\gamma_a(\lambda)(m)=\lambda^{\langle a, m\rangle}\quad \text{for} \quad \lambda\in\C^*\quad \text{and}\quad m\in M.$$

Fix a $\Z$-basis $\{e_1,...e_n\}$ of $N$ and let $\{e_1^*,...e_n^*\}$ 
be the dual basis of $M$. Let $t_i=\langle t, e_i^*\rangle$ $(1\leq i\leq n)$.
Then there is an isomorphism 
$$T_N\cong(\C^*)^n: t\longleftrightarrow (t_1, t_2,...t_n),$$
where $t_1, t_2,...t_n\in\C^*$ are considered as the coordinates on $T_N$.

For $m=\sum_{i=1}^n m_i e_i^*$, we have $\chi^m=t_1^{m_1}t_2^{m_2}...t_n^{m_n}$,
which is a Laurent monomial on $T_N$. For $a=\sum_{i=1}^n a_i e_i$, 
the one-parameter subgroup $\gamma_a$ can be written as 
$\gamma_a(\lambda)=(\lambda^{a_1},...\lambda^{a_n})$ for $\lambda\in\C^*$.
 
 \begin{definition}
 A subset $\sigma$ of $N_\R$ is called a rational polyhedral cone (with apex at the origin $O$), if there there exist a finite number of elements $e_1, e_2,...,e_s$ in $N$ such
 that 
 \begin{align*}
 \sigma&=\R_{\geq 0}e_1+...\R_{\geq 0} e_s\\
 &=\{a_1 e_1+...+a_s e_s\mid a_i\in\R, a_i\geq 0~ \text{for all} ~0\leq i\leq s\},
 \end{align*}
 where we denote by $R_{\geq 0}$ the set of nonnegative real numbers.
 \begin{enumerate}
\item $\sigma$ is strongly convex if $\sigma\cap(-\sigma)={O}$.
\item The dimension of $\sigma$ is the dimension of the smallest subspace of $N_\R$ containing $\sigma$.
\end{enumerate}
 \end{definition}
 In this paper, a cone is always a rational polyhedral cone.

Let $M_\R=M\otimes_Z\R$. The canonical $\Z$-bilinear pairing 
$$\langle,\rangle: M\times N\rightarrow\Z$$
extends to a $\R$-bilinear pairing 
$\langle ,\rangle: M_\R\times N_\R\rightarrow\R$.
Given a cone $\sigma\in N_\R$, its dual cone in $M_\R$ is defined to be
 $$\sigma^{\vee}=\{x\in M_\R\mid \langle x,y\rangle\geq0~\text{ for all}~y\in\sigma\}.$$
 A face of $\sigma$ is a subset of $\sigma$, which can be written as 
$m^{\perp}\cap\sigma=\{x\in\sigma\mid \langle x,m\rangle=0\}$ for $m\in\sigma^{\vee}$.

\begin{definition}\label{defn-fan}
A fan in $N$ is a nonempty collection $\Delta$ of strongly convex rational polyhedral cones in $N_\R$ satisfying the following conditions:
\begin{enumerate}
\item Every face of any $\sigma\in\Delta$ is contained in $\Delta$.
\item For any $\sigma,\sigma' \in\Delta$, the intersection $\sigma\cap\sigma'$ is a face of both $\sigma$ and $\sigma'$.
\end{enumerate}
The union $|\Delta|=\cup_{\sigma\in\Delta}\sigma$ is called the support of $\Delta$.
 \end{definition}
 In this paper, we assume that fans are finite, i.e., a fan consist of only finite number of cones.
 
 Given a fan $\Delta$, the set of $k$-dimensional cones in $\Delta$ is denoted by 
 $\Delta(k)$ $(0\leq k\leq n)$. The primitive element of $\alpha\in\Delta(1)$ is the unique generator of $\alpha\cap N$, denoted by $e(\alpha)$.
 
 Let $S_{\sigma}=\sigma^{\vee}\cap M$. For a strongly convex rational polyhedral cone 
 $\sigma$ in $N_\R$, the semigroup algebra 
  $$\C[S_{\sigma}]=\bigoplus_{m\in S_{\sigma}}\C\chi^m$$ 
  is a finitely generated commutative $\C$-algebra. 
 The affine variety $U_{\sigma}=Spec(\C[S_{\sigma}])$ is a
  $n$-dimensional toric variety.
 
\begin{theorem}
Given a lattice $N\cong\Z^n$ and a fan $\Delta$ in $N_\R\cong R^n$, 
there exists a toric variety $X_\Delta$, obtained from the affine variety 
$U_\sigma, \sigma\in\Delta$, by gluing together 
$U_\sigma$ and $U_\tau$ along their common open subset $U_{\sigma\cap\tau}$ for all
$\sigma,\tau\in\Delta$. 
\end{theorem}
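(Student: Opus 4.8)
The plan is to assemble $X_\Delta$ by the standard gluing construction, using the two axioms of Definition \ref{defn-fan} to guarantee that the affine charts $U_\sigma=\Spec(\C[S_\sigma])$ patch together into a single separated, irreducible variety. The first ingredient I would record is a \emph{face lemma}: if $\gamma$ is a face of $\sigma$, written $\gamma=\sigma\cap m^{\perp}$ for some $m\in S_\sigma=\sigma^{\vee}\cap M$, then $S_\gamma=S_\sigma+\Z_{\geq 0}(-m)$, so that $\C[S_\gamma]$ is the localization $\C[S_\sigma][\chi^{-m}]=\C[S_\sigma]_{\chi^m}$. Taking spectra, this realizes $U_\gamma$ as the principal open subset $\{\chi^m\neq 0\}$ of $U_\sigma$, giving an open immersion $U_\gamma\hookrightarrow U_\sigma$ that restricts to the identity on the common big torus $T_N=U_{\{O\}}$. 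Since each $\C[S_\sigma]$ is a subalgebra of the Laurent ring $\C[M]$ it is a domain, so every $U_\sigma$ is integral and contains $T_N$ as a dense open subset.

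Next I would carry out the gluing. For any $\sigma,\tau\in\Delta$, axiom (2) of Definition \ref{defn-fan} tells us that $\gamma:=\sigma\cap\tau$ is a common face of both, so the face lemma produces open immersions $U_\gamma\hookrightarrow U_\sigma$ and $U_\gamma\hookrightarrow U_\tau$; I glue $U_\sigma$ to $U_\tau$ along $U_\gamma$ via the induced isomorphism of these two open images. Because each identification is the identity on the dense torus $T_N$ and the charts are integral (hence separated), any two gluing maps that agree on $T_N$ agree everywhere; consequently the cocycle condition on triple overlaps $U_\sigma\cap U_\tau\cap U_\rho=U_{\sigma\cap\tau\cap\rho}$ is automatic, the triple intersection again being a face by repeated use of axiom (2). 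The resulting object $X_\Delta$ is a finite-type $\C$-scheme, irreducible since it is covered by irreducible charts sharing the dense torus, and it carries a $T_N$-action extending the multiplication on $T_N$, so both conditions in the definition of a toric variety hold.

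The main obstacle is \textbf{separatedness}, i.e.\ ruling out doubled points so that $X_\Delta$ is genuinely a variety. By the affine criterion it is enough to show, for each pair $\sigma,\tau$, that $U_\sigma\cap U_\tau$ is affine (already clear, being $U_\gamma$) and that the multiplication map $\C[S_\sigma]\otimes_\C\C[S_\tau]\to\C[S_\gamma]$ is surjective, equivalently that $S_\gamma=S_\sigma+S_\tau$ for $\gamma=\sigma\cap\tau$. I expect this to be the crux of the argument, and it hinges on a \emph{separation lemma}: there exists $m\in\sigma^{\vee}\cap(-\tau)^{\vee}\cap M$ with $\gamma=\sigma\cap m^{\perp}=\tau\cap m^{\perp}$, obtained by separating the two strongly convex cones along their common face by a supporting hyperplane with rational normal vector. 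Granting such an $m$, one has $m\in S_\sigma$ and $-m\in S_\tau$, and the inclusions $S_\sigma,S_\tau\subseteq S_\gamma$ combine with the face lemma to give $S_\gamma=S_\sigma+\Z_{\geq 0}(-m)\subseteq S_\sigma+S_\tau\subseteq S_\gamma$, forcing equality. Note that the strong convexity required in Definition \ref{defn-fan} is exactly what makes the separating hyperplane available and ensures each chart has $T_N$ as its torus.
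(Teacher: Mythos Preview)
Your proof proposal is correct and is precisely the standard argument found in the references the paper cites (Cox, Fulton, Oda). However, the paper itself does \emph{not} prove this theorem: it appears in the preliminary Section~\ref{section2} as a background fact, stated without proof and attributed to the textbook literature on toric varieties. So there is no ``paper's own proof'' to compare against; your write-up simply supplies the classical construction that the paper takes for granted.
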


A cone $\sigma$ is called nonsingular if $\sigma$ can be written as 
$$\sigma=\R_{\geq 0}e_1+...\R_{\geq 0} e_s,$$ 
where $\{e_1, e_2,...,e_s\}$ is a subset of a $\Z$-basis of $N$.
\begin{theorem}
 Let $X_\Delta$ be the toric variety  associated with a fan $\Delta$ in $N_\R$. Then
 \begin{enumerate}
\item $X_\Delta$ is compact $\Longleftrightarrow$ $|\Delta|=N_\R$.
\item $X_\Delta$ is nonsingular $\Longleftrightarrow$ each $\sigma\in\Delta$ is nonsingular.
\end{enumerate}
 \end{theorem}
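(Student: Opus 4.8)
The plan is to prove the two equivalences separately, treating the nonsingularity criterion (2) first because it is a local statement, and then the compactness criterion (1) by analysing limits of one-parameter subgroups $\gamma_a$.

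For part (2), since nonsingularity is an affine-local property and the charts $U_\sigma$ ($\sigma\in\Delta$) form an open cover of $X_\Delta$ that is closed under passing to faces, it suffices to show that $U_\sigma=\Spec(\C[S_\sigma])$ is nonsingular if and only if $\sigma$ is a nonsingular cone. The easy direction: if $\sigma=\R_{\geq 0}e_1+\dots+\R_{\geq 0}e_s$ with $\{e_1,\dots,e_n\}$ a $\Z$-basis of $N$, then $S_\sigma=\sigma^\vee\cap M$ is freely generated, and one reads off $U_\sigma\cong\C^s\times(\C^*)^{n-s}$, which is smooth. For the converse I would first reduce to the full-dimensional case: if $\sigma$ spans a $d$-dimensional subspace of $N_\R$, splitting off the torus factor gives $U_\sigma\cong U_{\sigma'}\times(\C^*)^{n-d}$ with $\sigma'$ full-dimensional in its span, so smoothness of $U_\sigma$ is equivalent to smoothness of $U_{\sigma'}$. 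For a full-dimensional $\sigma$ there is a unique torus-fixed point $x_\sigma$, corresponding to the maximal ideal generated by the $\chi^m$ with $m\in S_\sigma\setminus\{0\}$. I would then compute the Zariski cotangent space $\mathfrak{m}/\mathfrak{m}^2$ at $x_\sigma$ and identify its dimension with the number of elements in the minimal generating set (Hilbert basis) of the semigroup $S_\sigma$. Smoothness forces this number to equal $n=\dim U_\sigma$, whence $S_\sigma$ is free on $n$ generators forming a $\Z$-basis of $M$; dualizing shows the minimal generators of $\sigma$ form a $\Z$-basis of $N$. The hard part will be this converse: carefully showing that the minimal semigroup generators span the cotangent space and that their number being $n$ together with smoothness forces freeness.

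For part (1), the central tool is that, for $a\in N$, the limit $\lim_{\lambda\to 0}\gamma_a(\lambda)$ exists in $U_\sigma$ if and only if $a\in\sigma$. This is because in the chart $U_\sigma$ the coordinate functions are the $\chi^m$ with $m\in S_\sigma$, and $\chi^m(\gamma_a(\lambda))=\lambda^{\langle a,m\rangle}$ stays bounded as $\lambda\to 0$ exactly when $\langle a,m\rangle\geq 0$ for all $m\in\sigma^\vee$, i.e. precisely when $a\in\sigma$. For the implication $|\Delta|=N_\R\Rightarrow X_\Delta$ compact, I would invoke the valuative criterion of properness: a $K$-point of $T_N$, for a DVR $R$ with fraction field $K$ and valuation $\nu$, determines an element $a_\nu\in N$ via $\nu$; completeness of $\Delta$ guarantees $a_\nu$ lies in some cone $\sigma$, and the computation above shows the map extends over $\Spec R$ into $U_\sigma$. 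Separatedness (hence uniqueness of the extension) is already built into the gluing via the fan intersection axiom of Definition \ref{defn-fan}, so existence plus separatedness yields properness, which gives compactness in the analytic topology.

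For the reverse implication I argue by contraposition. If $|\Delta|\neq N_\R$, then since $|\Delta|$ is a finite union of rational polyhedral cones (closed under positive scaling), its complement is open and contains a lattice point $a\in N\setminus|\Delta|$. By the criterion above, $a\notin\sigma$ for every $\sigma\in\Delta$, so the curve $\lambda\mapsto\gamma_a(\lambda)$ has no limit point in any chart $U_\sigma$ as $\lambda\to 0$; thus it eventually leaves every compact subset, and $X_\Delta$ is noncompact. The delicate point here is to upgrade "no limit" to "no limit point," established uniformly across the finitely many charts: failure of $a\in\sigma$ must force some monomial coordinate $\chi^m(\gamma_a(\lambda))$ to diverge, so no subsequence of $\gamma_a(\lambda)$ can converge within $U_\sigma$.
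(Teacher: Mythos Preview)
The paper does not supply its own proof of this theorem. It appears in the preliminary Section~\ref{section2} as a classical fact about toric varieties, with the reader referred to the standard references \cite{Cox}, \cite{Fulton}, and \cite{Oda}. Your proposed argument is correct and is essentially the textbook proof found in those sources (in particular Fulton): the local computation of the Zariski cotangent space at the torus-fixed point for part~(2), and the analysis of limits of one-parameter subgroups $\gamma_a$ together with the valuative criterion of properness for part~(1). Since the paper offers no proof to compare against, there is nothing further to contrast; your outline would serve as a self-contained proof if one were wanted.
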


For a nonsingular toric variety $X_\Delta$, the action map 
$T_N\times X_\Delta\rightarrow X_\Delta$ is a holomorphic map. 
Let $N_\C=N\otimes_{\Z}\C$. Then $Lie(T_N)\cong N_\C$. 
By identification of $Lie(T_N)$ with $N_\C$,
the infinitesimal action of Lie algebra $Lie(T_N)$ on $X_\Delta$ defines a map
\begin{equation} \label{rho-def-eqn}
\rho:N_\C=N\otimes_\Z\C\rightarrow \mathfrak{X}(X_\Delta).
\end{equation}
The image of $\rho$ are holomorphic vector fields on $X_{\Delta}$.
For any $e\in N_\C$ and $m\in M$, we have
\begin{equation} \label{action-chi-eqn}
\rho(e)(\chi^m)=\langle e, m\rangle \chi^m,
\end{equation}
where $\chi^m$ is considered as a rational function on $X_\Delta$.
By abuse of notation, we denote the induced map
\begin{equation} \label{rho-def-eqn2}
\wedge^k N_\C\rightarrow\mathfrak{X}^k(X_\Delta)
\end{equation}
 also by $\rho$.
 
\begin{example}\label{Cn-exa1}
Let $X=\C^n$ and $(z_1,\ldots, z_n)$ be the standard coordinates on it. 
There is a nature embedding $(\C^*)^n\hookrightarrow\C^n$.
The $(\C^*)^n$-action on $X=\C^n$ defined by 
$$(t_1,t_2,\ldots,t_n)\cdot(z_1,z_2,\ldots,z_n)=(t_1 z_1, t_2 z_2,\ldots, t_n z_n)$$
makes $X=\C^n$ a toric variety.

Let $\{e_1=(1,0,,..,0),~..., ~e_n=(0,...,0,1)\}$ be the standard $\Z$-basis of $N=\Z^n$. 
Let $$\sigma=\sum_{i=1}^{n}\R_{\geq0}e_i.$$ 
Let the fan $\Delta$ be the collection of the cones of the following form:
$$\sum_{s=1}^{k}\R_{\geq0}e_{i_s},\quad \{i_1,i_2,...,i_k\}\subseteq\{1,...,n\}.$$
Then we have $X_\Delta=U_\sigma\cong\C^n$.

Let $\{e_1^*,e_2^*,...,e_n^*\}$ be the dual basis of $\{e_1,e_2,...,e_n\}$ in $M$.
Then we have $$\chi^{e_i^*}=z_i, \quad 1\leq i\leq n.$$
For $m=\sum_{i=1}^n m_i e_i^*$, the rational function $\chi^m$ on 
$X_\Delta=\C^n$ can be written as 
\begin{equation}\label{Cn-chi-eqn}
\chi^m=z_1^{m_1}...z_n^{m_n}.
\end{equation} 

\end{example}

\begin{example}\label{CPn-exa1}
Let $X=\CP^n$ and $[z_0, z_1,\ldots, z_n]$ be homogenous coordinates on it. 
The map $$(\C^*)^n\rightarrow\CP^n$$
defined by $(t_1,t_2,...,t_n)\mapsto[1, t_1,t_2,...,t_n]$ allows us to identify $(\C^*)^n$ with the Zariski open subset $\{[z_0,z_1,\ldots,z_n]\in\CP^n\mid z_i\neq 0, ~ 0\leq i\leq n\}$ of $\CP^n$.
The $(\C^*)^n$ action on $\CP^n$ given by 
$$(t_1,\ldots,t_n).[z_0,z_1,\ldots,z_n]=[z_0, t_1z_1,\ldots, t_n z_n]$$
makes $X=\CP^n$ to be a toric variety.  

Let $e_0=(-1,-1,...,-1), e_1=(1,0,,..,0),...,e_n=(0,...,0,1)$ be vectors in $N=\Z^n$. 
Let the fan $\Delta$ be the collection of the cones of the following form:
$$\sigma=\sum_{s=1}^{k}\R_{\geq0}e_{i_s},\quad \{i_1,i_2,...,i_k\}\subsetneq\{0,1,...,n\}.$$
Then we have $X_{\Delta}\cong\CP^n$.
Let
 $$\sigma_i=\sum_{s=1}^{n}\R_{\geq0}e_{i_s},\quad \{i_1,i_2,...,i_n\}=\{0,1,...,n\}\backslash\{i\}.$$
Then $U_{\sigma_i}$ can be identified with the affine open set 
$U_i=\{[z_0,z_1,...,z_n]\in\CP^n\mid z_i\neq 0\}$.

Choose the $\Z$-basis $\{e_1,e_2,...,e_n\}$ of $N$ and its dual basis
$\{e_1^*,e_2^*,...,e_n^*\}$ of $M$. 
Then $t_i=\chi^{e_i^*}$ $(1\leq i\leq n)$ can be considered as the affine coordinates on $U_0$, i.e.,  $t_i=\frac{z_i}{z_0}$.
And for $m=\sum_{i=1}^n m_i e_i^*$, the rational function $\chi^m$ on 
$X_\Delta=\CP^n$ can be written as 
\begin{equation}\label{CPn-chi-eqn}
\chi^m=t_1^{m_1}t_2^{m_2}...t_n^{m_n}=z_0^{m_0}z_1^{m_1}...z_n^{m_n},
\end{equation} 
where $m_0=-\sum_{i=1}^n m_i$. 
\end{example}

\subsection{Holomorphic toric Poisson structures}
\begin{definition}
Let $X$ be a nonsingular toric variety. If a holomorphic Poisson structure $\pi$ on $X$ is invariant under the torus action, then $\pi$ is called a holomorphic toric Poisson structure on $X$, and $X$ is called a holomorphic toric Poisson manifold.
\end{definition}

\begin{proposition}\label{ToricPoissonPro}
Let $X_\Delta$ be a nonsingular toric variety associated with a fan $\Delta$ in $N_\R$. 
Then the set of holomorphic toric Poisson structures on $X$ coincide with $\rho(\wedge^2 N_\C)$, where $\rho:\wedge^2 N_\C\rightarrow \mathfrak{X}^2(X_\Delta)$ 
is defined in \eqref{rho-def-eqn2}.
\end{proposition}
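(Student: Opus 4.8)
The plan is to prove the two inclusions separately. First I would show that every element of $\rho(\wedge^2 N_\C)$ is a holomorphic toric Poisson structure, and then show conversely that every holomorphic toric Poisson structure arises this way.

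For the first inclusion, let $\eta \in \wedge^2 N_\C$ and set $\pi = \rho(\eta)$. Since $\rho$ is induced by the infinitesimal action of the abelian Lie algebra $\mathrm{Lie}(T_N) \cong N_\C$, I would first verify that $\rho$ is a morphism of Gerstenhaber algebras, or at least that it is compatible with the Schouten bracket in the sense that $[\rho(\eta),\rho(\eta)] = \rho([\eta,\eta]_{N_\C})$, where the bracket on $\wedge^\bullet N_\C$ is the (trivial) extension of the Lie bracket on the abelian Lie algebra $N_\C$. Because $N_\C$ is abelian, this bracket vanishes, so $[\pi,\pi] = \rho(0) = 0$, giving the Jacobi/integrability condition. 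The invariance of $\pi$ under the $T_N$-action is immediate: the image of $\rho$ consists of fundamental vector fields of the torus action, and since $T_N$ is abelian these are $T_N$-invariant, hence so are their wedge products. Thus $\pi$ is a holomorphic toric Poisson structure.

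For the converse — which I expect to be the main obstacle — let $\pi$ be an arbitrary holomorphic toric Poisson structure, i.e.\ a $T_N$-invariant element of $H^0(X_\Delta, \wedge^2 T_{X_\Delta})$. The strategy is to use the $(\C^*)^n$-invariance to pin down the coefficients of $\pi$ in local coordinates. On the dense open orbit $T_N \subset X_\Delta$, the fundamental vector fields $\rho(e_1),\dots,\rho(e_n)$ form a global holomorphic frame of the tangent bundle (they are the logarithmic vector fields $t_i \partial/\partial t_i$), so on $T_N$ one may write $\pi = \sum_{i<j} f_{ij}\,\rho(e_i)\wedge\rho(e_j)$ for holomorphic functions $f_{ij}$ on $T_N$. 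The $T_N$-invariance of $\pi$, together with the $T_N$-invariance of each $\rho(e_i)\wedge\rho(e_j)$, forces each $f_{ij}$ to be a $T_N$-invariant holomorphic function on $T_N$, hence constant. Writing $c_{ij}$ for these constants and setting $\eta = \sum_{i<j} c_{ij}\, e_i \wedge e_j \in \wedge^2 N_\C$, we obtain $\pi = \rho(\eta)$ on the open orbit $T_N$.

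It remains to argue that the identity $\pi = \rho(\eta)$ extends from the dense open subset $T_N$ to all of $X_\Delta$. Here I would invoke that $T_N$ is Zariski-dense in $X_\Delta$ and that both $\pi$ and $\rho(\eta)$ are globally defined holomorphic bivector fields (sections of the coherent sheaf $\wedge^2 T_{X_\Delta}$); two holomorphic sections agreeing on a dense open set agree everywhere by the identity theorem, so $\pi = \rho(\eta)$ on $X_\Delta$. The one point requiring care is that $\rho(\eta)$, which is a priori only a rational/meromorphic bivector field produced from the infinitesimal torus action, is in fact everywhere holomorphic on $X_\Delta$; but this is exactly the content of the statement (recorded after \eqref{rho-def-eqn}) that the image of $\rho$ consists of \emph{holomorphic} multivector fields on the nonsingular toric variety $X_\Delta$, so no extra work is needed. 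Combining the two inclusions yields the claimed equality.
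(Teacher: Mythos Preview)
Your proposal is correct and follows essentially the same approach as the paper: for the forward direction you use abelianness of $T_N$ to obtain $[\pi,\pi]=0$ and $T_N$-invariance, and for the converse you restrict to the dense torus, use invariance to force constant coefficients in the frame $\rho(e_1),\dots,\rho(e_n)$, and then extend by density. The paper's argument (given for the equivalent reformulation immediately following the proposition) is identical in substance, though stated more tersely.
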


Proposition \ref{ToricPoissonPro} can be state in an equivalent way:

\begin{proposition}
Let $X_\Delta$ be a nonsingular toric variety associated with a fan $\Delta$ in $N_\R$. 
Suppose that $\{e_1, e_2,\ldots, e_n\}$ is a basis of $N\subset N_\C$. 
Let $v_i=\rho(e_i)$ $(1\leq i\leq n)$  be holomorphic vector fields on $X_\Delta$, 
where $\rho:N_\C\rightarrow \mathfrak{X}(X_\Delta)$ is defined in  \eqref{rho-def-eqn}.
Then $\pi$ is a holomorphic toric Poisson structure 
on $X_\Delta$ if and only if $\pi$ can be written as 
$$\pi=\sum_{1\leq i<j\leq n}a_{ij} v_i\wedge v_j ,$$ 
where $a_{ij}$ $(1\leq i<j\leq n)$ are complex constants.
\end{proposition}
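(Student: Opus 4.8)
The plan is to deduce this statement directly from Proposition~\ref{ToricPoissonPro}, so that the entire content reduces to a linear-algebra computation in $\wedge^2 N_\C$ performed in the chosen basis. Since $\{e_1,\ldots,e_n\}$ is a $\Z$-basis of $N$, it is in particular a $\C$-basis of $N_\C=N\otimes_\Z\C$, and therefore $\{\,e_i\wedge e_j\mid 1\leq i<j\leq n\,\}$ is a $\C$-basis of $\wedge^2 N_\C$. Hence every $\xi\in\wedge^2 N_\C$ has a unique expansion $\xi=\sum_{1\leq i<j\leq n}a_{ij}\,e_i\wedge e_j$ with $a_{ij}\in\C$.

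First I would record how $\rho$ acts on such an element. By definition the induced map $\rho\colon\wedge^2 N_\C\to\mathfrak{X}^2(X_\Delta)$ of \eqref{rho-def-eqn2} is the second exterior power of the linear map $\rho\colon N_\C\to\mathfrak{X}(X_\Delta)$ of \eqref{rho-def-eqn}; in particular it is $\C$-linear and sends each decomposable generator to $\rho(e_i\wedge e_j)=\rho(e_i)\wedge\rho(e_j)=v_i\wedge v_j$. Combining this with the expansion above gives $\rho(\xi)=\sum_{1\leq i<j\leq n}a_{ij}\,v_i\wedge v_j$. Consequently the image $\rho(\wedge^2 N_\C)$ coincides exactly with the $\C$-linear span of $\{\,v_i\wedge v_j\mid 1\leq i<j\leq n\,\}$, that is, with the set of bivector fields of the asserted form.

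With this identification the two directions are immediate. If $\pi$ is a holomorphic toric Poisson structure on $X_\Delta$, then Proposition~\ref{ToricPoissonPro} gives $\pi\in\rho(\wedge^2 N_\C)$, so writing $\pi=\rho(\xi)$ and expanding $\xi$ in the basis yields $\pi=\sum_{i<j}a_{ij}\,v_i\wedge v_j$. Conversely, any $\pi=\sum_{i<j}a_{ij}\,v_i\wedge v_j$ equals $\rho(\xi)$ for $\xi=\sum_{i<j}a_{ij}\,e_i\wedge e_j\in\wedge^2 N_\C$, hence lies in $\rho(\wedge^2 N_\C)$ and is therefore a holomorphic toric Poisson structure, again by Proposition~\ref{ToricPoissonPro}.

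It is worth noting that no genuine obstacle arises at this level: the reformulation is purely a matter of rewriting the image $\rho(\wedge^2 N_\C)$ in coordinates, and one does not even need $\rho$ to be injective, only its linearity together with the fact that the $e_i\wedge e_j$ span $\wedge^2 N_\C$. The substantive work --- namely that torus-invariance of a holomorphic bivector field forces it to be a \emph{constant}-coefficient combination of the fundamental vector fields, and that every such combination satisfies $[\pi,\pi]=0$ by commutativity of the abelian Lie algebra $N_\C$ --- is precisely the content of Proposition~\ref{ToricPoissonPro}, which the present statement merely restates in basis form.
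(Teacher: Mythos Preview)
Your reduction to Proposition~\ref{ToricPoissonPro} is circular in the context of this paper. Proposition~\ref{ToricPoissonPro} is stated there without an independent proof; the paper immediately announces that it ``can be stated in an equivalent way'' as the present proposition, and then proves the latter directly. In other words, the proof you are asked to supply \emph{is} the paper's proof of Proposition~\ref{ToricPoissonPro}. By deferring all the content to that proposition you have written down the (correct and easy) observation that the two statements are equivalent, but you have not actually proved either one.

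The paper's argument is self-contained. For the backward direction it uses $[v_i,v_j]=0$ (commutativity of $T_N$) to get $[\pi,\pi]=0$, and notes that each $v_i$ is holomorphic and $T_N$-invariant. For the forward direction it restricts a given holomorphic toric Poisson structure $\pi$ to the open dense torus $T_N\subset X_\Delta$; on $T_N$ every $T_N$-invariant holomorphic bivector field is a constant-coefficient combination $\sum a_{ij}\,\tilde v_i\wedge\tilde v_j$, and density of $T_N$ then forces $\pi=\sum a_{ij}\,v_i\wedge v_j$ on all of $X_\Delta$. You correctly identified this substantive content in your final paragraph --- it simply needs to appear as the proof itself rather than be attributed to a proposition that has not yet been established.
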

\begin{proof}
\begin{itemize}
\item $\Leftarrow$:  Suppose that $$\pi=\sum_{1\leq i<j\leq n}a_{ij} v_i\wedge v_j,$$ 
where $a_{ij}$ $(1\leq i<j\leq n)$ are complex constants.
As $T_N$ is abelian, we have that $[v_i,v_j]=0$ for all $1\leq i<j\leq n$, 
which imply $[\pi,\pi]=0$. 
Obviously, $\pi$ is holomorphic and $T_N$-invariant. 
Thus $\pi=\sum_{1\leq i<j\leq n}a_{ij} v_i\wedge v_j$ is a holomorphic toric Poisson structure on $X_\Delta$.

\item $\Rightarrow$:  Suppose that $\pi$ is a holomorphic toric Poisson structure on $X_\Delta$. Then the restriction of $\pi$ on $T_N\subset X_\Delta$ is a holomorphic toric Poisson structure on $T_N$. We denoted it by $\tilde{\pi}$.
Any $T_N$-invariant holomorphic bi-vector field on $T_N\subset X_\Delta$ can be written as 
$$\sum_{1\leq i<j\leq n}a_{ij}\tilde{v_i}\wedge \tilde{v_j},$$
where $a_{ij}$ $(1\leq i<j\leq n)$ are complex constants, and $\tilde{v_i}$ $(1\leq i\leq n)$ are the restriction of
the vector fields $v_i$ $(1\leq i\leq n)$ on $T_N$.
Hence $\tilde{\pi}$ can be written as 
$$\tilde{\pi}=\sum_{1\leq i<j\leq n}a_{ij}\tilde{v_i} \wedge \tilde{v_j}.$$
As $T_N$ is a dense open set of $X_\Delta$, we have 
$$\pi=\sum_{1\leq i<j\leq n}a_{ij} v_i\wedge v_j.$$
\end{itemize}
\end{proof}

\begin{example}\label{Cn-exa2}
Let $X=\C^n$ and $(z_1,\ldots,z_n)$ be the standard coordinates on it. 
As we have shown in Example \ref{Cn-exa1}, $X=\C^n$ is a toric variety. 
Let $v_i=\rho(e_i)$ $(1\leq i\leq n)$.
Then we have $$v_i=z_i\frac{\partial}{\partial z_i}.$$
Any holomorphic toric Poisson structures on $X=\C^n$ can be written as
$$\pi=\sum_{1\leq i<j\leq n}a_{ij}v_i \wedge v_j,$$
where $a_{ij}$ $(1\leq i<j\leq n$ are complex constants. 
\end{example}

\begin{example}\label{CPn-exa2}
Let $X=\CP^n$ and let $[z_0,z_1,\ldots,z_n]$ be homogenous coordinates on it. 
As we have shown in Example \ref{CPn-exa1}, $X=\CP^n$ is a toric variety. 
Let $\mathcal{P}=\C^{n+1}\backslash\{0\}=\{(z_0,z_1,\ldots,z_n)\mid z_0,z_1,\ldots,z_n ~\text{are not all zeros}\} $, 
and let $p: \mathcal{P}=\C^{n+1}\backslash\{0\} \rightarrow\CP^n$ be the canonical projection. 
Then $v_i=p_{*}(z_i\frac{\partial}{\partial z_i})$ ($0\leq i\leq n$) are holomorphic vector fields on $X$, and $\sum_{i=0}^{n}v_i=0$. 
Moreover, by Equation \eqref{action-chi-eqn} and Equation \eqref{CPn-chi-eqn}, we have
$$v_i=\rho(e_i)\quad \text{for}\quad i=0,1,...,n.$$
Thus any holomorphic toric Poisson structures on $X$ can be written as
$$\pi=\sum_{1\leq i<j\leq n}a_{ij}v_i \wedge v_j,$$
where $a_{ij}$ $(1\leq i<j\leq n$ are complex constants. 
\end{example}

Given a holomorphic toric Poisson manifold $(X,\pi)$, as the torus action on $X$ is holomorphic and $\pi$ is invariant under the torus action, we have the following lemma.
\begin{lemma}\label{toric-PoCoh-lem}
For a holomorphic toric Poisson manifold $(X,\pi)$, the torus action on $X$ induces a torus action on the Poisson cohomology groups $H^\bullet_{\pi}(X)$.
\end{lemma}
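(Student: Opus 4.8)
The plan is to realize each torus element as a chain automorphism of a complex computing $H^\bullet_\pi(X)$, and then to check that these automorphisms assemble into a group action. I will work with the Dolbeault double complex of the preceding lemma, since it computes $H^\bullet_\pi(X)$ with no vanishing hypothesis and makes the torus action transparent; the same argument applies verbatim to the complex $(H^0(X,\wedge^\bullet T_X),d_{\pi})$ whenever Lemma \ref{LSX-lem} is available.

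Fix $t\in T_N$ and let $\phi_t\colon X\to X$ be the biholomorphism given by the action of $t$. Since $\phi_t$ is holomorphic, its differential preserves the splitting $T_X\otimes\C=T^{1,0}X\oplus T^{0,1}X$, so pushforward by $\phi_t$ carries $\Omega^{0,q}(X,\wedge^p T^{1,0}X)$ to itself and hence defines a map of the double complex preserving each bidegree $(p,q)$. First I would record that this map commutes with $\bar{\partial}$: this is automatic, because pushforward by a biholomorphism intertwines the Dolbeault operators on $\wedge^p T^{1,0}X$-valued forms.

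The substantive point is commutation with $d_{\pi}=[\pi,\cdot\,]$. By naturality of the Schouten bracket under diffeomorphisms, $(\phi_t)_*[\pi,\xi]=[(\phi_t)_*\pi,(\phi_t)_*\xi]$; and because $\pi$ is invariant under the torus action, $(\phi_t)_*\pi=\pi$. Combining these gives $(\phi_t)_*\circ d_{\pi}=d_{\pi}\circ(\phi_t)_*$, so $(\phi_t)_*$ is an automorphism of the entire double complex, in particular a chain automorphism of its total complex, and therefore descends to an automorphism of the total cohomology $H^\bullet_\pi(X)$.

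It remains to verify the group axioms. Since $t\mapsto\phi_t$ is an action and pushforward is functorial, $(\phi_{ts})_*=(\phi_t)_*(\phi_s)_*$ and $(\phi_e)_*=\mathrm{id}$, so passing to cohomology yields a genuine representation of $T_N\cong(\C^*)^n$ on $H^\bullet_\pi(X)$. The step demanding the most care is the commutation with $d_{\pi}$ at the level of Dolbeault-form-valued multivector fields rather than global holomorphic sections: one must check that the naturality of the Schouten bracket interacts correctly with the antiholomorphic form factor, after which the remaining bookkeeping (bidegree preservation and functoriality) is routine.
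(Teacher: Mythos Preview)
Your argument is correct and follows exactly the reasoning the paper has in mind: the paper does not give a formal proof but simply remarks, just before stating the lemma, that the result follows because the torus action is holomorphic and $\pi$ is torus-invariant. You have spelled out precisely these two ingredients (commutation with $\bar\partial$ from holomorphicity, commutation with $d_\pi$ from Schouten naturality plus invariance of $\pi$) at the level of the Dolbeault double complex, which is more detail than the paper provides but not a different approach.
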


\subsection{The standard Poisson structure on $\CP^n$}
In \cite{B-G-Y 06, G-Y 09}, Brown, Goodear and Yakimov studied
the geometry of the standard Poisson structures on affine spaces and flag varieties.
Let us review the definition of the standard Poisson structure on flag varieties.

Let $G$ be a connected complex reductive algebraic group with maximal torus $H$. 
Denote the corresponding Lie algebra by $\g$ and $\mathfrak{h}$. Denote 
$\Delta_{+}$ $(\Delta_{-})$
the set of all positive (negative) roots of $\g$ with respect to $\mathfrak{h}$.

The standard $r$-matrix of $\g$ is given by
\begin{equation}
r_\g=\sum_{\alpha\in\Delta_{+}} e_\alpha\wedge e_{-\alpha},
\end{equation}
where $e_\alpha$ and $e_{-\alpha}$ are root vectors of $\alpha$ and $-\alpha$, normalized by $\langle e_\alpha,f_\alpha\rangle=1$.
The standard Poisson structure on $G$ is given by 
$$\pi_G=L(r_\g)-R(r_\g),$$
where $L(r_\g)$ and $R(r_\g)$ refer to the left and right invariant bi-vector fields on $G$ associated to $r_\g\in\wedge^2\g\cong\wedge^2 T_e G$.

For a parabolic group $P$ containing $H$, $X=G/P$ is a flag variety.
The action of $G$ on $X=G/P$ induces a map $\mu:\g\rightarrow\mathfrak{X}(X)$. By abuse of notations, the induced maps 
$\wedge^k\g\rightarrow\mathfrak{X}^k(X)$ are also denoted by $\mu$.
The natural projection $$\phi: G\rightarrow X=G/P$$ 
induces the following Poisson structure on the flag variety $X=G/P$:
\begin{equation}
\pi_{st}=\phi_*(\pi_G)=\mu(r_\g),
\end{equation}
called \emph{the standard Poisson structure} on the flag varieties. The standard Poisson structure $\pi_{st}$ is a holomorphic Poisson structure on the flag variety $G/P$.

Next we will focus on the standard Poisson structure on $\CP^n$.

Let $G=GL(n+1,\C)$. Let $H$ be consisting of the diagonal matrices in 
$GL(n+1,\C)$ and $P$ consisting of matrices of the following form
$\begin{pmatrix}
\lambda &  b   \\
0 & D
\end{pmatrix}$, where $\lambda\in\C^*, b\in\C^n, D\in GL(n,\C)$.
Then $X=G/P$ becomes the projective space $\CP^n$.

The left action of $GL(n+1,\C)$ on $X=\CP^n$ can be written as:
$$A\cdot [z_0,z_1,...,z_n]\mapsto  p((z_0,z_1,...,z_n)A^t ), $$
where $A\in GL(n+1,\C)$, $ [z_0,z_1,...,z_n]\in\CP^n$, $ (z_0,z_1,...,z_n)\in\C^{n+1}$,
and $p$ is the canonical projection
$$ \C^{n+1}\backslash\{0\}\xrightarrow{p}\CP^n: 
(z_0,z_1,...,z_n)\rightarrow [z_0,z_1,...,z_n].$$

The standard $r$-matrix of $\g=gl(n+1,\C)$ can be written as
\begin{equation}
r_\g=\sum_{0\leq i<j\leq n}e_{ij}\wedge e_{ji},
\end{equation}
where $e_{ij}$ denotes the matrix having $1$ in the $(i+1,j+1)$ position and $0$ elsewhere.

Now we are ready to compute the standard Poisson structure on $\CP^n$.
\begin{proposition}\label{STPoisson-prop}
Let $X=\CP^n=GL(n+1,\C)/P$. Let $v_i=p_*(z_i\frac{\partial}{\partial z_i})$ $(i=0,1,...,n)$.
Then the standard Poisson structure on $X=\CP^n$ can be written as 
\begin{equation}
\pi_{st}=\sum_{1\leq i<j\leq n}v_i\wedge v_j.
\end{equation}
\end{proposition}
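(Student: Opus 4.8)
The plan is to compute $\pi_{st}=\mu(r_\g)=\sum_{0\le i<j\le n}\mu(e_{ij})\wedge\mu(e_{ji})$ explicitly, by first working out the infinitesimal action of the elementary matrices $e_{ij}$ upstairs on $\C^{n+1}$ and then pushing everything down to an affine chart of $\CP^n$. Since the action of $GL(n+1,\C)$ is linear, $A\cdot z=Az$ on column vectors, the fundamental vector field attached to $e_{ij}$ is $z_j\frac{\partial}{\partial z_i}$ up to an overall sign fixed by the chosen convention for $\mu$. That sign is harmless: in each summand $\mu(e_{ij})\wedge\mu(e_{ji})$ it occurs squared, so I may simply take $\mu(e_{ij})=p_*(z_j\frac{\partial}{\partial z_i})$. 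These linear fields commute with scalar multiplication, hence are $p$-projectable and genuinely descend to $\CP^n$.

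Next I would pass to the chart $U_0=\{z_0\neq0\}$ with coordinates $t_i=\frac{z_i}{z_0}$ and record the pushforward formulas obtained by applying each field to the $t_k$: namely $p_*(z_i\frac{\partial}{\partial z_i})=t_i\frac{\partial}{\partial t_i}=v_i$ for $i\ge1$; $\;p_*(z_0\frac{\partial}{\partial z_j})=\frac{\partial}{\partial t_j}$; $\;p_*(z_j\frac{\partial}{\partial z_0})=-t_j\sum_{k=1}^n t_k\frac{\partial}{\partial t_k}$; and $p_*(z_j\frac{\partial}{\partial z_i})=t_j\frac{\partial}{\partial t_i}$ for distinct $i,j\ge1$. (As a consistency check this also yields $v_0=-\sum_{k\ge1}v_k$, matching the relation $\sum_{i=0}^n v_i=0$ noted in Example \ref{CPn-exa2}.)

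Then I would split the sum defining $\mu(r_\g)$ into the pairs with $i=0$ and the pairs with $1\le i<j\le n$. Using the second and third formulas, the $i=0$ block is $\sum_{j=1}^n p_*(z_j\frac{\partial}{\partial z_0})\wedge p_*(z_0\frac{\partial}{\partial z_j})=-\sum_{j,k}t_jt_k\,\frac{\partial}{\partial t_k}\wedge\frac{\partial}{\partial t_j}$; relabeling $j\leftrightarrow k$ shows this bivector equals its own negative, so it vanishes. For the remaining pairs the first and fourth formulas give $\sum_{1\le i<j\le n}\bigl(t_j\frac{\partial}{\partial t_i}\bigr)\wedge\bigl(t_i\frac{\partial}{\partial t_j}\bigr)=\sum_{1\le i<j\le n}t_it_j\,\frac{\partial}{\partial t_i}\wedge\frac{\partial}{\partial t_j}$, which is exactly $\sum_{1\le i<j\le n}v_i\wedge v_j$. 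Hence $\pi_{st}$ and $\sum_{1\le i<j\le n}v_i\wedge v_j$ agree on the dense open set $U_0$, and since both are holomorphic bivector fields they coincide on all of $\CP^n$.

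I expect the vanishing of the $e_{0j}\wedge e_{j0}$ block to be the one delicate point rather than a serious obstacle: it is precisely this cancellation that collapses the full antisymmetric $r$-matrix to the clean diagonal expression, while everything else is routine bookkeeping in the chart $U_0$.
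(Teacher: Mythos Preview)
Your argument is correct; the chart computations are right and the antisymmetry cancellation of the $i=0$ block is valid. The approach differs from the paper's only in execution: the paper never passes to the affine chart $U_0$ and never separates the $i=0$ terms. Instead it observes the pointwise bivector identity
\[
p_*\!\Bigl(z_j\frac{\partial}{\partial z_i}\Bigr)\wedge p_*\!\Bigl(z_i\frac{\partial}{\partial z_j}\Bigr)
= p_*\!\Bigl(z_i\frac{\partial}{\partial z_i}\Bigr)\wedge p_*\!\Bigl(z_j\frac{\partial}{\partial z_j}\Bigr)
= v_i\wedge v_j
\]
uniformly for all $0\le i<j\le n$, so that $\mu(r_\g)=\sum_{0\le i<j\le n} v_i\wedge v_j$ immediately; only at the very end is the relation $\sum_{i=0}^n v_i=0$ invoked to rewrite this as $\sum_{1\le i<j\le n} v_i\wedge v_j$. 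Thus what you flagged as the one delicate point---the vanishing of the $e_{0j}\wedge e_{j0}$ block---simply does not arise in the paper's argument: the same wedge identity you used for $i,j\ge 1$ already handles all pairs at once, and the $i=0$ contribution disappears afterwards via the Euler relation rather than by a separate symmetry cancellation. Your route trades that uniformity for explicit coordinate formulas, which has the virtue of producing the concrete expression $\sum t_it_j\,\partial_{t_i}\wedge\partial_{t_j}$ on $U_0$.
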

\begin{proof}
By computation, we have 
$$\mu(e_{ij})=p_*(z_{j}\frac{\partial}{\partial z_{i}}),\quad\text{and}\quad \mu(e_{ji})=p_*(z_{i}\frac{\partial}{\partial z_{j}})$$
for all $0\leq i<j\leq n$.

Hence the standard Poisson structure on $X=\CP^n$ can be written as 
\begin{align*}
\pi_{st}=\mu(r_\g)&=\sum_{0\leq i<j\leq n}\mu(e_{ij})\wedge \mu(e_{ji})\\
&=\sum_{0\leq i<j\leq n}p_*(z_j\frac{\partial}{\partial z_i})\wedge p_*(z_i\frac{\partial}{\partial z_j})\\
&=\sum_{0\leq i<j\leq n}p_*(z_i\frac{\partial}{\partial z_i})\wedge p_*(z_j\frac{\partial}{\partial z_j})\\
&=\sum_{0\leq i<j\leq n}v_i\wedge v_j\\
&=\sum_{1\leq i<j\leq n}v_i\wedge v_j.
\end{align*}
The last step holds as $\sum_{i=0}^n v_i=0$.
\end{proof}

\subsection{Exact sequences related to $\CP^n$}\label{Section-Euler}

\begin{theorem}\cite{Atiyah}
Let $\mathcal{P}$ be a principle bundle over $X$ with group $G$. Then there exists an exact sequence of vector bundles over $X$:
\begin{equation}\label{AtiyahSeq}
0\rightarrow \mathcal{P}\times_{G}\mathfrak{g}\rightarrow T\mathcal{P}/G\rightarrow TX\rightarrow 0,
\end{equation}
where $\mathcal{P}\times_{G}\mathfrak{g}$ is the bundle associated to $\mathcal{P}$ by the adjoint representation of $G$ on $\mathfrak{g}=Lie(G)$, 
and $T\mathcal{P}/G$ is the bundle of invariant vector fields on $\mathcal{P}$.
\end{theorem}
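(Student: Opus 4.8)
The plan is to realize the sequence \eqref{AtiyahSeq} as the $G$-quotient of a tautological short exact sequence of $G$-equivariant vector bundles on the total space $\mathcal{P}$, and then to descend everything to the base $X=\mathcal{P}/G$. Throughout, $G$ acts on $\mathcal{P}$ freely and properly on the right, $\pi:\mathcal{P}\to X$ denotes the projection, and the two structural ingredients are the differential $d\pi$ and the action of $G$ on $T\mathcal{P}$ induced by differentiating the action on $\mathcal{P}$.

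First I would write down on $\mathcal{P}$ the exact sequence
\[
0\to V\mathcal{P}\to T\mathcal{P}\xrightarrow{d\pi}\pi^{*}TX\to 0,
\]
where $V\mathcal{P}=\ker(d\pi)$ is the vertical subbundle. Surjectivity of $d\pi$ holds because $\pi$ is a submersion, and exactness in the middle is the definition of $V\mathcal{P}$. Each term carries the $G$-action induced by the right action on $\mathcal{P}$, and since $\pi$ is $G$-invariant both maps are $G$-equivariant.

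Next I would identify the vertical bundle. Because the fiber of $\pi$ through $p$ is the orbit $p\cdot G$ and the action is free, the fundamental-vector-field map $\mathcal{P}\times\mathfrak{g}\to V\mathcal{P}$ sending $(p,\xi)$ to $\frac{d}{dt}\big|_{t=0}\,p\cdot\exp(t\xi)$ is a fiberwise linear isomorphism. The key computation is the equivariance identity $(R_g)_{*}\,\xi_{\mathcal{P}}=(\mathrm{Ad}_{g^{-1}}\xi)_{\mathcal{P}}$ for fundamental vector fields, which shows that this isomorphism intertwines the lifted $G$-action on $V\mathcal{P}$ with the action on $\mathcal{P}\times\mathfrak{g}$ given by right multiplication on the first factor and the adjoint representation on the second. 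Passing to the quotient by $G$, each term descends to a vector bundle over $X$: the quotient $T\mathcal{P}/G$ is by definition the bundle of $G$-invariant vector fields on $\mathcal{P}$; the quotient of $\pi^{*}TX$ is canonically $TX$; and, precisely because the $\mathfrak{g}$-factor transforms by $\mathrm{Ad}$, the quotient of $\mathcal{P}\times\mathfrak{g}$ is the associated bundle $\mathcal{P}\times_{G}\mathfrak{g}$. Since taking $G$-quotients along a free proper action is exact on $G$-equivariant vector bundles, the descended sequence is exactly \eqref{AtiyahSeq}.

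The hard part will be checking that the quotients are genuinely locally trivial vector bundles over $X$ and that exactness survives the descent; I would handle this by working over a local trivialization $\pi^{-1}(U)\cong U\times G$, in which the lifted sequence splits as the product of the sequence on $U$ with the left-invariant data on $G$, making both the quotient and its exactness transparent. The only genuinely computational point is the equivariance identity for fundamental vector fields, which is what forces the adjoint (rather than trivial) action on $\mathfrak{g}$ and hence pins down the associated-bundle description of the kernel.
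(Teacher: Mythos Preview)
Your argument is correct and is essentially the standard construction of the Atiyah sequence: form the $G$-equivariant short exact sequence $0\to V\mathcal{P}\to T\mathcal{P}\to\pi^{*}TX\to 0$ on the total space, identify the vertical bundle with $\mathcal{P}\times\mathfrak{g}$ via fundamental vector fields (with the adjoint twist coming from the equivariance identity $(R_g)_*\xi_{\mathcal{P}}=(\mathrm{Ad}_{g^{-1}}\xi)_{\mathcal{P}}$), and descend by the free $G$-action.

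There is nothing to compare against in the paper itself: the theorem is quoted from Atiyah \cite{Atiyah} and stated without proof, serving only as background to derive the Euler sequence for $\CP^n$. Your proof is the classical one and would be an appropriate justification; in fact it is close to Atiyah's original argument. The only remark is that in the paper's applications $G=\C^*$ is abelian, so the adjoint action is trivial and $\mathcal{P}\times_G\mathfrak{g}$ reduces to the trivial line bundle, which is why your careful tracking of the $\mathrm{Ad}$-equivariance, while correct in general, plays no role downstream.
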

Recall that for a principle $G$-bundle $\mathcal{P}$ over $X$, and a representation of $G$ on a vector space $V$,  the associated vector bundle over $X$ is defined to be $\mathcal{P}\times_{G}V=(\mathcal{P}\times V)/\sim$, where $(x.g, v)\sim((x, g.v)$,  $\forall x\in \mathcal{P}$, $g\in G, v\in V$. 

Let $\mathcal{\mathcal{P}}=\C^{n+1}\backslash\{0\}$, $X=\CP^n$, and let 
$p: \mathcal{P}=\C^{n+1}\backslash\{0\}\rightarrow\CP^n$ be the canonical projection. 
The group $G=\C^*$ operates by right multiplication on $\mathcal{P}=\C^{n+1}\backslash\{0\}$:
$$ \lambda: v\rightarrow v\lambda, \quad v\in\C^{n+1}\backslash\{0\}, \quad \lambda\in\C^*.$$
Then $\mathcal{P}$ is a principle $\C^*$-bundle over $X$.

As $G=\C^*$ is abelian, the adjoint representation is trivial, we have $\mathcal{P}\times_{G}\mathfrak{g}\cong X\times\C$. 

The $\C^*$-action on $T\mathcal{P}\cong \C^{n+1}\backslash\{0\}\times\C^{n+1}$ is given by:
$$ (x\times v)\lambda=x\lambda\times v\lambda, \qquad x\in \C^{n+1}\backslash\{0\}, v\in\C^{n+1}, \lambda\in\C^*.$$
Hence we have 
$T\mathcal{P}/G\cong \mathcal{P}\times_{\C^*}\C^{n+1}\cong O(1)^{\oplus(n+1)}$, 
where $\mathcal{P}\times_{\C^*}\C^{n+1}$ is the associated bundle of $\mathcal{P}$ 
by the $\C^*$ representation on $\C^{n+1}$ given by:
$$\rho(\lambda)v=\lambda^{-1}v,\qquad v\in\C^{n+1},\lambda\in\C^*.$$ 
In this case, the Atiyah exact sequence \eqref{AtiyahSeq} becomes
\begin{equation}\label{EulerSeq}
0\rightarrow\C\rightarrow O(1)^{\oplus(n+1)}\rightarrow T\CP^{n}\rightarrow 0,
\end{equation}
which is exactly the Euler exact sequence of $\CP^n$.

Suppose that $(z_0, z_1,\ldots, z_n)$ are the standard coordinates on $\C^{n+1}$. 
Then the canonical projection $p: \C^{n+1}\backslash\{0\}\rightarrow\CP^n$ becomes
$(z_0, z_1,\ldots, z_n)\rightarrow [z_0, z_1,\ldots, z_n]$.
The map $p: \mathcal{P}\rightarrow\CP^n$ induces a map $T\mathcal{P}/G\xrightarrow{p_*} T\CP^n$. 
By identification of $T\mathcal{P}/G$ with $O(1)^{\oplus(n+1)}$, and by abuse of notation, we also denote $p_*$ by the map $O(1)^{\oplus(n+1)}\rightarrow T\CP^{n}$ in the Euler sequence \eqref{EulerSeq}. 
Let us denote $E$ by the vector bundle $O(1)^{\oplus(n+1)}$, denote $L$ by 
the kernel of the map $O(1)^{\oplus(n+1)}\xrightarrow{p_*} T\CP^{n}$.
Then $L=\C\overrightarrow{e}$ is a subbundle of $E$, where 
$\overrightarrow{e}=\sum_{i=0}^{n}z_i \frac{\partial}{\partial z_i}$ is  the Euler vector field. 
The Euler exact sequence \eqref{EulerSeq} can then be written as 
\begin{equation}\label{EulerSeq2}
0\rightarrow L \hookrightarrow E\xrightarrow{p_*} TX\rightarrow 0.
\end{equation}
Let $L\wedge(\wedge^j E)=\C\overrightarrow{e}\wedge(\wedge^{j}E)$ $(0\leq j\leq n-1)$, that is a subbundle of $\wedge^{j+1}E$. Then we have the following lemma, 
which is known to Bondal \cite{Bondal}.
\begin{lemma}\label{EulerSeq-lem}
For $X=\CP^n$. we have the exact sequences
\begin{equation}\label{EulerSeq-lem-eqn1}
0\rightarrow L\wedge(\wedge^{j-1} E)\hookrightarrow \wedge^{j}E\xrightarrow{\overrightarrow{e}\wedge\cdot} L\wedge(\wedge^j E)\rightarrow 0
\end{equation}
and 
\begin{equation}\label{EulerSeq-lem-eqn2}
0\rightarrow L\wedge(\wedge^{j-1} E)\hookrightarrow\wedge^{j}E\xrightarrow{p_*} \wedge^j TX\rightarrow 0
\end{equation}
for all $j\geq 1$,  where
\begin{itemize} 
\item [(I)] $L\wedge(\wedge^{j-1} E)\hookrightarrow\wedge^{j}E$ is the embedding of 
$L\wedge(\wedge^{j-1} E)$ as a subbundle of $\wedge^{j}E$; 
\item [(II)] $\wedge^{j}E\xrightarrow{\overrightarrow{e}\wedge\cdot} L\wedge(\wedge^j E)$ is defined by the wedge of $\overrightarrow{e}$ with elements in $\wedge^{j}E$;
\item [(II)] $\wedge^{j}E\xrightarrow{p_*} \wedge^j TX$ is induced by the map 
$E\xrightarrow{p_*} TX$ in Equation \eqref{EulerSeq2}.
\end{itemize}
\end{lemma}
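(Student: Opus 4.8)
The plan is to derive both sequences from the single short exact sequence \eqref{EulerSeq2}, $0\to L\hookrightarrow E\xrightarrow{p_*}TX\to 0$, by exploiting that $L=\C\overrightarrow{e}$ is a line subbundle trivialized by the nowhere-vanishing section $\overrightarrow{e}$ of $E$. The first thing I would establish is a local normal form: since $\overrightarrow{e}$ is nowhere zero, around each point I can choose a local frame $f_1,\dots,f_{n+1}$ of $E$ with $f_1=\overrightarrow{e}$ and with $p_*f_2,\dots,p_*f_{n+1}$ a local frame of $TX$ (possible because $p_*$ restricts to an isomorphism of a complement of $L$ onto $TX$). Relative to such a frame every computation below reduces to elementary multilinear algebra, and the two sequences share the same left-hand term $L\wedge(\wedge^{j-1}E)$, whose inclusion into $\wedge^j E$ is by definition a subbundle inclusion, so injectivity at the left is automatic in both cases.

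With a frame fixed, I would prove the central fact from which everything follows: an element $\omega\in\wedge^j E$ lies in $L\wedge(\wedge^{j-1}E)$ (is ``divisible by $\overrightarrow{e}$'') if and only if $\overrightarrow{e}\wedge\omega=0$, if and only if $(\wedge^j p_*)\omega=0$. Writing $\omega=\sum_I a_I f_I$ in the induced frame of $\wedge^j E$, one has $\overrightarrow{e}\wedge\omega=\sum_{1\notin I}a_I\,\overrightarrow{e}\wedge f_I$ and $(\wedge^j p_*)\omega=\sum_{1\notin I}a_I\,(\wedge^j p_*)f_I$; since the families $\{\overrightarrow{e}\wedge f_I\}_{1\notin I}$ and $\{(\wedge^j p_*)f_I\}_{1\notin I}$ are each linearly independent, both expressions vanish exactly when $a_I=0$ for every $I$ with $1\notin I$, i.e. when $\omega$ involves only multi-indices containing $1$. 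This simultaneously identifies
$$\ker\big(\overrightarrow{e}\wedge\cdot\colon \wedge^j E\to\wedge^{j+1}E\big)=\ker\big(p_*\colon \wedge^j E\to\wedge^j TX\big)=L\wedge(\wedge^{j-1}E).$$

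I would then assemble the two sequences. For \eqref{EulerSeq-lem-eqn1}, the map $\overrightarrow{e}\wedge\cdot$ is surjective onto $L\wedge(\wedge^j E)$ by the very definition of the target, and its kernel is $L\wedge(\wedge^{j-1}E)$ by the fact just proved, giving exactness. For \eqref{EulerSeq-lem-eqn2}, surjectivity of $p_*\colon E\to TX$ passes to $\wedge^j p_*\colon \wedge^j E\to\wedge^j TX$ (a surjection of bundles remains one after applying $\wedge^j$, as one sees directly in the frame), and the kernel is again $L\wedge(\wedge^{j-1}E)$. A rank check confirms consistency: the divisible-by-$\overrightarrow{e}$ subbundle $L\wedge(\wedge^{j-1}E)$ has rank $\binom{n}{j-1}$, and $\binom{n}{j-1}+\binom{n}{j}=\binom{n+1}{j}=\operatorname{rank}\wedge^j E$ matches both $\operatorname{rank}\wedge^j TX=\binom{n}{j}$ in the second sequence and $\operatorname{rank} L\wedge(\wedge^j E)=\binom{n}{j}$ in the first.

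The one point requiring care — and the step I expect to be the main obstacle — is the globalization: I must confirm that $L\wedge(\wedge^{j-1}E)$ is a genuine subbundle, locally free of constant rank, rather than merely a fiberwise subspace, and that the displayed kernels and images are subbundles, so the sequences are exact as sequences of vector bundles and not only fiberwise. This follows because each relevant map ($\overrightarrow{e}\wedge\cdot$ and $\wedge^j p_*$) has the same matrix at every point up to the chosen change of frame, hence constant rank, so its kernel and image are of constant rank and therefore subbundles; frame-independence of $L\wedge(\wedge^{j-1}E)$ is then immediate from its intrinsic description as $\ker(\wedge^j p_*)$. Everything else is routine multilinear algebra carried out fiber by fiber.
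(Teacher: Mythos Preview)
Your proof is correct and follows essentially the same approach as the paper: both arguments reduce to the fiberwise linear-algebra fact that for a nonzero vector $\overrightarrow{e}_x$, an element of $\wedge^j E|_x$ is annihilated by $\overrightarrow{e}_x\wedge\cdot$ (equivalently, by $\wedge^j p_*$) exactly when it lies in $\overrightarrow{e}_x\wedge(\wedge^{j-1}E|_x)$. Your version is simply more explicit---working in a chosen local frame, unifying the two kernel computations into one, and adding the constant-rank/globalization discussion and the rank check---whereas the paper states the pointwise facts tersely and leaves these routine verifications implicit.
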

\begin{proof}

At any point $x\in X$, for any $\alpha_x\in\wedge^j E\mid_x$, 
we have that $\overrightarrow{e}_x\wedge \alpha_x=0$
if and only if there exist $\beta_x\in\wedge^{j-1} E\mid_x$, such that
$\alpha_x=\overrightarrow{e}_x\wedge\beta_x.$
Thus \eqref{EulerSeq-lem-eqn1} is an exact sequence for all $j\geq 1$.

As we have shown in \eqref{EulerSeq2},  the kernel of $E\xrightarrow{p_*} TX$ is the trivial bundle $L=\C\overrightarrow{e}$. 
At any point $x\in X$, the kernel of
$E\mid_x\xrightarrow{p_*} TX\mid_x$ is $\C\overrightarrow{e}_x$. 
As a consequence, the kernel of
$\wedge^{j}E\mid_x \xrightarrow{p_*} \wedge^j TX\mid_x$
 is
$$\overrightarrow{e}_x\wedge (\wedge^{j-1}E\mid_x).$$
Thus \eqref{EulerSeq-lem-eqn2} is an exact sequence for all $j\geq 1$.
\end{proof}

\section{The cohomology groups $H^{i}(\CP^n,\wedge^{j} \mathcal{T}_{\CP^n})$}\label{section3}

\subsection{The vanishing of the cohomology group 
$H^{i}(\CP^n,\wedge^{j} \mathcal{T}_{\CP^n})$ for $i>0$ and $0\le j\le n$}

\begin{theorem}\label{TheoemVanish}
For $X=\CP^n$,  we have 
\begin{equation}
H^{i}(X,\wedge^{j} \mathcal{T}_X)=0
\end{equation}
for all $i>0$ and $0\le j\le n$.
\end{theorem}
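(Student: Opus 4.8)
The plan is to deduce the vanishing entirely from the two short exact sequences of Lemma \ref{EulerSeq-lem} together with the classical cohomology of line bundles on $\CP^n$. The idea is that the ``ambient'' bundle $\wedge^j E$ occurring in those sequences is a direct sum of copies of $O(j)$, whose higher cohomology is well understood, so the whole problem reduces to controlling the auxiliary bundles $L\wedge(\wedge^{k}E)$.

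First I would record the cohomology of $\wedge^j E$. Since $E=O(1)^{\oplus(n+1)}$, taking exterior powers gives a splitting $\wedge^j E\cong O(j)^{\oplus\binom{n+1}{j}}$, and by the standard computation of the cohomology of $O(j)$ on $\CP^n$ one has $H^i(\CP^n,O(j))=0$ for all $i\ge 1$ whenever $j\ge 0$. Hence $H^i(X,\wedge^j E)=0$ for all $i\ge 1$ and $0\le j\le n$. Likewise, the subbundle $L=\C\overrightarrow{e}$ is the trivial line bundle (it is the kernel term $\mathcal{P}\times_G\mathfrak{g}\cong X\times\C$ of the Euler sequence \eqref{EulerSeq}), so $L\wedge(\wedge^0 E)=L\cong\mathcal{O}_X$ and therefore $H^i(X,L)=0$ for $i\ge 1$.

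The main step is to show $H^i\bigl(X,L\wedge(\wedge^{k}E)\bigr)=0$ for all $i\ge 1$ and all $k\ge 0$, which I would prove by a telescoping induction on $k$ using the exact sequence \eqref{EulerSeq-lem-eqn1}. Its long exact sequence in cohomology, combined with the vanishing $H^i(\wedge^k E)=H^{i+1}(\wedge^k E)=0$ for $i\ge 1$ established above, yields isomorphisms $H^i\bigl(L\wedge(\wedge^k E)\bigr)\cong H^{i+1}\bigl(L\wedge(\wedge^{k-1} E)\bigr)$ for all $i\ge 1$ and $k\ge 1$. Iterating $k$ times shifts the cohomological degree all the way up to $H^{i+k}\bigl(L\wedge(\wedge^0 E)\bigr)=H^{i+k}(X,\mathcal{O}_X)$, which vanishes because $i+k\ge 1$. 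I expect this telescope to be the main obstacle: one must trade each drop in the exterior degree for a rise in the cohomological degree and chase the iteration down to the trivial bundle, keeping the index ranges consistent.

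Finally I would feed this into the long exact sequence of \eqref{EulerSeq-lem-eqn2}. The relevant portion reads $H^i(\wedge^j E)\to H^i(\wedge^j TX)\to H^{i+1}\bigl(L\wedge(\wedge^{j-1}E)\bigr)$; for $i\ge 1$ both outer terms vanish by the previous two steps, forcing $H^i(X,\wedge^j T_X)=0$ for all $i\ge 1$ and $0\le j\le n$. The case $j=0$ is just $H^i(X,\mathcal{O}_X)=0$, already covered by the line-bundle computation. Throughout, the only external input is the vanishing of the higher cohomology of $O(j)$ for $j\ge 0$; everything else is formal manipulation of the two Euler-type sequences of Lemma \ref{EulerSeq-lem}.
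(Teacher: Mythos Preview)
Your proposal is correct and follows essentially the same route as the paper: both use the vanishing of $H^i(X,\wedge^j E)$ for $i\ge 1$ (the paper phrases this via Kodaira vanishing, you via the standard cohomology of $O(j)$, which is equivalent here), then telescope the long exact sequences of Lemma~\ref{EulerSeq-lem} to reduce $H^i(X,\wedge^j T_X)$ to $H^{i+j}(X,\mathcal{O}_X)=0$. The paper packages the telescoping step as a separate lemma (Lemma~\ref{Euler-Seq-lem2}), but the argument is the same.
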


\begin{remark}
Theorem \ref{TheoemVanish} should be known to Bott \cite{Bott}. 
In the case of $j=1$, the conclusion $H^{i}(X, \mathcal{T}_X)=0$ $(i>0)$ is a special case of the Theorem VII in \cite{Bott}. 
\end{remark}

To make the paper self contained, we will give a proof of Theorem \ref{TheoemVanish}.

Let $L=\C\overrightarrow{e}$ and $E=O(1)^{\oplus(n+1)}$ as in Section \ref{Section-Euler}. By Lemma \ref{EulerSeq-lem}, we get the following lemma.

\begin{lemma}\label{Euler-Seq-lem2}
For $X=\CP^n$, we have
\begin{equation}
H^i(X, \wedge^{j}TX)\cong H^i(X, L\wedge(\wedge^{j}E))\cong H^{i+1}(X,L\wedge(\wedge^{j-1}E))
\end{equation}
for all $i>0$ and $j\geq 1$.
\end{lemma}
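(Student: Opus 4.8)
The plan is to feed the two short exact sequences of Lemma \ref{EulerSeq-lem} into the long exact sequence in sheaf cohomology and to exploit the fact that the common middle term $\wedge^{j}E$ carries no higher cohomology. The whole argument is then pure diagram chasing built on one classical vanishing statement.

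First I would record the cohomology of the middle bundle. Since $E=O(1)^{\oplus(n+1)}$, taking exterior powers gives $\wedge^{j}E\cong O(j)^{\oplus\binom{n+1}{j}}$, because each $j$-element subset of the $n+1$ summands contributes a copy of $O(1)^{\otimes j}=O(j)$. Hence $H^{i}(X,\wedge^{j}E)\cong H^{i}(\CP^n,O(j))^{\oplus\binom{n+1}{j}}$. By the standard computation of line bundle cohomology on $\CP^n$ (Bott's formula), $H^{i}(\CP^n,O(d))=0$ for $0<i<n$ and every $d$, while $H^{n}(\CP^n,O(d))=0$ unless $d\leq -n-1$. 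As $j\geq 1$, both conditions are met, so $H^{i}(X,\wedge^{j}E)=0$ for all $i\geq 1$ and all $j\geq 1$. This is the only analytic input.

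Next I would apply cohomology to \eqref{EulerSeq-lem-eqn2}, namely $0\to L\wedge(\wedge^{j-1}E)\to\wedge^{j}E\to\wedge^{j}TX\to 0$. The resulting long exact sequence contains the segment $H^{i}(X,\wedge^{j}E)\to H^{i}(X,\wedge^{j}TX)\to H^{i+1}(X,L\wedge(\wedge^{j-1}E))\to H^{i+1}(X,\wedge^{j}E)$; with the two outer groups vanishing for $i\geq 1$, the connecting map becomes an isomorphism $H^{i}(X,\wedge^{j}TX)\cong H^{i+1}(X,L\wedge(\wedge^{j-1}E))$. Running the identical argument on \eqref{EulerSeq-lem-eqn1}, namely $0\to L\wedge(\wedge^{j-1}E)\to\wedge^{j}E\to L\wedge(\wedge^{j}E)\to 0$, yields $H^{i}(X,L\wedge(\wedge^{j}E))\cong H^{i+1}(X,L\wedge(\wedge^{j-1}E))$. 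Chaining the two isomorphisms produces exactly the asserted string $H^{i}(X,\wedge^{j}TX)\cong H^{i}(X,L\wedge(\wedge^{j}E))\cong H^{i+1}(X,L\wedge(\wedge^{j-1}E))$.

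Since the reasoning rests on a single known vanishing, there is no substantial obstacle. The only point demanding care is index bookkeeping: one must check that $H^{i}(X,\wedge^{j}E)=0$ is needed, and does hold, across the full range $i\geq 1$ up to $i=n$, so that for every relevant $i$ the connecting homomorphisms are genuine isomorphisms rather than merely injective or surjective.
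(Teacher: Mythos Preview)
Your proof is correct and follows essentially the same route as the paper: both feed the two short exact sequences of Lemma~\ref{EulerSeq-lem} into the long exact cohomology sequence and use the vanishing of $H^{i}(X,\wedge^{j}E)$ for $i\geq 1$ to turn the connecting maps into isomorphisms. The only cosmetic difference is that you invoke Bott's formula for line bundle cohomology on $\CP^n$, whereas the paper rewrites $O(j)=K_X\otimes O(n+1+j)$ and applies Kodaira vanishing; both yield the same vanishing statement.
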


\begin{proof}
The exact sequence \eqref{EulerSeq-lem-eqn1} in Lemma \ref{EulerSeq-lem}
 induces a long exact sequence
\begin{equation}\label{EulerSeq-lem-eqn3}
\ldots\rightarrow H^i(X, \wedge^j E)\rightarrow H^i(X, L\wedge(\wedge^{j}E))\rightarrow
H^{i+1}(X,L\wedge(\wedge^{j-1}E))\rightarrow H^{i+1}(X, \wedge^j E)\rightarrow\ldots.
\end{equation}
As $\wedge^j E=\wedge^j(O(1)^{\oplus(n+1)})\cong O(j)^{\oplus{n+1\choose j }},$
we have that
\begin{align*}
H^i(X,\wedge^j E)&=H^i(X,O(j)^{\oplus{n+1\choose j }})\\
&=H^i(X,O(j))^{\oplus{n+1\choose j }}\\
&=H^i(X,K_X\otimes O(n+1+j))^{\oplus{n+1\choose j }},
\end{align*}
where $K_X\cong O(-n-1)$ is the canonical line bundle of $X=\CP^n$.
By Kodaira vanishing theorem, we have $H^i(X,K_X\otimes O(n+1+j))=0$ for $i>0$, 
which implies $H^i(X,\wedge^j E)=0$ for $i>0$.
Thus the exact sequence \eqref{EulerSeq-lem-eqn3} induces the isomorphism
\begin{equation*}\label{EulerSeq-lem-eqn4}
H^i(X, L\wedge(\wedge^{j}E))\cong H^{i+1}(X,L\wedge(\wedge^{j-1}E))
\end{equation*}
for all $i>0$ and $j\geq 1$.

By the similar reason, the exact sequence \eqref{EulerSeq-lem-eqn2} in Lemma \ref{EulerSeq-lem} induces the isomorphism
\begin{equation*}\label{EulerSeq-lem-eqn5}
H^i(X, \wedge^{j}TX)\cong H^{i+1}(X,L\wedge(\wedge^{j-1}E))
\end{equation*}
for all $i>0$ and $j\geq 1$.

Thus we proved the lemma.
\end{proof}

{\bf Proof of Theorem \ref{TheoemVanish}:}
\begin{proof}
\begin{enumerate}
\item In the case of $j=0$, $H^i(X,\mathcal{O}_X)=0~(i>0)$ is a well known result. 
\item In the case of $j\geq 1$ and $i>0$, by Lemma \ref{Euler-Seq-lem2}, we have that
$$H^i(X, \wedge^{j}TX)\cong H^i(X, L\wedge(\wedge^{j}E))\cong H^{i+1}(X,L\wedge(\wedge^{j-1}E))\cong\cdots \cong H^{i+j}(X, L).$$

As $L$ is the trivial line bundle, we have $H^{i+j}(X, L)=H^{i+j}(X, O_X)=0$ for 
$i>0$ and $j\geq 1$, which implies that
$$H^{i}(X,\wedge^{j} \mathcal{T}_X)=H^{i+j}(X, L)=0$$
for all $i>0$ and $j\geq 1$. 
\end{enumerate}
\end{proof}

\subsection{Holomorphic vector fields and multi-vector fields on $\CP^n$}\label{section-multivector}

Let us introduce some notations first, which are important for this paper. Our setting are based on Example \ref{CPn-exa1} and Example \ref{CPn-exa2}.
\begin{itemize}
\item Let $v_i=p_{*}(z_i\frac{\partial}{\partial z_i})=\rho(e_i)$ $(0\leq i\leq n)$. 
Denote $W$ as the $n$-dimensional $\C$-vector space 
generated by $\{v_1,...,v_n\}$. Then we have $W=\rho(N_\C)$.
Let $W^k=\wedge^k W$ $(1\leq k\leq n)$ and $W^0=\C$. 
Then $W^k$ can be considered as a subspace of $H^0(X,\wedge^k \mathcal{T}_X)$.

\item Given any $I=(m_1,...,m_n)\in M$, let $m_0=-\sum_{i=1}^{n}m_i$, 
then $\chi^m=z_0^{m_0}...z_n^{m_n}$  can be considered as a rational function on $\CP^n$.
Suppose that $\{m_{i_1},m_{i_2},...,m_{i_l}\mid 0\leq i_1<i_2<...<i_l\leq n\}$ is the set of all the elements equal to $-1$ in $\{m_0, m_1,\ldots, m_n\}$. 
Set
\begin{align*}
|I|=l, \quad \chi^I=z_0^{m_0}...z_n^{m_n}, \quad
\cV_I=v_{i_1}\wedge...\wedge v_{i_l}\in W^l, \quad
\cE_I=e_{i_1}\wedge...\wedge e_{i_l}\in\wedge^l N.
\end{align*}
Notice that $v_0=-\sum_{i=1}^{n}v_i$ and $e_0=-\sum_{i=1}^{n}e_i$.
 \end{itemize}
 
The following theorem described the structure of the space of holomorphic vector fields and multi-vector fields on $\CP^n$ with respect to the $(\C^*)^n$-action on $H^0(X,\wedge^k \mathcal{T}_X)$ $(0\leq k\leq n)$.

\begin{theorem}\label{CPn-multiVec-thm}
Let $X=\CP^n$. 
Let  $S_k$ $(0\leq k\leq n)$ be the set consisting of all $I\in M$ satisfying the conditions
\begin{equation} \label{I-con1}
\langle I,e_i\rangle=m_i\geq -1 \quad (0\leq i\leq n)
\end{equation}
and 
\begin{equation}\label{I-con2}
|I|\leq k.
\end{equation}
Let 
\begin{equation}
V_{I}^k=\C (\chi^I\cdot \cV_I)\wedge W^{k-|I|}
\end{equation}
for $|I|\leq k\leq n$.
The $(\C^*)^n$-action on $X=\CP^n$ induced a $(\C^*)^n$-representation on $H^0(X,\wedge^k \mathcal{T}_X)$,
which has the weight spaces decomposition 
\begin{equation}
H^0(X,\wedge^k \mathcal{T}_X)=\bigoplus_{I\in S_k} V_{I}^k,
\end{equation}
where $S_k$ is the set of all weights,
and $V_I^k$ is the weight space of the weight $I$.
\end{theorem}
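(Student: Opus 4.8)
The plan is to identify $H^{0}(X,\wedge^{k}\mathcal{T}_{X})$ with an explicit space of polynomial multivector fields via the Euler sequence, and then to read off the $(\C^{*})^{n}$-weights directly. Taking global sections of the exact sequence \eqref{EulerSeq-lem-eqn2} from Lemma \ref{EulerSeq-lem} yields the long exact sequence
$$0\to H^{0}(X,L\wedge(\wedge^{k-1}E))\to H^{0}(X,\wedge^{k}E)\xrightarrow{p_{*}}H^{0}(X,\wedge^{k}\mathcal{T}_{X})\to H^{1}(X,L\wedge(\wedge^{k-1}E))\to\cdots.$$
By the isomorphisms established in Lemma \ref{Euler-Seq-lem2} one has $H^{1}(X,L\wedge(\wedge^{k-1}E))\cong H^{k}(X,L)=H^{k}(X,O_{X})=0$ for $1\le k\le n$, so $p_{*}$ is surjective on global sections (the case $k=0$ is $H^{0}(X,O_{X})=\C=V_{0}^{0}$). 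Since $p$ is $(\C^{*})^{n}$-equivariant, so is $p_{*}$; hence the weight-$I$ subspace of $H^{0}(X,\wedge^{k}\mathcal{T}_{X})$ is precisely the image under $p_{*}$ of the weight-$I$ subspace of $H^{0}(X,\wedge^{k}E)$, and it suffices to analyze the latter weight by weight.

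Next I would describe $H^{0}(X,\wedge^{k}E)$. Since $E\cong T\mathcal{P}/G$ is the descent of the $G=\C^{*}$-invariant vector fields on $\mathcal{P}=\C^{n+1}\setminus\{0\}$, its sections over $X$ are the $\C^{*}$-invariant holomorphic $k$-vector fields on $\mathcal{P}$; these extend across the origin by Hartogs, and invariance forces their coefficients to be homogeneous of degree $k$. Thus $H^{0}(X,\wedge^{k}E)$ has the monomial basis $\{z^{a}\,\partial_{z_{j_{1}}}\wedge\cdots\wedge\partial_{z_{j_{k}}}\}$ with $|a|=k$ and $0\le j_{1}<\cdots<j_{k}\le n$. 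Writing $\partial_{z_{j}}=z_{j}^{-1}(z_{j}\partial_{z_{j}})$ and using $p_{*}(z_{j}\partial_{z_{j}})=v_{j}$, each such monomial maps to $\chi^{I}\,v_{j_{1}}\wedge\cdots\wedge v_{j_{k}}$, where $I=(m_{1},\dots,m_{n})$ with $m_{i}=a_{i}$ for $i\notin\{j_{1},\dots,j_{k}\}$ and $m_{i}=a_{i}-1$ for $i\in\{j_{1},\dots,j_{k}\}$; note $\sum_{i=0}^{n}m_{i}=0$, matching $m_{0}=-\sum_{i=1}^{n}m_{i}$. From $\mathcal{L}_{v_{i}}(\chi^{I})=m_{i}\chi^{I}$ (Equation \eqref{action-chi-eqn}) and $[v_{i},v_{j}]=0$ it follows that $\chi^{I}v_{j_{1}}\wedge\cdots\wedge v_{j_{k}}$ has $(\C^{*})^{n}$-weight $I$, so $z^{a}\partial_{J}$ lies in weight $I$.

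Finally I fix a weight $I$ and determine the monomials mapping to it. Choosing a derivative set $J=\{j_{1},\dots,j_{k}\}$ of size $k$ forces $a_{i}=m_{i}+1$ for $i\in J$ and $a_{i}=m_{i}$ otherwise, and the constraint $a_{i}\ge 0$ becomes $m_{i}\ge 0$ for $i\notin J$ and $m_{i}\ge-1$ for $i\in J$. Hence a weight $I$ is realized exactly when $m_{i}\ge-1$ for all $i$ (condition \eqref{I-con1}) and $J$ contains $\{i:m_{i}=-1\}$, a set of size $|I|$, which forces $|I|\le k$ (condition \eqref{I-con2}); that is, $I\in S_{k}$. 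For such $I$, writing $J=\{i:m_{i}=-1\}\sqcup J'$ with $J'\subseteq\{i:m_{i}\ge0\}$ and $|J'|=k-|I|$, the image $p_{*}(z^{a}\partial_{J})$ equals $\pm\,\chi^{I}\,\cV_{I}\wedge\bigl(\bigwedge_{i\in J'}v_{i}\bigr)$. As $J'$ ranges over all $(k-|I|)$-subsets of $\{i:m_{i}\ge0\}$, these elements span $\C(\chi^{I}\cV_{I})\wedge W^{k-|I|}=V_{I}^{k}$, giving the weight-$I$ space of $H^{0}(X,\wedge^{k}\mathcal{T}_{X})$, and the decomposition $H^{0}(X,\wedge^{k}\mathcal{T}_{X})=\bigoplus_{I\in S_{k}}V_{I}^{k}$ follows.

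I expect the main obstacle to be the last identification. One must verify that the pushed-forward monomials, which a priori only use index sets $J'\subseteq\{i:m_{i}\ge0\}$, span the full space $\C(\chi^{I}\cV_{I})\wedge W^{k-|I|}$ of the statement: this holds because wedging $\cV_{I}$ with any $v_{i}$ having $m_{i}=-1$ vanishes, so those directions drop out of $\cV_{I}\wedge W^{k-|I|}$, while the single relation $\sum_{i=0}^{n}v_{i}=0$ accounts for the remaining dependences among the $v_{i}$. One also checks $\cV_{I}\ne 0$, using that $\{i:m_{i}=-1\}$ is a proper subset of $\{0,\dots,n\}$ (otherwise $\sum_{i=0}^{n}m_{i}=-(n+1)\ne0$), so the corresponding $v_{i}$ are linearly independent. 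The only other delicate point, the surjectivity of $p_{*}$ on global sections, has already been reduced to the vanishing in Lemma \ref{Euler-Seq-lem2}.
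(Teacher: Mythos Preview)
Your proposal is correct and follows essentially the same approach as the paper: both arguments use the surjectivity of $p_{*}$ on global sections (via the vanishing of $H^{1}(X,L\wedge(\wedge^{k-1}E))$, which the paper records as Lemma~\ref{MultiVectLem}), identify $H^{0}(X,\wedge^{k}E)$ with degree-$k$ polynomial $k$-vector fields on $\C^{n+1}$ (the paper's Lemma~\ref{MultiVectLem3}), and then read off the monomial contributions to each weight. The only difference is organizational: the paper first proves the two inclusions $H^{0}(X,\wedge^{k}\mathcal{T}_{X})\subseteq\sum_{I}V_{I}^{k}$ and $\sum_{I}V_{I}^{k}\subseteq H^{0}(X,\wedge^{k}\mathcal{T}_{X})$ and then observes that the sum is direct because each $V_{I}^{k}$ has weight $I$, whereas you invoke the $(\C^{*})^{n}$-equivariance of $p_{*}$ from the outset and compute weight space by weight space, which makes the directness automatic.
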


In Theorem \ref{CPn-multiVec-thm}, $S_k$ is the set of all $I\in M$ satisfying conditions \ref{I-con1} and \ref{I-con2}. Let us denote $S(i)$ $(0\leq i\leq n)$ as the set of all all $I\in M$ satisfying the condition \eqref{I-con1} and $|I|=i$. 
Then we have $S_k=S_{k-1}\cup S(i)$ and
$S_0\subseteq S_1\subseteq S_2\ldots\subseteq S_n.$
By Theorem \ref{CPn-multiVec-thm}, we have

\begin{proposition}\label{CPn-multiVec-prop}
Let $X=\CP^n$. We have
\begin{equation}
H^0(X,\wedge^{k} \mathcal{T}_X)=(H^0(X,\wedge^{k-1} \mathcal{T}_X)\wedge W)\oplus
(\bigoplus_{I\in S(k)}\C \chi^I\cdot \cV_I) \quad (1\leq k\leq n),
\end{equation}
where $S(k)$ is the set of all all $I\in M$ satisfying the condition \eqref{I-con1} and $|I|=k$. 
\end{proposition}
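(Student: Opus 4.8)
The plan is to deduce the Proposition directly from Theorem \ref{CPn-multiVec-thm} by splitting the weight-space decomposition according to the value of $|I|$. First I would record the set-theoretic identity $S_k = S_{k-1}\sqcup S(k)$, which holds because an index $I$ with $m_i\geq -1$ for all $i$ satisfies $|I|\leq k$ exactly when either $|I|\leq k-1$ or $|I|=k$, and these two alternatives are mutually exclusive. Since Theorem \ref{CPn-multiVec-thm} gives $H^0(X,\wedge^k\mathcal{T}_X)=\bigoplus_{I\in S_k}V_I^k$ as a direct sum of weight spaces, this partition yields
$$H^0(X,\wedge^k\mathcal{T}_X)=\Bigl(\bigoplus_{I\in S_{k-1}}V_I^k\Bigr)\oplus\Bigl(\bigoplus_{I\in S(k)}V_I^k\Bigr),$$
and the two remaining tasks are to identify each of these summands.

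For the second summand, I would observe that every $I\in S(k)$ has $|I|=k$, so that $V_I^k=\C(\chi^I\cdot\cV_I)\wedge W^{k-|I|}=\C(\chi^I\cdot\cV_I)\wedge W^0=\C\,\chi^I\cdot\cV_I$, using $W^0=\C$. This identifies $\bigoplus_{I\in S(k)}V_I^k$ with $\bigoplus_{I\in S(k)}\C\,\chi^I\cdot\cV_I$, which is exactly the second term in the statement.

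For the first summand, the key algebraic identity is $V_I^k=V_I^{k-1}\wedge W$ for $I\in S_{k-1}$. Indeed, since $|I|\leq k-1$ one has $k-|I|\geq 1$, so $W^{k-|I|}=W^{(k-1)-|I|}\wedge W$ (using $\wedge^m W=(\wedge^{m-1}W)\wedge W$ for $m\geq 1$), whence
$$V_I^k=\C(\chi^I\cdot\cV_I)\wedge W^{(k-1)-|I|}\wedge W=V_I^{k-1}\wedge W.$$
Summing over $I\in S_{k-1}$, and using that the operation $U\mapsto U\wedge W$ distributes over sums of subspaces, together with Theorem \ref{CPn-multiVec-thm} applied at level $k-1$, gives $\bigoplus_{I\in S_{k-1}}V_I^k=\bigl(\bigoplus_{I\in S_{k-1}}V_I^{k-1}\bigr)\wedge W=H^0(X,\wedge^{k-1}\mathcal{T}_X)\wedge W$. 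Combining the two identifications then produces the claimed decomposition.

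Each step is short, so there is no single hard computation; the point requiring the most care is the bookkeeping of directness. I would check that the internal directness of $\bigoplus_{I\in S_{k-1}}V_I^k$ is inherited from the ambient direct sum in Theorem \ref{CPn-multiVec-thm}, and that passing $\wedge W$ through the sum does not collapse distinct weight spaces. The latter is guaranteed because $W$ consists of the torus-invariant (weight-zero) fields $v_1,\dots,v_n$, so wedging with $W$ preserves the weight $I$; hence the images $V_I^{k-1}\wedge W=V_I^k$ remain the distinct weight spaces of $H^0(X,\wedge^k\mathcal{T}_X)$, and the displayed sum of the two summands is genuinely direct.
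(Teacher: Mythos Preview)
Your proof is correct and follows essentially the same approach as the paper: split $S_k=S_{k-1}\sqcup S(k)$, use Theorem \ref{CPn-multiVec-thm} at both levels $k$ and $k-1$, and invoke the identity $W^{k-|I|}=W^{(k-1)-|I|}\wedge W$ for $I\in S_{k-1}$. Your additional remark that directness is preserved because wedging with the weight-zero space $W$ keeps the weight spaces $V_I^k$ distinct is a helpful clarification that the paper's proof leaves implicit.
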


Next we will prove Theorem \ref{CPn-multiVec-thm} and Proposition 
\ref{CPn-multiVec-prop}. 

By Lemma \ref{EulerSeq-lem}, we have

\begin{lemma}\label{MultiVectLem}
For $X=\CP^n$, we have the exact sequences
\begin{equation}\label{MultiVectLem-eqn}
0\rightarrow H^0(X, L\wedge(\wedge^{j-1} E))\rightarrow H^0(X,\wedge^{j}E)\xrightarrow{p_*} H^0(X,\wedge^j TX)\rightarrow 0
\end{equation}
for all $j\geq 1$.
\end{lemma}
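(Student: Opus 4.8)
The plan is to deduce the desired short exact sequence on global sections by passing to sheaf cohomology in the short exact sequence of vector bundles \eqref{EulerSeq-lem-eqn2} from Lemma \ref{EulerSeq-lem},
$$0\rightarrow L\wedge(\wedge^{j-1} E)\hookrightarrow\wedge^{j}E\xrightarrow{p_*} \wedge^j TX\rightarrow 0,$$
and then showing that the relevant connecting homomorphism vanishes. First I would write down the induced long exact cohomology sequence
$$0\rightarrow H^0(X, L\wedge(\wedge^{j-1} E))\rightarrow H^0(X,\wedge^{j}E)\xrightarrow{p_*} H^0(X,\wedge^j TX)\xrightarrow{\delta} H^1(X, L\wedge(\wedge^{j-1} E))\rightarrow\cdots.$$
Exactness at the first two terms of \eqref{MultiVectLem-eqn} is automatic from this sequence; the only nontrivial point is the surjectivity of $p_*$ on global sections, which is equivalent to the vanishing of the target $H^1(X, L\wedge(\wedge^{j-1} E))$ of the connecting map $\delta$.

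The key step is therefore to prove that $H^1(X, L\wedge(\wedge^{j-1} E))=0$ for all $j\geq 1$. For this I would reuse the iterated isomorphism already established in the proof of Lemma \ref{Euler-Seq-lem2} and Theorem \ref{TheoemVanish}. Applying the isomorphism $H^i(X, L\wedge(\wedge^{k}E))\cong H^{i+1}(X, L\wedge(\wedge^{k-1}E))$ repeatedly, starting from $i=1$, $k=j-1$ and decreasing $k$ down to $0$, gives
$$H^1(X, L\wedge(\wedge^{j-1} E))\cong H^2(X, L\wedge(\wedge^{j-2} E))\cong\cdots\cong H^{j}(X, L).$$
Since $L=\C\overrightarrow{e}$ is the trivial line bundle, $H^{j}(X, L)=H^{j}(X,\mathcal{O}_X)=0$ for all $j\geq 1$, and the vanishing follows. (For $j=1$ the iteration is vacuous and one reads off $H^1(X, L\wedge(\wedge^{0}E))=H^1(X,L)=0$ directly.)

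With this vanishing in hand, the long exact sequence collapses to the short exact sequence \eqref{MultiVectLem-eqn}, which finishes the proof. I expect the only thing to check carefully is that the iterated isomorphism is legitimately applicable at every stage, i.e.\ that the cohomological degree stays $\geq 1$ and the exterior power stays $\geq 1$ throughout; but since the indices run through $i=1,\dots,j-1$ and powers $j-1,\dots,1$, both conditions hold at each step. Thus there is no real obstacle here: all the substantive work has already been done in the earlier lemmas, and the present statement is essentially the $H^0$-level reflection of the higher-cohomology vanishing used to prove Theorem \ref{TheoemVanish}.
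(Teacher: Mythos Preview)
Your proposal is correct and follows essentially the same approach as the paper: both pass to the long exact cohomology sequence induced by \eqref{EulerSeq-lem-eqn2} and then use the iterated isomorphism from Lemma \ref{Euler-Seq-lem2} to identify $H^1(X, L\wedge(\wedge^{j-1} E))$ with $H^j(X,\mathcal{O}_X)=0$. Your extra care in checking the index ranges for the iteration (and handling $j=1$ separately) is a nice touch, but otherwise the arguments are the same.
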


\begin{proof}
The exact sequence \eqref{EulerSeq-lem-eqn2} induces a long exact sequence
$$0\rightarrow H^0(X, L\wedge(\wedge^{j-1} E))\rightarrow H^0(X,\wedge^{j}E)\xrightarrow{p_*} H^0(X,\wedge^j TX)\rightarrow H^1(X, L\wedge(\wedge^{j-1} E))\rightarrow\cdots$$
for all $j\geq 1$.
By Lemma \ref{Euler-Seq-lem2}, we have
$$H^1(X, L\wedge(\wedge^{j-1} E))\cong H^2(X, L\wedge(\wedge^{j-2} E))\cong\cdots\cong H^j(X, L)=H^j(X, O_X)=0$$
for all $j\geq 1$.
Therefore we have the exact sequences
$$0\rightarrow H^0(X, L\wedge(\wedge^{j-1} E))\rightarrow H^0(X,\wedge^{j}E)\xrightarrow{p_*} H^0(X,\wedge^j TX)\rightarrow 0$$
for all $j\geq 1$.
\end{proof}
\begin{remark}
In the case of $j=1$, the exact sequence \eqref{MultiVectLem-eqn} becomes
\begin{equation}\label{MultiVectEq1}
 0\rightarrow\C\rightarrow H^{0}(X,O(1))^{\oplus(n+1)}\rightarrow H^{0}(X, T_X)\rightarrow 0.
 \end{equation}
\end{remark}

As $\wedge^j E=O(j)^{\oplus{n+1\choose j }}$, we have that
 $H^0(X,\wedge^{j}E)\cong H^0(X, O(j))^{\oplus{n+1\choose j }}$. 

Let us denote $F_j$ $(1\le k\le n+1)$ by the complex vector space of the $j$-vector fields
$$\sum_{0\le i_1<i_2<\ldots< i_k\le n}g_{i_1,i_2,\ldots,i_j}\frac{\partial}{\partial z_{i_1}}\wedge\ldots\wedge\frac{\partial}{\partial z_{i_j}}$$
on $\C^{n+1}$, where $g_{i_1,i_2,\ldots,i_j}$ are homogenous polynomials with variables 
$z_0,z_1,...,z_n$ of degree $j$.
The restriction of such $j$-vector fields on $\mathcal{P}=\C^{n+1}\backslash\{0\}\subset\C^{n+1}$ form a vector space. We denote it by $\tilde{F_j}$.
Then the complex vector spaces below are isomorphic:
\begin{itemize}
\item[(a)] $F_j$;
\item[(b)] $\tilde{F_j}$;
\item[(c)] $H^0(X,\wedge^{j}E)$;
\item[(d)] $H^{0}(X, O(j))^{\oplus{n+1\choose j }}$;
\end{itemize}
where $1\le j\le n+1$.

By identification of the vector spaces above, and by Lemma \ref{MultiVectLem}, we have

\begin{lemma} \cite{Bondal} \label{MultiVectLem3}
Let $p: \mathcal{P}=\C^{n+1}\backslash\{0\} \rightarrow\CP^n: (z_0,z_1,\ldots,z_n)\rightarrow [z_0,z_1,\ldots,z_n]$ be the canonical projection.
Then we have
\begin{enumerate} 
\item Any holomorphic $k$-vector field on $\CP^n$ can be written as 
$$p_*(\sum_{0\le i_1<i_2<\ldots< i_k\le n}g_{i_1,i_2,\ldots,i_k}\frac{\partial}{\partial z_{i_1}}\wedge\ldots\wedge\frac{\partial}{\partial z_{i_k}}),$$
where $g_{i_1,i_2,\ldots,i_k}$ are homogenous polynomials with variables 
$z_0,z_1,...,z_n$ of degree $k$.
Or in other words, $$H^0(X,\wedge^k \mathcal{T}_X)=p_*(\tilde{F_k}).$$
\item 
The kernel of the map $p_*: \tilde{F_k}\rightarrow H^0(X,\wedge^k \mathcal{T}_X)$ is
$$\ker p_{*}=(\sum_{i=0}^{n}z_{i}\frac{\partial}{\partial z_{i}})\wedge\tilde{F}_{k-1}.$$
\end{enumerate}
\end{lemma}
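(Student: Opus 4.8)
The plan is to deduce both parts of the lemma from Lemma \ref{MultiVectLem} together with the chain of isomorphisms (a)--(d), the crucial point being that, under these identifications, the abstract bundle map $p_*$ of the Euler sequence \eqref{EulerSeq2} becomes the genuine pushforward of multivector fields along $p$. Granting this, statement (1) is immediate: Lemma \ref{MultiVectLem} gives the surjection $p_*\colon H^0(X,\wedge^{k}E)\twoheadrightarrow H^0(X,\wedge^{k}\mathcal{T}_X)$, and since $H^0(X,\wedge^{k}E)\cong\tilde{F_k}$ this says exactly that every holomorphic $k$-vector field on $\CP^n$ is the pushforward of an element of $\tilde{F_k}$, i.e. $H^0(X,\wedge^{k}\mathcal{T}_X)=p_*(\tilde{F_k})$.

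First I would spell out the identification $H^0(X,\wedge^{j}E)\cong\tilde{F_j}$ at the level of $\C^*$-invariant vector fields. Recall from Section \ref{Section-Euler} that $E\cong T\mathcal{P}/G=\mathcal{P}\times_{\C^*}\C^{n+1}$, so a global section of $\wedge^{j}E$ is the same thing as a $\C^*$-invariant holomorphic $j$-vector field on $\mathcal{P}=\C^{n+1}\backslash\{0\}$. Under the action $\lambda\colon v\mapsto v\lambda$ one has $\frac{\partial}{\partial z_i}\mapsto\lambda^{-1}\frac{\partial}{\partial z_i}$, so a field $\sum g_{i_1\ldots i_j}\frac{\partial}{\partial z_{i_1}}\wedge\cdots\wedge\frac{\partial}{\partial z_{i_j}}$ is invariant precisely when every coefficient $g_{i_1\ldots i_j}$ is homogeneous of degree $j$; this is the space $\tilde{F_j}$. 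With this concrete picture the map $E\xrightarrow{p_*}\mathcal{T}_X$ of \eqref{EulerSeq2} is literally ``restrict an invariant field to $\mathcal{P}$ and push it forward by $p$'', and its induced map on $\wedge^{j}$ coincides with the $p_*$ of Lemma \ref{MultiVectLem}.

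For statement (2), Lemma \ref{MultiVectLem} already identifies $\ker\big(p_*\colon\tilde{F_k}\to H^0(X,\wedge^{k}\mathcal{T}_X)\big)$ with $H^0(X,L\wedge(\wedge^{k-1}E))$, so it remains to show that this equals $\overrightarrow{e}\wedge\tilde{F}_{k-1}$. I would obtain this from the exact sequence \eqref{EulerSeq-lem-eqn1} with $j$ replaced by $k-1$,
\[
0\rightarrow L\wedge(\wedge^{k-2}E)\hookrightarrow\wedge^{k-1}E\xrightarrow{\overrightarrow{e}\wedge\cdot}L\wedge(\wedge^{k-1}E)\rightarrow 0,
\]
whose long exact sequence produces a \emph{surjection} $H^0(X,\wedge^{k-1}E)\xrightarrow{\overrightarrow{e}\wedge\cdot}H^0(X,L\wedge(\wedge^{k-1}E))$ because the obstruction sits in $H^1(X,L\wedge(\wedge^{k-2}E))$, which vanishes: by the isomorphisms of Lemma \ref{Euler-Seq-lem2} it is isomorphic to $H^{k-1}(X,L)=H^{k-1}(X,\mathcal{O}_X)=0$. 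Since $L=\C\overrightarrow{e}$ with $\overrightarrow{e}=\sum_{i=0}^{n}z_i\frac{\partial}{\partial z_i}$, on global sections the operator $\overrightarrow{e}\wedge\cdot$ is just wedging with $\sum_{i=0}^{n}z_i\frac{\partial}{\partial z_i}$ under $H^0(X,\wedge^{k-1}E)\cong\tilde{F}_{k-1}$, so its image is $\big(\sum_{i=0}^{n}z_i\frac{\partial}{\partial z_i}\big)\wedge\tilde{F}_{k-1}$. This gives $\ker p_*=\overrightarrow{e}\wedge\tilde{F}_{k-1}$, as claimed.

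The main obstacle is not any single computation but the bookkeeping in the first two paragraphs: one must verify carefully that after the identifications (a)--(d) the bundle morphism $p_*$ really is the honest pushforward of $\C^*$-invariant multivector fields, and that wedging with $\overrightarrow{e}$ inside $\wedge^{\bullet}E$ matches wedging with the Euler field on $\tilde{F}_{\bullet}$. Once these compatibilities are pinned down, everything follows from the exact sequences already established. I would also dispose of the boundary case $k=1$ directly, where $\wedge^{0}E=\mathcal{O}_X$, $\tilde{F}_0=\C$, and $H^0(X,L)=\C\overrightarrow{e}=\overrightarrow{e}\wedge\tilde{F}_0$, so no appeal to \eqref{EulerSeq-lem-eqn1} is needed.
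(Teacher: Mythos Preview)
Your proposal is correct and follows the same route the paper indicates: the paper does not give a detailed proof of this lemma, merely stating that it follows ``by identification of the vector spaces above, and by Lemma \ref{MultiVectLem}''. Your argument makes this precise, and in particular your treatment of part (2) fills in a step the paper leaves implicit --- namely, why $H^0(X,L\wedge(\wedge^{k-1}E))$ coincides with $\overrightarrow{e}\wedge\tilde{F}_{k-1}$ rather than possibly being larger. Your use of the sequence \eqref{EulerSeq-lem-eqn1} with $j=k-1$ together with the vanishing $H^1(X,L\wedge(\wedge^{k-2}E))=0$ from Lemma \ref{Euler-Seq-lem2} is exactly the right justification, and your separate handling of the boundary case $k=1$ is appropriate since the recursion of Lemma \ref{Euler-Seq-lem2} requires $j\geq 1$.
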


{\bf Proof of Theorem \ref{CPn-multiVec-thm}:}
\begin{proof}
\begin{enumerate}
\item First, we will prove that
\begin{equation}\label{Multi-sum-eqn}
H^0(X,\wedge^k \mathcal{T}_X)=\sum_{I\in S_k}V_I^k\quad (0\leq k\leq n),
\end{equation}
which is equivalent to 
$$ H^0(X,\wedge^k \mathcal{T}_X)\subseteq\sum_{I\in S_k}V_I^k$$
and 
$$\sum_{I\in S_k}V_I^k\subseteq H^0(X,\wedge^k \mathcal{T}_X).$$

\begin{enumerate}
\item
By Lemma \ref{MultiVectLem3}, 
any holomorphic $k$-vector field $\Psi$ on $\CP^n$ can be written as 
\begin{align*}
\Psi&=p_*(\sum_{0\le i_1<i_2<\ldots< i_k\le n}g_{i_1,i_2,\ldots,i_k}\frac{\partial}{\partial z_{i_1}}\wedge\ldots\wedge\frac{\partial}{\partial z_{i_k}})\\
&=\sum_{0\le i_1<i_2<\ldots< i_k\le n}f_{i_1,i_2,\ldots,i_k}v_{i_1}\wedge\ldots\wedge v_{i_k},
\end{align*}
where $g_{i_1,i_2,\ldots,i_k}$ are homogenous polynomials with variables 
$z_0,z_1,...,z_n$ of degree $k$, and $f_{i_1,i_2,\ldots,i_k}=\displaystyle{\frac{g_{i_1,i_2,\ldots,i_k}}{\prod_{s=1}^{k}z_{i_s}}}$.
Suppose that 
\begin{equation}\label{f^m_i}
f_{i_1,i_2,\ldots,i_k}=\sum c_{m_0,...,m_n}^{i_1,...,i_k} z_0^{m_0}...z_n^{m_n},
\end{equation}
where $c_{m_0,...,m_n}^{i_1,...,i_k}$ are complex constants.

The integers $m_i$ $(0\leq i\leq n)$ in Equation \eqref{f^m_i} have to satisfy the conditions below:
\begin{enumerate}
\item $m_{i_s}\geq -1$\quad for \quad $1\leq s\leq k$,
\item $m_j\geq 0$\quad for \quad $j\notin \{i_s\mid 1\leq s\leq k\},$
\item $\sum_{i=0}^{n}m_i=0.$
\end{enumerate}

It is easy to verify that any element $I=(m_1,...,m_n)\in M$ satisfying 
the conditions (i), (ii) and (iii) satisfies $I\in S_k$. 

Without loss of generality, 
suppose that $\{m_{i_1},m_{i_2},...,m_{i_l}\mid 0\leq i_1<i_2<...<i_l\leq n\}$ is the set of 
all the elements equal to $-1$ in $\{m_0, m_1,\ldots, m_n\}$. 
Then
\begin{align*}
z_0^{m_0}...z_n^{m_n}v_{i_1}\wedge\ldots\wedge v_{i_k}&=\chi^I \cdot(v_{i_1}\wedge\ldots\wedge v_{i_l})\wedge (v_{i_{l+1}}\wedge...\wedge v_{i_k})\\
&=\chi^I\cdot\cV_I\wedge (v_{i_{l+1}}\wedge...\wedge v_{i_k})
\end{align*}
is a $k$-vector field in the space 
$V_I^k=V_I^k$ ($I\in S_k$).
As
\begin{equation} \label{Xi-eqn}
\Psi=\sum c_{m_0,...,m_n}^{i_1,...,i_k} z_0^{m_0}...z_n^{m_n}
v_{i_1}\wedge\ldots\wedge v_{i_k},
\end{equation}
we have that
$$\Psi\in \sum_{I\in S_k}V_I^k\quad (0\leq k\leq n).$$

Thus
\begin{equation*}
H^0(X,\wedge^k \mathcal{T}_X)\subseteq\sum_{I\in S_k}V_I^k\quad (0\leq k\leq n).
\end{equation*}

\item
On the other hand, given an element $I=(m_1,...,m_n)\in S_k\subset M$, 
suppose that $\{m_{i_1},m_{i_2},...,m_{i_l}\mid 0\leq i_1<i_2<...<i_l\leq n\}$ is the set of all the elements equal to $-1$ in $\{m_0, m_1,\ldots, m_n\}$.  Then we have
\begin{align*}
\chi^I\cdot \cV_I&=z_0^{m_0}...z_n^{m_n}v_{i_1}\wedge\ldots\wedge v_{i_l}\\
&=z_0^{m_0}...z_n^{m_n}p_*(z_{i_1}\frac{\partial}{\partial z_{i_1}})\wedge\ldots\wedge 
p_*(z_{i_l}\frac{\partial}{\partial z_{i_l}})\\
&=p_*((z_{i_1}...z_{i_l}).(z_0^{m_0}...z_n^{m_n})\frac{\partial}{\partial z_{i_1}}\wedge\ldots\wedge\frac{\partial}{\partial z_{i_l}})\\
&=p_*((\prod_{i\notin\{i_1,...i_l\}}z_i^{m_i})\frac{\partial}{\partial z_{i_1}}\wedge\ldots\wedge\frac{\partial}{\partial z_{i_l}}).
\end{align*}
As $m_i\geq 0$ for $i\notin\{i_1,...i_l\}$, we know that $\prod_{i\notin\{i_1,...i_l\}}z_i^{m_i}$ is a polynominal with variables $z_0, z_1,...z_n$ of degree $l$.  By Lemma \ref{MultiVectLem3}, $\chi^I\cdot \cV_I$ is a holomorphic $l$-vector field on $X=\CP^n$. 

As $W^{k-|I|}=W^{k-l}$ is a subspace of $H^0(X,\wedge^{k-l} \mathcal{T}_X)$, 
we get that
$V_I^k=\C (\chi^I\cdot \cV_I)\wedge W^{k-|I|}$ is a subspace of $H^0(X,\wedge^{k} \mathcal{T}_X)$.

Therefore we have
\begin{equation*}
\sum_{I\in S_k}V_I^k\subseteq H^0(X,\wedge^k \mathcal{T}_X).
\end{equation*}

\item By the arguments above, we proved Equation \eqref{Multi-sum-eqn}.
\end{enumerate}

\item By Equation \eqref{Multi-sum-eqn}, it is easy to verify that 
$V_I^k=\C (\chi^I\cdot \cV_I)\wedge W^{k-|I|}$ is the weight space of the weight $I$.
Thus we have
\begin{equation*}
H^0(X,\wedge^k \mathcal{T}_X)=\bigoplus_{I\in S_k}V_I^k\quad (0\leq k\leq n).
\end{equation*}
\end{enumerate}
\end{proof}

{\bf Proof of Proposition \ref{CPn-multiVec-prop}: }
\begin{proof}
Since $S_k=S_{k-1}\cup S(k)$, by Theorem \ref{CPn-multiVec-thm}, we have
\begin{gather*}
H^0(X,\wedge^k \mathcal{T}_X)=\bigoplus_{I\in S_k}\C (\chi^I\cdot \cV_I)\wedge W^{k-|I|}\\
=(\bigoplus_{I\in S_{k-1}}\C (\chi^I\cdot \cV_I)\wedge W^{k-|I|}) \oplus
(\bigoplus_{I\in S(k)}\C \chi^I\cdot W^k).
\end{gather*}
As $W^{k-|I|}=W^{k-1-|I|}\wedge W$, by Theorem \ref{CPn-multiVec-thm}, we have 
\begin{gather*}
\bigoplus_{I\in S_{k-1}}\C (\chi^I\cdot \cV_I)\wedge W^{k-|I|}
=(\bigoplus_{I\in S_{k-1}}\C (\chi^I\cdot \cV_I)\wedge W^{k-1-|I|})\wedge W\\
=H^0(X,\wedge^{k-1} \mathcal{T}_X)\wedge W.
\end{gather*}
Therefore we have
$$H^0(X,\wedge^{k} \mathcal{T}_X)=(H^0(X,\wedge^{k-1} \mathcal{T}_X)\wedge W)\oplus
(\bigoplus_{I\in S(k)}\C \chi^I\cdot W^k) \quad (1\leq k\leq n).$$
\end{proof}

\section{Poisson cohomology of $\CP^n$}\label{section4}

\subsection{Poisson cohomology of toric Poisson structures on $\CP^n$}
Let us introduce some necessary notations first. 
\begin{itemize}
\item  Let $\pi=\sum_{1\leq i<j\leq n}a_{ij} v_i\wedge v_j$ 
be a holomorphic toric Poisson structure on $X=\CP^n$, 
where $v_i=p_*(z_i\frac{\partial}{\partial z_i})=\rho(e_i)$ $(0\leq i\leq n)$.
We define $\Pi\in\wedge^2 N_\C$ by  the equation
\begin{equation}
\rho(\Pi)=\pi.
\end{equation}
Then 
$$\Pi=\sum_{1\leq i<j\leq n}a_{ij} e_i\wedge e_j.$$
\item Given $I=(m_1,...,m_n)\in M$, we have
$$\imath_{I}\Pi\in N_\C=N\otimes\C,$$ 
where $\imath_{I}\Pi$ denotes the contraction of $I\in M$ with $\Pi\in\wedge^2 N_\C$.
Let $m_0=-\sum_{i=1}^{n}m_i$. Then $\chi^I=z_0^{m_0} z_1^{m_1}\cdots z_n^{m_n}$ can be considered as a rational function on $X=\CP^n$.
\end{itemize}

By Lemma  \ref{LSX-lem} and Theorem \ref{TheoemVanish},  
the Poisson cohomology of $(X=\CP^n,\pi)$ is isomorphic to the cohomology of the complex
\begin{equation}
H^{0}(X,\mathcal{O}_{X})\xrightarrow {d_{\pi}}
H^{0}(X,T_{X})\xrightarrow{d_{\pi}}H^{0}(X,\wedge^{2}T_{X})
\xrightarrow{d_{\pi}}
\ldots \xrightarrow{d_{\pi}}H^{0}(X,\wedge^{n}T_{X})
\end{equation}
where $d_{\pi}=[\pi,\cdot]$. 
By Lemma \ref{toric-PoCoh-lem}, there is a $(\C^*)^n$-action on $H^{\bullet}_{\pi}(X)$. The following theorem described the Poisson cohomology group $H^{\bullet}_{\pi}(X)$ by the decomposition of the weight spaces of
the $(\C^*)^n$-respresentation on $H^{\bullet}_{\pi}(X)$. 

\begin{theorem}\label{CPn-cohomology-thm}
Let $\pi$ be a holomorphic toric Poisson structure on $X=\CP^n$. 
Let $V_I^k=\C (\chi^I\cdot \cV_I)\wedge W^{k-|I|}$ for $|I|\leq k\leq n$.
Then we have
\begin{enumerate}
\item for $0\leq k\leq n$, 
\begin{equation}\label{CPn-cohomology-thm-eqn1}
H_{\pi}^k(X)=\bigoplus_{I\in S_k(\pi)}V_I^k,
\end{equation}
where $S_k(\pi)$ is the set consisting of all $I\in M$ satisfying 
\begin{align}
\langle I,e_i\rangle=&m_i\geq -1\quad (0\leq i\leq n); \label{CPn-cohomology-thm-eqn2}\\  
&|I|\leq k; \label{CPn-cohomology-thm-eqn3}
\end{align}
and the equation
\begin{equation}\label{mI-eqn}
(\imath_{I}\Pi)\wedge \cE_I=0.
\end{equation}
\item  $H_{\pi}^k(X)=0$ for $k>n$.
\end{enumerate}
\end{theorem}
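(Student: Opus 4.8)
The plan is to reduce $H^\bullet_\pi(X)$ to a direct sum, over the weights $I$, of Koszul-type complexes, and then read off the cohomology from the acyclicity of the Koszul differential. By Lemma \ref{LSX-lem} together with Theorem \ref{TheoemVanish}, $H^\bullet_\pi(X)$ is the cohomology of the finite complex $\bigl(H^0(X,\wedge^\bullet \mathcal{T}_X),\,d_\pi\bigr)$, which is concentrated in degrees $0\le k\le n$; this already yields part (2), since $\wedge^k\mathcal{T}_X=0$ for $k>n$. Because $\pi$ is torus invariant, $d_\pi=[\pi,\cdot]$ is $(\C^*)^n$-equivariant (Lemma \ref{toric-PoCoh-lem}), hence it respects the weight space decomposition of Theorem \ref{CPn-multiVec-thm}. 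Thus the whole complex splits as a direct sum over all weights $I$ of the subcomplexes
\[
\cdots\xrightarrow{d_\pi} V_I^{k-1}\xrightarrow{d_\pi} V_I^{k}\xrightarrow{d_\pi} V_I^{k+1}\xrightarrow{d_\pi}\cdots,
\]
with $V_I^k\neq 0$ exactly for $|I|\le k\le n$, and it suffices to compute the cohomology of each.

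First I would compute $d_\pi$ on a weight-$I$ generator $\chi^I\cV_I\wedge\omega$ with $\omega\in W^{k-|I|}$. Writing $\chi^I\cV_I\wedge\omega=\chi^I\cdot(\cV_I\wedge\omega)$ and applying the graded Leibniz rule for the Schouten bracket, the term $[\pi,\cV_I\wedge\omega]$ vanishes: $\pi$, $\cV_I$ and $\omega$ are all built from the pairwise commuting vector fields $v_i$, so $[\pi,v_i]=0$ for every $i$, and Leibniz forces the bracket to be zero. Hence only $[\pi,\chi^I]\wedge(\cV_I\wedge\omega)$ survives. Using $v_i(\chi^I)=\langle e_i,I\rangle\chi^I=m_i\chi^I$ from \eqref{action-chi-eqn}, a direct calculation gives $[\pi,\chi^I]=\pm\,\chi^I\,\rho(\imath_I\Pi)$, where $\rho(\imath_I\Pi)=\sum_{i<j}a_{ij}(m_i v_j-m_j v_i)$. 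Setting $u_I=\rho(\imath_I\Pi)\in W$ and moving it past $\cV_I$, this yields
\[
d_\pi(\chi^I\cV_I\wedge\omega)=\pm\,\chi^I\,\cV_I\wedge(u_I\wedge\omega),
\]
so on the weight-$I$ subcomplex the differential is, up to sign, induced by wedging the $W$-factor with the single vector $u_I$.

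Next I would recognize each weight-$I$ subcomplex as a Koszul complex. Let $U\subseteq W$ be the span of the factors of $\cV_I$, an $|I|$-dimensional subspace, and let $\bar u_I$ be the image of $u_I$ in $W/U$. Since $\cV_I\wedge\omega=0$ iff $\omega\in U\wedge W^{k-|I|-1}$, the map $\omega\mapsto\chi^I\cV_I\wedge\omega$ identifies $V_I^k$ with $\wedge^{k-|I|}(W/U)$, under which the differential above becomes $\bar u_I\wedge\cdot$. Because $\rho$ restricts to an isomorphism $N_\C\xrightarrow{\sim}W$, we have $(\imath_I\Pi)\wedge\cE_I=0$ in $\wedge^{|I|+1}N_\C$ if and only if $u_I\wedge\cV_I=0$ in $\wedge^{|I|+1}W$, i.e. if and only if $\bar u_I=0$. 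Two cases then finish the argument: if $\bar u_I=0$ the differential is identically zero on the weight-$I$ subcomplex, so its cohomology at each level is $V_I^k$; if $\bar u_I\neq 0$ the subcomplex is (a twist of) the Koszul complex $\bigl(\wedge^\bullet(W/U),\,\bar u_I\wedge\cdot\bigr)$, which is exact, so the weight-$I$ contribution vanishes. Collecting the surviving weights, namely exactly those $I\in S_k$ with $(\imath_I\Pi)\wedge\cE_I=0$, gives $H^k_\pi(X)=\bigoplus_{I\in S_k(\pi)}V_I^k$.

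The step I expect to be the main obstacle is the bracket bookkeeping of the second paragraph: getting the Schouten bracket and its signs right and identifying $[\pi,\chi^I]$ with $\pm\chi^I\rho(\imath_I\Pi)$, together with the identification in the third paragraph of $V_I^k$ with $\wedge^{k-|I|}(W/U)$. The latter is where the wedge relations among the $v_i$ (in particular the relation coming from $v_0=-\sum_{i\ge1}v_i$) must be handled carefully, so that the reduction to an honest Koszul complex with differential $\bar u_I\wedge\cdot$ is genuinely valid and the acyclicity argument applies.
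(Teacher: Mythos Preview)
Your proposal is correct and follows essentially the same route as the paper: reduce via Lemma~\ref{LSX-lem} and Theorem~\ref{TheoemVanish}, split by weight using Theorem~\ref{CPn-multiVec-thm}, compute $d_\pi$ on each $V_I^k$ as wedging by $u_I=\rho(\imath_I\Pi)$, and then show each weight-$I$ subcomplex is acyclic unless $(\imath_I\Pi)\wedge\cE_I=0$. The only cosmetic difference is packaging: you pass to the quotient $W/U$ and invoke Koszul acyclicity for $\bar u_I\wedge\cdot$, whereas the paper carries out that same linear algebra by hand in Lemma~\ref{Poisson-Coh-lem3} (writing $w=\rho(\imath_I\Pi)\wedge w_0+\sum_s v_{i_s}\wedge w_s$ to exhibit a primitive). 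Your worry about the relation $v_0=-\sum_{i\ge1}v_i$ is harmless here: since $|I|\le n$, the vectors $v_{i_1},\dots,v_{i_{|I|}}$ are linearly independent in the $n$-dimensional space $W$, so $U$ really has dimension $|I|$ and the identification $V_I^k\cong\wedge^{k-|I|}(W/U)$ goes through; the full Koszul complex has length $n-|I|=\dim(W/U)$, hence is exact whenever $\bar u_I\neq0$.
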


\begin{remark}\label{CPn-cohomology-rem}
\begin{enumerate}
\item $S_k(\pi)$ is a subset of $S_k$ consisting of all $I\in S_k$ satisfying Equation \eqref{mI-eqn}. By Theorem \ref{CPn-multiVec-thm},
$\bigoplus_{I\in S_k(\pi)}V_I^k$ 
is a subspace of $H^{0}(X,\wedge^{k}T_{X})$.  
In Theorem \ref{CPn-cohomology-thm}, the elements in 
$\bigoplus_{I\in S_k(\pi)}V_I^k$ 
represent the cohomology classes in the quotient space
$$\frac{\ker: H^{0}(X,\wedge^{k}T_{X})\xrightarrow{d_{\pi}}H^{0}(X,\wedge^{k+1}T_{X})}{\Ima: H^{0}(X,\wedge^{k-1}T_{X})\xrightarrow{d_{\pi}}H^{0}(X,\wedge^{k}T_{X})}.$$
\item For $k=0$, $H_\pi^0(X)=\C$ consist of complex constants.
\item For $k=n$, Theorem \ref{CPn-cohomology-thm} can be state in the following equivalent way:
\begin{equation}
H_{\pi}^n(X)=\oplus_{I}\C \chi^I \cdot v_1\wedge...\wedge v_n
\end{equation}
for all $I\in S_n\subset M$ satisfying one of the following conditions
\begin{align}
\begin{cases}
|I|=n,\\
(\imath_{I}\Pi)\wedge \cE_I=0.
\end{cases}
\end{align}
\end{enumerate}
\end{remark}

Let us denoted $S(i,\pi)$ as the set of all $I\in M$ satisfying $|I|=i$ and the conditions \eqref{CPn-cohomology-thm-eqn2} and \eqref{mI-eqn}.
By the similar method as we have done in Proposition \ref{CPn-multiVec-prop}, 
we can prove the following proposition.

\begin{proposition}\label{CPn-cohomology-prop}
Let $\pi$ be a holomorphic toric Poisson structure on $X=\CP^n$. 
For $1\leq k\leq n$, we have
\begin{equation}\label{CPn-cohomology-thm-eqn1}
H_{\pi}^k(X)=(H_{\pi}^{k-1}(X)\wedge W)\oplus(\bigoplus_{I\in S(k,\pi)}\C (\chi^I\cdot \cV_I)),
\end{equation}
where $S(k,\pi)$ is the set of all $I\in M$ satisfying $|I|=k$ and the conditions \eqref{CPn-cohomology-thm-eqn2} and \eqref{mI-eqn}.
\end{proposition}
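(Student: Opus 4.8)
The plan is to mimic the proof of Proposition \ref{CPn-multiVec-prop}, upgrading that decomposition from the full cochain space $H^0(X,\wedge^k\mathcal{T}_X)$ to the cohomology $H^k_\pi(X)$. The crucial structural input I would exploit is the observation in Theorem \ref{CPn-cohomology-thm} that $S_k(\pi)=S_{k-1}(\pi)\cup S(k,\pi)$, paralleling the relation $S_k=S_{k-1}\cup S(k)$, together with the fact that every weight space $V_I^k=\C(\chi^I\cdot\cV_I)\wedge W^{k-|I|}$ is, by Theorem \ref{CPn-multiVec-thm}, already the weight space for the $(\C^*)^n$-representation. So the entire argument takes place weight-by-weight inside the already-established decomposition of Theorem \ref{CPn-cohomology-thm}.

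\textbf{Main steps.} First I would split the index set according to $|I|$: by Theorem \ref{CPn-cohomology-thm},
\begin{equation*}
H^k_\pi(X)=\bigoplus_{I\in S_k(\pi)}V_I^k
=\Bigl(\bigoplus_{I\in S_{k-1}(\pi)}V_I^k\Bigr)\oplus\Bigl(\bigoplus_{I\in S(k,\pi)}V_I^k\Bigr).
\end{equation*}
For the second summand, note that $I\in S(k,\pi)$ means $|I|=k$, so $V_I^k=\C(\chi^I\cdot\cV_I)\wedge W^{k-|I|}=\C(\chi^I\cdot\cV_I)\wedge W^0=\C(\chi^I\cdot\cV_I)$, which produces exactly the term $\bigoplus_{I\in S(k,\pi)}\C(\chi^I\cdot\cV_I)$ in the claimed formula. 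For the first summand I would use the identity $W^{k-|I|}=W^{k-1-|I|}\wedge W$, valid since $|I|\leq k-1$ for $I\in S_{k-1}(\pi)$, to rewrite
\begin{equation*}
\bigoplus_{I\in S_{k-1}(\pi)}\C(\chi^I\cdot\cV_I)\wedge W^{k-|I|}
=\Bigl(\bigoplus_{I\in S_{k-1}(\pi)}\C(\chi^I\cdot\cV_I)\wedge W^{k-1-|I|}\Bigr)\wedge W
=H^{k-1}_\pi(X)\wedge W,
\end{equation*}
where the last equality is again Theorem \ref{CPn-cohomology-thm} applied in degree $k-1$. Combining the two summands gives the asserted formula \eqref{CPn-cohomology-thm-eqn1}.

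\textbf{The main obstacle.} The one place requiring genuine care is the step $\bigl(\bigoplus_{I\in S_{k-1}(\pi)}V_I^{k-1}\bigr)\wedge W=H^{k-1}_\pi(X)\wedge W$: I must verify that wedging a representative cocycle with $W$ is compatible with passing to cohomology classes, i.e.\ that the indexing condition \eqref{mI-eqn}, namely $(\imath_I\Pi)\wedge\cE_I=0$, depends only on $I$ and not on the factor from $W^{k-1-|I|}$ that we are enlarging. This is exactly the delicacy that makes the superficially ``same method'' argument honest: the defining condition for $S_{k-1}(\pi)$ is identical to that for $S(k,\pi)$ except for the bound on $|I|$, so the weight $I$ alone determines membership, and the extra wedge factor $W$ only changes the degree $k$, never the condition \eqref{mI-eqn}. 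Once this compatibility is noted, the proof is a formal rearrangement of direct sums. I expect no essential difficulty beyond recording this point carefully; the whole proof is short and parallels Proposition \ref{CPn-multiVec-prop} line for line.
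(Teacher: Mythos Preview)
Your proposal is correct and follows exactly the approach the paper intends: the paper states only that Proposition~\ref{CPn-cohomology-prop} is proved ``by the similar method as we have done in Proposition~\ref{CPn-multiVec-prop}'', and your argument is precisely that method, with Theorem~\ref{CPn-cohomology-thm} replacing Theorem~\ref{CPn-multiVec-thm} and the index set $S_k(\pi)=S_{k-1}(\pi)\cup S(k,\pi)$ replacing $S_k=S_{k-1}\cup S(k)$. Your discussion of the ``obstacle'' is slightly overstated: by Remark~\ref{CPn-cohomology-rem}, $H^{k-1}_\pi(X)$ is literally identified with the subspace $\bigoplus_{I\in S_{k-1}(\pi)}V_I^{k-1}$ of $H^0(X,\wedge^{k-1}T_X)$, so the wedge $H^{k-1}_\pi(X)\wedge W$ is performed at the level of multi-vector fields and no compatibility with cohomology classes needs to be checked.
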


We need some lemmas to prove Theorem \ref{CPn-cohomology-thm}.
\begin{lemma}\label{Poisson-Coh-lem1}
Let $\pi=\sum_{1\leq i<j\leq n}a_{ij} v_i\wedge v_j$ be a holomorphic toric Poisson structure on $X=\CP^n$. 
For any $I\in M$, we have that
\begin{gather}
[\pi,\chi^I]=\chi^I\cdot\rho(\imath_{I}\Pi) \label{PoCoh-lemma-eqn1};\\
[\pi, \chi^I \cdot \cV_I]=\rho(\imath_{I}\Pi)\wedge(\chi^I\cdot \cV_I)
=\chi^I\cdot\rho((\imath_{I}\Pi)\wedge \cE_I)\label{PoCoh-lemma-eqn2};\\
[\pi,\chi^I\cdot \cV_I\wedge w]=\rho(\imath_{I}\Pi)\wedge(\chi^I\cdot \cV_I\wedge w)\label{PoCoh-lemma-eqn2.5};
\end{gather}
where $w\in W^{k-|I|}$ and $|I|\leq k\leq n$.
\end{lemma}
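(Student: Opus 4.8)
The plan is to prove all three identities by direct computation with the Schouten bracket, exploiting two structural facts established earlier: that $\pi = \rho(\Pi)$ with $\Pi \in \wedge^2 N_\C$, and that the generating vector fields $v_i = \rho(e_i)$ pairwise commute because $T_N$ is abelian (as used in the proof of Proposition following \ref{ToricPoissonPro}). First I would recall the graded Leibniz rule for the Schouten bracket, $[\pi, A\wedge B] = [\pi,A]\wedge B + (-1)^{(|\pi|-1)|A|}A\wedge[\pi,B]$, and the fact that on a function $f$ the bracket $[\pi, f] = -d_\pi$-type contraction equals the Hamiltonian vector field $\pi^\sharp(df)$, i.e.\ $[\pi,f]=\imath_{df}\pi$ up to sign. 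Everything reduces to understanding how $\pi$ brackets against the monomial $\chi^I$ and against the vector fields $v_{i_s}$.

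\textbf{Proving \eqref{PoCoh-lemma-eqn1}.} For the first identity I would compute $[\pi,\chi^I]$ by expanding $\pi = \sum a_{ij} v_i\wedge v_j$ and using $[\pi,\chi^I] = \sum_{i<j} a_{ij}\big(v_i(\chi^I)\, v_j - v_j(\chi^I)\, v_i\big)$. The key input is Equation \eqref{action-chi-eqn}, $\rho(e)(\chi^m)=\langle e,m\rangle\chi^m$, which gives $v_i(\chi^I) = \langle e_i, I\rangle \chi^I = m_i\chi^I$. Substituting, the whole expression factors as $\chi^I$ times a constant-coefficient vector field, and I expect it to collapse exactly to $\chi^I\cdot\rho(\imath_I\Pi)$ once one recognizes $\imath_I\Pi = \sum_{i<j}a_{ij}(m_i e_j - m_j e_i)$ under the pairing conventions. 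This is the cleanest step and sets the template for the rest.

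\textbf{Proving \eqref{PoCoh-lemma-eqn2} and \eqref{PoCoh-lemma-eqn2.5}.} For the second and third identities I would apply the Leibniz rule to $[\pi, \chi^I\cdot\cV_I]$, writing it as $[\pi,\chi^I]\wedge\cV_I + \chi^I\wedge[\pi,\cV_I]$. The second term vanishes: since $\cV_I = v_{i_1}\wedge\cdots\wedge v_{i_l}$ is a wedge of the commuting fields $v_{i_s}=\rho(e_{i_s})$ and $\pi$ is itself built from such fields, another application of Leibniz together with $[v_i,v_j]=0$ forces $[\pi,\cV_I]=0$. Hence $[\pi,\chi^I\cdot\cV_I] = [\pi,\chi^I]\wedge\cV_I = \chi^I\cdot\rho(\imath_I\Pi)\wedge\cV_I$ by \eqref{PoCoh-lemma-eqn1}. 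To match the stated right-hand side I would then use the $\rho$-compatibility of the wedge (the map $\wedge^k N_\C\to\XX^k(X)$ of \eqref{rho-def-eqn2}) to rewrite $\rho(\imath_I\Pi)\wedge\cV_I = \rho\big((\imath_I\Pi)\wedge\cE_I\big)$, giving $\chi^I\cdot\rho((\imath_I\Pi)\wedge\cE_I)$. The third identity \eqref{PoCoh-lemma-eqn2.5} is the same computation with an extra commuting factor $w\in W^{k-|I|}$, which again passes through the bracket untouched since $[\pi,w]=0$ for $w$ a wedge of the $v_i$.

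\textbf{Anticipated obstacle.} I expect no conceptual difficulty, but the main care is bookkeeping of signs and of the interaction between the function-factor $\chi^I$ and the multivector-factor under the graded Leibniz rule; in particular one must verify that $\rho(\imath_I\Pi)\wedge(\chi^I\cdot\cV_I)$ and $\chi^I\cdot\rho((\imath_I\Pi)\wedge\cE_I)$ genuinely agree, which hinges on $\imath_I\Pi$ being a well-defined element of $N_\C$ and on $\rho$ being an algebra homomorphism for the wedge. The one subtlety worth flagging is that $\chi^I$ is only a rational function on $\CP^n$, so strictly the identities are computations of global sections that happen to be regular; but since by Lemma \ref{MultiVectLem3} and Theorem \ref{CPn-multiVec-thm} each $\chi^I\cdot\cV_I$ is a genuine holomorphic multivector field, the bracket is computed on the open dense torus $T_N$ and extends, so no issue arises.
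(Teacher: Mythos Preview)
Your proposal is correct and follows essentially the same route as the paper: both argue by combining the action formula $v_i(\chi^I)=\langle e_i,I\rangle\chi^I$ with the vanishing $[v_i,v_j]=0$ and the graded Leibniz rule, then use that $\rho$ respects wedges (so $\rho(\cE_I)=\cV_I$) to pass between the two forms of \eqref{PoCoh-lemma-eqn2}. Your write-up is somewhat more explicit than the paper's, but the underlying argument is identical.
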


\begin{proof}
\begin{itemize}
\item [(I)] For any $I\in M$,  we have
\begin{equation}\label{PoCoh-lemma-eqn3}
[v_i, \chi^I]=v_i(\chi^I)=\langle e_i,I\rangle \chi^I 
\end{equation}
for $0\leq i\leq n$.
By Equation \eqref{PoCoh-lemma-eqn3},  we can prove Equation \eqref{PoCoh-lemma-eqn1} with a simple computation.

\item [(II)]  As $T_N\cong(\C^*)^n$ is commutative, we have
\begin{equation}\label{PoCoh-lemma-eqn4}
[v_i,v_j]=0
\end{equation}
 for $0\leq i,j\leq n$.

By Equation \eqref{PoCoh-lemma-eqn1} and Equation \eqref{PoCoh-lemma-eqn4}, we have
\begin{equation*}
[\pi, \chi^I \cdot \cV_I]=\rho(\imath_{I}\Pi)\wedge(\chi^I\cdot \cV_I).
\end{equation*}

As $\rho(\cE_I)=\cV_I$, we have 
\begin{gather*}
[\pi, \chi^I \cdot \cV_I]=\rho(\imath_{I}\Pi)\wedge (\chi^I\cdot \cV_I)
=\chi^I\cdot\rho((\imath_{I}\Pi)\wedge \cE_I).
\end{gather*}

\item [(III)] As a consequence of Equation \eqref{PoCoh-lemma-eqn2} and Equation \eqref{PoCoh-lemma-eqn4}, we can get Equation \eqref{PoCoh-lemma-eqn2.5}. 
\end{itemize}
\end{proof}

By Theorem \ref{CPn-multiVec-thm} and Lemma \ref{Poisson-Coh-lem1}, we have given a description of the operator $d_\pi$ in the complex
\begin{equation*}
H^{0}(X,\mathcal{O}_{X})\xrightarrow {d_{\pi}}
H^{0}(X,T_{X})\xrightarrow{d_{\pi}}H^{0}(X,\wedge^{2}T_{X})
\xrightarrow{d_{\pi}}
\ldots \xrightarrow{d_{\pi}}H^{0}(X,\wedge^{n}T_{X}).
\end{equation*}

\begin{lemma}\label{Poisson-Coh-lem2}
Let $\pi$ be a holomorphic toric Poisson structure on $X=\CP^n$.
We have
\begin{equation}
d_\pi(V_I^k)\subseteq V_I^{k+1}
\end{equation}
for all $I\in S_{k}$ $(0\leq k\leq n)$,  where
\begin{enumerate} 
\item $V_I^k$ is considered 
as a subspace of $H^{0}(X,\wedge^{k}T_{X})$;
\item $d_\pi(V_I^k)$ denotes the image of 
$V_I^k$ under the map 
$H^{0}(X,\wedge^k T_{X})\xrightarrow{d_{\pi}}H^{0}(X,\wedge^{k+1}T_{X})$; 
\item $V_I^{k+1}$ is considered as a subspace of 
$H^{0}(X,\wedge^{k+1}T_{X})$.
\end{enumerate}
\end{lemma}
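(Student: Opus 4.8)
The plan is to verify the containment $d_\pi(V_I^k)\subseteq V_I^{k+1}$ by a direct computation on a spanning set of $V_I^k$, using the explicit formulas for $d_\pi$ recorded in Lemma \ref{Poisson-Coh-lem1}. Recall that $V_I^k=\C(\chi^I\cdot\cV_I)\wedge W^{k-|I|}$, so a typical element has the form $\chi^I\cdot\cV_I\wedge w$ with $w\in W^{k-|I|}$. By definition $d_\pi=[\pi,\cdot]$, and Equation \eqref{PoCoh-lemma-eqn2.5} tells us precisely that
\begin{equation*}
d_\pi(\chi^I\cdot\cV_I\wedge w)=[\pi,\chi^I\cdot\cV_I\wedge w]=\rho(\imath_I\Pi)\wedge(\chi^I\cdot\cV_I\wedge w).
\end{equation*}
Thus the whole lemma reduces to understanding the single extra factor $\rho(\imath_I\Pi)$ that gets wedged on. Since $\imath_I\Pi\in N_\C$, its image $\rho(\imath_I\Pi)$ lies in $W=\rho(N_\C)$, which is exactly the space spanned by $v_1,\dots,v_n$ (here one uses $v_0=-\sum_{i=1}^n v_i$, so every $\rho(e)$ with $e\in N_\C$ lands in $W$).

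First I would rewrite the right-hand side as $\chi^I\cdot\cV_I\wedge\bigl(\rho(\imath_I\Pi)\wedge w\bigr)$, which is legitimate because $\rho(\imath_I\Pi)$ is a scalar-free vector field and wedge is graded-commutative among the $v_i$. The key observation is then that $\rho(\imath_I\Pi)\wedge w$ lies in $W^{k-|I|+1}=W^{(k+1)-|I|}$: indeed $w\in W^{k-|I|}$ and $\rho(\imath_I\Pi)\in W$, so their wedge sits in $W\wedge W^{k-|I|}\subseteq W^{k-|I|+1}$. Consequently
\begin{equation*}
d_\pi(\chi^I\cdot\cV_I\wedge w)\in\C(\chi^I\cdot\cV_I)\wedge W^{(k+1)-|I|}=V_I^{k+1},
\end{equation*}
which is exactly the desired containment. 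Because these elements span $V_I^k$ and $d_\pi$ is linear, this establishes $d_\pi(V_I^k)\subseteq V_I^{k+1}$ for every $I\in S_k$.

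The main point requiring care — though it is not so much an obstacle as a bookkeeping step — is to confirm that the index $I$ genuinely persists: the output is still a multiple of $\chi^I$ times $\cV_I$ wedged with something in the appropriate power of $W$, so the \emph{weight} $I$ is unchanged. This is what makes the statement natural: $d_\pi$ is a map of $(\C^*)^n$-representations by Lemma \ref{toric-PoCoh-lem}, hence must preserve weight spaces, and $V_I^k$, $V_I^{k+1}$ are the weight-$I$ pieces of consecutive terms in the complex by Theorem \ref{CPn-multiVec-thm}. I would mention this representation-theoretic interpretation as a sanity check, but the computational argument via Equation \eqref{PoCoh-lemma-eqn2.5} gives a self-contained proof. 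One should also note the degenerate cases: when $|I|=k$ the factor $w$ is a scalar (an element of $W^0=\C$), and Equation \eqref{PoCoh-lemma-eqn2} applies directly in place of \eqref{PoCoh-lemma-eqn2.5}, giving the same conclusion.
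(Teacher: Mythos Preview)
Your proposal is correct and follows essentially the same route as the paper: apply Equation \eqref{PoCoh-lemma-eqn2.5} to a generic element $\chi^I\cdot\cV_I\wedge w$, move the factor $\rho(\imath_I\Pi)\in W$ past $\cV_I$ (the paper records the resulting sign $(-1)^{|I|}$, which you omit but which is irrelevant for the containment), and observe that $\rho(\imath_I\Pi)\wedge w\in W^{k-|I|+1}$. Your additional remarks on the weight-space interpretation and the $|I|=k$ case are sound but not needed for the argument.
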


\begin{proof}
Let $\Psi=\chi^I\cdot \cV_I\wedge w$ be a holomorphic $k$-vector field in 
$V_I^k$, where $I\in S_k$ and $w\in W^{k-|I|}$.
By Lemma \ref{Poisson-Coh-lem1}, we have 
\begin{gather*}
d_\pi(\Psi)=[\pi,\chi^I\cdot \cV_I\wedge w]
=\chi^I\cdot\rho(\imath_{I}\Pi)\wedge(\cV_I\wedge w)\\
=(-1)^{|I|}\chi^I\cdot \cV_I\wedge (\rho(\imath_{I}\Pi)\wedge w).
\end{gather*}
As $\rho(\imath_{I}\Pi)\wedge w\in W^{k-|I|+1}$, $d_\pi(\Psi)$ is an element in 
$V_I^{k+1}$.

Thus 
$$d_\pi(V_I^k)\subseteq V_I^{k+1}
\quad \text{ for all} \quad I\in S_k.$$
\end{proof}

By Lemma \ref{Poisson-Coh-lem2}, we have the chain complex
\begin{equation}
\C (\chi^I\cdot \cV_I)\xrightarrow{d_\pi}\C (\chi^I\cdot \cV_I)\wedge W\xrightarrow{d_{\pi}}\C (\chi^I\cdot \cV_I)\wedge W^2\xrightarrow{d_\pi}\cdots\xrightarrow{d_\pi}\C (\chi^I\cdot \cV_I)\wedge W^{n-|I|}
\end{equation}
for all $I\in M$, where $\C (\chi^I\cdot \cV_I)\wedge W^{k}$ is considered as a subspace
of $H^0(X,\wedge^{|I|+k}TX)$ for $0\leq k\leq n-|I|$.

\begin{lemma}\label{Poisson-Coh-lem3}
Let $\pi$ be a holomorphic toric Poisson structure on $X=\CP^n$.
\begin{enumerate}
\item
 Given any $I\in S_k(\pi)$ $(0\leq k\leq n)$, \i.e.,
$I\in S_k$ satisfying the equation
$$(\imath_{I}\Pi)\wedge \cE_I=0,$$ 
we have
\begin{equation}
d_\pi(V_I^k)=0.
\end{equation}
\item
Given any $I\in S_{k-1}\subset S_k$ $(1\leq k\leq n)$ satisfying the equation
$$(\imath_{I}\Pi)\wedge \cE_I\neq0,$$ 
if $\Psi$ is a holomorphic $k$-vector field in $V_I^k$ satisfying 
$$d_\pi(\Psi)=0,$$
 then there exists a holomorphic $(k-1)$-vector field $\Phi$
 in $V_I^{k-1}$, such that
 $$\Psi=d_\pi(\Phi).$$  
 \end{enumerate}
\end{lemma}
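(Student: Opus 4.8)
The key structural fact, established in Lemma \ref{Poisson-Coh-lem2}, is that the differential $d_\pi$ preserves each weight space, so the whole complex splits as a direct sum over $I$ of the subcomplexes
\begin{equation*}
\C(\chi^I\cdot\cV_I)\xrightarrow{d_\pi}\C(\chi^I\cdot\cV_I)\wedge W\xrightarrow{d_\pi}\cdots\xrightarrow{d_\pi}\C(\chi^I\cdot\cV_I)\wedge W^{n-|I|}.
\end{equation*}
The plan is to analyze this single fixed-$I$ subcomplex, whose differential is given explicitly by Lemma \ref{Poisson-Coh-lem1}: on a chain $\chi^I\cdot\cV_I\wedge w$ with $w\in W^{k-|I|}$ one has
\begin{equation*}
d_\pi(\chi^I\cdot\cV_I\wedge w)=(-1)^{|I|}\chi^I\cdot\cV_I\wedge\bigl(\rho(\imath_I\Pi)\wedge w\bigr).
\end{equation*}
Writing $u=\rho(\imath_I\Pi)\in W\subseteq N_\C$ and $a=\rho(\cE_I)=\cV_I$, the differential is, up to the nonzero scalar $\chi^I$ and the sign, nothing but wedging with the fixed vector $u$ on the factor $W^{\bullet}$ sitting in $\C\,a\wedge W^{\bullet}$. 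So the whole problem reduces to understanding the Koszul-type complex $(W^\bullet,\,u\wedge\cdot)$ restricted to the image of $a\wedge\cdot$.

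\textbf{Part (1).} Suppose $I\in S_k(\pi)$, so $(\imath_I\Pi)\wedge\cE_I=0$, equivalently $u\wedge a=0$ after applying $\rho$ (using that $\rho$ is injective on the relevant weight piece, or arguing directly in $\wedge N_\C$). Then for any $w$ the image $a\wedge(u\wedge w)=\pm\,(u\wedge a)\wedge w=0$, and reordering the wedge in the displayed formula for $d_\pi$ shows $d_\pi(\chi^I\cdot\cV_I\wedge w)=0$. Hence $d_\pi(V_I^k)=0$. This is the short case and I expect no difficulty beyond bookkeeping the sign and the fact that $\cV_I\wedge(u\wedge w)$ vanishes when $u\wedge\cE_I=0$.

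\textbf{Part (2).} Now $I\in S_{k-1}\subseteq S_k$ with $u\wedge\cE_I\neq 0$, so in particular $u=\rho(\imath_I\Pi)$ is a nonzero vector \emph{not} lying in the span of $\cV_I$. Given $\Psi=\chi^I\cdot\cV_I\wedge w\in V_I^k$ with $d_\pi\Psi=0$, the formula above gives $\cV_I\wedge(u\wedge w)=0$ in $W^{\bullet}$. The heart of the argument is the linear-algebra exactness of the Koszul differential $u\wedge\cdot$ on exterior powers: since $u\neq 0$, $u\wedge(u\wedge w)=0$ always, and more precisely, if $\cV_I\wedge u\wedge w=0$ then one can solve for $w$. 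Concretely, extend $u$ to a basis of $W$ and decompose $w$ along whether each basis wedge factor contains $u$; the condition $\cV_I\wedge u\wedge w=0$ forces $w=u\wedge w'+(\text{terms killed after wedging with }\cV_I\wedge u)$, and then $\Phi:=(-1)^{|I|}\chi^I\cdot\cV_I\wedge w'$ satisfies $d_\pi\Phi=\Psi$, with $\Phi\in V_I^{k-1}$. \textbf{The main obstacle} is exactly this step: one must show that the cycle condition $\cV_I\wedge(u\wedge w)=0$ genuinely forces $w$ to be a boundary $u\wedge w'$ modulo the kernel that $\cV_I\wedge\cdot$ already annihilates, i.e.\ that the complex $(\C\,a\wedge W^\bullet,\,u\wedge\cdot)$ is exact in positive degree when $u\wedge a\neq0$. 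I would handle this by choosing an adapted basis $\{u=f_1,f_2,\dots\}$ of $W$ so that $u\wedge\cdot$ becomes the standard Koszul map, for which exactness of $0\to\Lambda^\bullet\xrightarrow{f_1\wedge}\Lambda^{\bullet+1}$ on the quotient is classical; the factor $\cV_I$ is unaffected since $u\notin\mathrm{span}(\cV_I)$, so no cancellation is lost. Care is needed only to track that the produced $\Phi$ lands in the correct weight space $V_I^{k-1}$, which is automatic because wedging preserves the $\chi^I\cdot\cV_I$ prefactor.
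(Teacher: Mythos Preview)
Your proposal is correct and follows essentially the same approach as the paper. Both proofs reduce to the identical linear-algebra fact: writing $u=\rho(\imath_I\Pi)$ and $l=|I|$, the hypothesis $(\imath_I\Pi)\wedge\cE_I\neq 0$ says that $u,v_{i_1},\dots,v_{i_l}$ are linearly independent in $W$, and then $u\wedge v_{i_1}\wedge\cdots\wedge v_{i_l}\wedge w=0$ forces $w=u\wedge w_0+\sum_s v_{i_s}\wedge w_s$, after which only the $u\wedge w_0$ term survives wedging with $\cV_I$ and one sets $\Phi=(-1)^{|I|}\chi^I\cdot\cV_I\wedge w_0$. The paper states this decomposition of $w$ directly (``by simple linear algebra''), while you phrase it as Koszul exactness; the content is the same. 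One small sharpening: where you write ``extend $u$ to a basis of $W$'', you really want to extend the full independent set $\{u,v_{i_1},\dots,v_{i_l}\}$ to a basis, since the ``terms killed after wedging with $\cV_I\wedge u$'' are precisely those involving some $v_{i_s}$---this is implicit in your parenthetical but worth making explicit.
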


\begin{proof}
\begin{enumerate}
\item For any $I\in S_k$ satisfying the equation
$$(\imath_{I}\Pi)\wedge \cE_I=0,$$
we have
$$\rho((\imath_{I}\Pi)\wedge \cE_I)=\rho(\imath_{I}\Pi)\wedge \cV_I=0.$$
By Equation \eqref{PoCoh-lemma-eqn2.5} in Lemma \ref{Poisson-Coh-lem1}, 
we have  
$$d_\pi(V_I^k)=0.$$

\item Suppose that $\Psi=\chi^I\cdot \cV_I\wedge w$, 
where $I\in S_{k-1}\subseteq S_k$ and $w\in W^{k-|I|}$.

By Lemma \ref{Poisson-Coh-lem1}, we have
\begin{gather*}
d_{\pi}(\Psi)=[\pi,\chi^I\cdot \cV_I\wedge w]
=\rho(\imath_{I}\Pi)\wedge(\chi^I\cdot \cV_I\wedge w)\\
=\chi^I\cdot\rho(\imath_{I}\Pi)\wedge v_{i_1}\wedge\ldots\wedge v_{i_l}\wedge w.
\end{gather*}

The condition $d_{\pi}(\Psi)=0$ implies 
$$\rho(\imath_{I}\Pi)\wedge v_{i_1}\wedge\ldots\wedge v_{i_l}\wedge w=0.$$
And the condition $(\imath_{I}\Pi)\wedge \cE_I\neq0$ implies that
\begin{itemize}
\item $\imath_{I}\Pi$ and $e_{i_1},e_{i_2}\ldots e_{i_l}$ are
$\C$-linear independent vectors in $N_\C$; 
\item $\rho(\imath_{I}\Pi)$ and $v_{i_1},v_{i_2}\ldots v_{i_l}$ are
$\C$-linear independent vectors in $W=\rho(N_\C)$. 
\end{itemize}

By simple linear algebra we know that $w\in W^{k-|I|}$ can be written as
$$w=\rho(\imath_{I}\Pi)\wedge w_0+\sum_{s=1}^{l}v_{i_s}\wedge w_i,$$  
where $w_0,w_1,\ldots,w_l$ are elements in $W^{k-|I|-1}$.
As a consequence, we have
\begin{align*} 
\Psi&=\chi^I\cdot \cV_I\wedge w\\
&=\chi^I(v_{i_1}\wedge v_{i_2}\wedge\ldots\wedge v_{i_l})\wedge
(\rho(\imath_{I}\Pi)\wedge w_0+\sum_{s=1}^{l}v_{i_s}\wedge w_s)\\
&=\chi^I(v_{i_1}\wedge v_{i_2}\wedge\ldots\wedge v_{i_l})\wedge
(\rho(\imath_{I}\Pi)\wedge w_0)\\
&=(-1)^{|I|}\rho(\imath_{I}\Pi)\wedge(\chi^I\cdot \cV_I\wedge w_0)
\end{align*}

Let $\Phi=(-1)^{|I|}\chi^I\cdot \cV_I\wedge w_0$.
Then $\Phi$ is a holomorphic $(k-1)$-vector field in the space 
$V_I^{k-1}$. 
And by Lemma \ref{Poisson-Coh-lem1}, we have
$$\Psi=d_{\pi}(\Phi).$$
\end{enumerate}
\end{proof}

{\bf Proof of Theorem \ref{CPn-cohomology-thm}:}
\begin{proof}
\begin{enumerate}
\item In the case of $k=0$, $S_0$ consists of only $(0,...0)\in M$, 
and $W^{k-|I|}=W^0=\C$. 
Hence we have $$H_\pi^0(X)=\C.$$

\item In the case of $1\leq k\leq n$, by Theorem \ref{CPn-multiVec-thm}, 
any holomorphic $k$-vector field $\Psi\in H^{0}(X,\wedge^{k}T_{X})$ can be written as
$$\Psi=\sum_{I\in S_k}\Psi_I,\qquad \Psi_I \in V_I^k.$$ 
As $$d_\pi(\Psi)=\sum_{I\in S_k}d_\pi(\Psi_I),$$
by Lemma \ref{Poisson-Coh-lem2} and Theorem \ref{CPn-multiVec-thm}, we have that
$$d_\pi(\Psi)=0\Longleftrightarrow d_\pi(\Psi_I)=0\quad\text{for all}\quad I\in S_k.$$
\begin{enumerate}
\item For any $I\in S_{k-1}\subseteq S_k$ satisfying 
$(\imath_{I}\Pi)\wedge \cE_I\neq0$, by Lemma \ref{Poisson-Coh-lem3}, 
$d_\pi(\Psi_I)=0$ imples that there exist 
$\Phi_I\in V_I^{k-1}$ such that
$\Psi_I=d_\pi(\Phi_I).$ Thus there exists only zero Poisson cohomology class in 
$V_I^k$ for $I\in S_{k-1}\subseteq S_k$ satisfying 
$(\imath_{I}\Pi)\wedge \cE_I\neq0$.
\item For any $I\in S_{k-1}\subseteq S_k$ satisfying 
$(\imath_{I}\Pi)\wedge \cE_I=0$, i.e., $I\in S_{k-1}(\pi)$,  
by Lemma \ref{Poisson-Coh-lem3}, we have that
$$d_\pi(V_I^{k-1})=0
\quad\text{and}\quad d_\pi(V_I^k)=0.$$
By Lemma \ref{Poisson-Coh-lem3} and Theorem \ref{CPn-multiVec-thm},
any nonzero element $\Psi_I\in V_I^k$
satisfying $I\in S_{k-1}(\pi)$ 
represents a nonzero cohomology class in the Poisson cohomology group.
Moreover, the space
$$\bigoplus_{I\in S_{k-1}(\pi)}V_I^k$$
can be considered as a subgroup of the Poisson cohomology group $H_{\pi}^k(X)$.
\item For any $I\in S_k(\pi)\backslash S_{k-1}(\pi)$, i.e, $I\in S(k)$ satisfying
$(\imath_{I}\Pi)\wedge \cE_I=0$, by Lemma \ref{Poisson-Coh-lem3}, we have
$$d_\pi(V_I^k)=0.$$
By Theorem \ref{CPn-multiVec-thm}, we have
$$H^{0}(X,\wedge^{k-1}T_{X})=\bigoplus_{I\in S_{k-1}}V_I^{k-1}.$$
And by Lemma \ref{Poisson-Coh-lem2}, we have 
$$d_\pi(H^{0}(X,\wedge^{k-1}T_{X}))\subseteq
\bigoplus_{I\in S_{k-1}}V_I^k.$$
Thus any nonzero element $\Psi_I\in V_I^k$
satisfying $I\in S_{k}(\pi)\backslash S_{k-1}(\pi)$ represents a nonzero 
Poisson cohomology class.
Moreover, the space
 $$\bigoplus_{I\in S_k(\pi)\backslash S_{k-1}(\pi)}V_I^k$$
can be considered as a subgroup of the Poisson cohomology group $H_{\pi}^k(X)$.
\end{enumerate}
By the arguments above, we have 
$$H_{\pi}^k(X)=\bigoplus_{I\in S_k(\pi)}V_I^k$$
for $1\leq k\leq n$.

\item In the case of $k>n$, $H_{\pi}^k(X)=0$ comes directly from 
Lemma \ref{LSX-lem} and Theorem \ref{TheoemVanish}.
\end{enumerate}
\end{proof}

\subsection{Poisson cohomology of the standard Poisson structure on $\CP^n$}

By Proposition \ref{STPoisson-prop}, the standard Poisson structure on 
$X=\CP^n$ can be written as
$$\pi_{st}=\sum_{1\leq i<j\leq n}v_i\wedge v_j,$$
where $v_i=p_*(z_i\frac{\partial}{\partial z_i})=\rho(e_i)$ $(0\leq i\leq n)$.
And we have $\Pi_{st}=\sum_{1\leq i<j\leq n} e_i\wedge e_j\in\wedge^2 N_\C$.

We can apply Theorem \ref{CPn-cohomology-thm} to compute the Poisson cohomology 
of the standard Poisson structure on $\CP^n$.
Here we only list the Poisson cohomology groups in the case $n=2$ and $n=3$. 
For other cases it could be done similarly, but the computation will be more complicated. 

\begin{proposition}\label{CP2-coh-prop}
The standard Poisson structure on $X=\CP^2$ can be written as
$$\pi_{st}= v_1\wedge v_2.$$
The Poisson cohomology groups of $(X, \pi_{st})$ are as following:
\begin{enumerate}
\item $H_{\pi_{st}}^0(X)=\C$, $\dim H_{\pi_{st}}^0(X)=1$.
\item $H_{\pi_{st}}^1(X)$ has a basis $\{v_1, v_2\}$, $\dim H_{\pi_{st}}^1(X)=2$.
\item $H_{\pi_{st}}^2(X)$ has a basis 
$\{(z_0^{m_0}z_1^{m_1}z_2^{m_2} )v_1\wedge v_2\}$ with  
$(m_0,m_1,m_2)$ in the set
\[
\left\{
\begin{array}{ccc}
(0,0,0), &  & \\
(-1,-1,2), &(-1,2,-1), & (2,-1,-1)
\end{array}
\right\}.
\]
Hence $\dim H_{\pi_{st}}^2(X)=4$.
\item $H_{\pi_{st}}^k(X)=0$ for $k>2$.
\end{enumerate}
\end{proposition}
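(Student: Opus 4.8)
The plan is to apply Theorem \ref{CPn-cohomology-thm} directly, since $\CP^2$ is a small enough case that the weight-space decomposition can be checked weight by weight. The input data are $\Pi_{st} = e_1 \wedge e_2 \in \wedge^2 N_\C$, so that for any $I = (m_1, m_2) \in M$ the contraction $\imath_I \Pi_{st} = m_1 e_2 - m_2 e_1 \in N_\C$. Recall $m_0 = -(m_1+m_2)$, and the relevant constraint is $m_i \geq -1$ for $i = 0,1,2$. First I would enumerate, for each cohomological degree $k \in \{0,1,2\}$, the set $S_k(\pi_{st})$: those $I \in M$ with $m_i \geq -1$, $|I| \leq k$, and satisfying the cocycle equation $(\imath_I \Pi_{st}) \wedge \cE_I = 0$. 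Then by \eqref{CPn-cohomology-thm-eqn1} the group $H^k_{\pi_{st}}(X)$ is $\bigoplus_{I \in S_k(\pi_{st})} V_I^k$, and reading off a basis of each $V_I^k = \C(\chi^I \cdot \cV_I) \wedge W^{k-|I|}$ gives the answer.

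For the low degrees the count is immediate. In degree $k=0$ we have $|I| \leq 0$, forcing $I = (0,0)$, and $V^0_{(0,0)} = \C$, giving $H^0 = \C$. In degree $k=1$ the admissible weights with $|I| \leq 1$ and $m_i \geq -1$ that satisfy the cocycle condition must be checked; here the key observation is that $\cE_I$ has degree $|I| \leq 1$, so $(\imath_I\Pi_{st})\wedge\cE_I$ is an element of $\wedge^1 N_\C$ or $\wedge^2 N_\C$, and one verifies that for $|I| = 0$ the weight $I = (0,0)$ contributes $V^1_{(0,0)} = W$, spanned by $\{v_1, v_2\}$, while no new weights with $|I| = 1$ survive to give extra degree-one classes beyond those already counted as $H^0 \wedge W$. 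This yields $\dim H^1 = 2$ with basis $\{v_1, v_2\}$.

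The substance is in degree $k = 2 = n$, where by Remark \ref{CPn-cohomology-rem}(3) I only need the weights $I$ with $|I| = 2$ satisfying $(\imath_I\Pi_{st}) \wedge \cE_I = 0$, together with the weights already present from lower degree via the $H^1 \wedge W$ summand. The condition $|I| = 2$ means exactly two of $m_0, m_1, m_2$ equal $-1$ (so $\cE_I = e_{i_1}\wedge e_{i_2}$ spans a $2$-plane), and since $\imath_I\Pi_{st} \in N_\C$ and $\wedge^2 N_\C$ has the single generator $e_1 \wedge e_2$, the wedge $(\imath_I\Pi_{st})\wedge \cE_I$ lives in $\wedge^3 N_\C = 0$ automatically when $|I| = 2$; thus the cocycle equation is vacuous and every such $I$ contributes. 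Enumerating pairs of the $m_i$ equal to $-1$ with the third determined by $m_0 + m_1 + m_2 = 0$ gives precisely the three weights $(-1,-1,2)$, $(-1,2,-1)$, $(2,-1,-1)$ in coordinates $(m_0,m_1,m_2)$, each contributing $\C\,\chi^I v_1 \wedge v_2$; adding the surviving weight $(0,0,0)$ from the $|I|=0$ stratum gives the four basis vectors and $\dim H^2 = 4$.

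\textbf{The main obstacle} I anticipate is the careful bookkeeping at the boundary between strata: one must be sure that no weight is double-counted between the $H^1 \wedge W$ part of the decomposition (Proposition \ref{CPn-cohomology-prop}) and the genuinely new degree-two weights, and that the weights with $|I| = 1$ are correctly excluded from contributing independent classes in degree $2$. Concretely, I would verify for each candidate $I$ with $|I| = 1$ whether $(\imath_I\Pi_{st})\wedge\cE_I$ vanishes — if it does not, Lemma \ref{Poisson-Coh-lem3}(2) shows the corresponding classes are coboundaries and contribute nothing new — and cross-check the final list against the direct constraint $m_i \geq -1$, $\sum m_i = 0$ to confirm no admissible weight has been missed. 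The rest is routine substitution into $\chi^I = z_0^{m_0} z_1^{m_1} z_2^{m_2}$.
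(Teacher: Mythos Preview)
Your proposal is correct and follows exactly the approach the paper intends: the paper gives no separate proof of this proposition, simply stating that one applies Theorem~\ref{CPn-cohomology-thm} (and Remark~\ref{CPn-cohomology-rem}(3) for top degree), which is precisely what you do. Your key observations---that $\imath_I\Pi_{st}=m_1e_2-m_2e_1$, that the cocycle condition is vacuous for $|I|=2$ since $\wedge^3 N_\C=0$, and that the six candidate weights with $|I|=1$ all fail $(\imath_I\Pi_{st})\wedge\cE_I=0$ (this last point is exactly Lemma~\ref{ST-thm-lem} specialized to $n=2$)---together yield the stated bases with no gaps.
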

The Proposition \ref{CP2-coh-prop} verified the results about Poisson cohomology of $\CP^2$ in \cite{Hong-Xu 11}.

\begin{proposition}\label{CP3-coh-prop}
The standard Poisson structure on $X=\CP^3$ can be written as
$$\pi_{st}= v_1\wedge v_2+v_1\wedge v_3+v_2\wedge v_3.$$
The Poisson cohomology groups of $(X, \pi_{st})$ are as following:
\begin{enumerate}
\item $H_{\pi_{st}}^0(X)=\C$, $\dim H_{\pi_{st}}^0(X)=1$.
\item $H_{\pi_{st}}^1(X)$ has a basis $\{v_1, v_2, v_3\}$, $\dim H_{\pi_{st}}^1(X)=3$.
\item $H_{\pi_{st}}^2(X)$ has a basis, which is the union of the three parts:
\begin{enumerate}
\item $\{v_1\wedge v_2,\quad v_1\wedge v_3, \quad v_2\wedge v_3\}$,
\item $\{(z_0^{m_0}z_1^{m_1}z_2^{m_2} z_3^{m_3})v_0\wedge v_2\}$ with
$(m_0,m_1,m_2,m_3)$ in the set 
\[
\left\{
\begin{array}{ccc}
(-1, 1,,-1,1),& (-1,2,-1,0),& (-1,0,-1,2)
\end{array}
\right\}
\]
and $v_0=-\sum_{i=1}^3 v_i$,
\item $\{(z_0^{m_0}z_1^{m_1}z_2^{m_2} z_3^{m_3})v_1\wedge v_3\}$ with 
$(m_0,m_1,m_2,m_3)$ in the set 
\[
\left\{
\begin{array}{ccc}
(1, -1,,1,-1),& (2,-1,0,-1),& (0,-1,2,-1)
\end{array}
\right\}.
\]
\end{enumerate}
Hence $\dim H_{\pi_{st}}^2(X)=9$.
\item $H_{\pi_{st}}^3(X)$ has a basis
$\{(z_0^{m_0}z_1^{m_1}z_2^{m_2}z_3^{m_3} )v_1\wedge v_2\wedge v_3\}$ with  
$(m_0,m_1,m_2,m_3)$ in the set
\[\left\{
\begin{array} {cccc}
(0,0,0,0), & & & \\
(-1, 1,,-1,1),  & (-1,2,-1,0), & (-1,0,-1,2), &\\
(1, -1,1,-1), &(2,-1,0,-1),  &(0,-1,2,-1),&  \\
(-1,-1,-1,3), & (-1,-1,3,-1), & (-1, 3,-1,-1), & (3,-1,-1,-1)
\end{array}
\right\}.\]
Hence $\dim H_{\pi_{st}}^3(X)=11$. 
\item $H_{\pi_{st}}^k(X)=0$ for $k>3$.
\end{enumerate}
\end{proposition}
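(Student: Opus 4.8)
The plan is to feed the weight $\Pi_{st}=e_1\wedge e_2+e_1\wedge e_3+e_2\wedge e_3\in\wedge^2 N_\C$ into Theorem \ref{CPn-cohomology-thm}, or equivalently into its recursive form Proposition \ref{CPn-cohomology-prop}, so that the whole computation becomes a finite enumeration of weights. For each degree $k\in\{0,1,2,3\}$ I must determine the set $S_k(\pi_{st})$ of all $I=(m_1,m_2,m_3)\in M$, with $m_0=-(m_1+m_2+m_3)$, satisfying $m_i\geq -1$ for $0\leq i\leq 3$, the bound $|I|\leq k$, and the vanishing relation \eqref{mI-eqn}. The surviving weights then assemble the claimed bases through the spaces $V_I^k=\C(\chi^I\cdot\cV_I)\wedge W^{k-|I|}$.

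The first step is to make \eqref{mI-eqn} explicit. Contracting $I$ against each term of $\Pi_{st}$ via $\imath_I(e_a\wedge e_b)=m_a e_b-m_b e_a$ gives
\[
\imath_I\Pi_{st}=-(m_2+m_3)\,e_1+(m_1-m_3)\,e_2+(m_1+m_2)\,e_3.
\]
The key reformulation is that $(\imath_I\Pi_{st})\wedge\cE_I=0$ holds exactly when $\imath_I\Pi_{st}$ lies in the subspace of $N_\C$ spanned by those $e_i$ with $m_i=-1$; this converts \eqref{mI-eqn} into a linear membership test that I can apply weight by weight. I would also record the finiteness mechanism: if $|I|=l$ then precisely $l$ of the four integers $m_0,\dots,m_3$ equal $-1$ and the remaining $4-l$ are nonnegative integers summing to $l$, so only finitely many candidates occur at each level.

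Next I would run the enumeration degree by degree using the recursion $H_{\pi}^k(X)=(H_{\pi}^{k-1}(X)\wedge W)\oplus\bigoplus_{I\in S(k,\pi)}\C(\chi^I\cdot\cV_I)$ from Proposition \ref{CPn-cohomology-prop}. Degree $0$ forces $I=0$, so $H^0=\C$. In degree $1$ the inherited part is $H^0\wedge W=W=\langle v_1,v_2,v_3\rangle$, and a direct check shows every weight with $|I|=1$ violates the membership test, giving $\dim H^1=3$. In degree $2$ the inherited part is $W^2$, the three classes $v_i\wedge v_j$, and among the six pairs of positions that can carry the two $-1$'s only the families supported on $\{0,2\}$ and on $\{1,3\}$ pass the test, producing the six new classes in parts (b) and (c). In degree $3$ I would invoke the top-degree form in Remark \ref{CPn-cohomology-rem}(3): because $\dim N_\C=3$ the product $(\imath_I\Pi_{st})\wedge\cE_I$ lives in $\wedge^4 N_\C=0$ whenever $|I|=3$, so \eqref{mI-eqn} is automatic and all four weights with a single coordinate equal to $3$ survive; together with the seven inherited classes (all proportional to $v_1\wedge v_2\wedge v_3$) from $H^2\wedge W$ this yields $\dim H^3=11$.

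The main obstacle is not a single hard idea but the careful bookkeeping of the membership test across all candidates, made delicate by the frequent appearance of $e_0$ and $v_0$ in $\cE_I$ and $\cV_I$: before testing or exhibiting a class one must rewrite these through $e_0=-\sum_{i=1}^3 e_i$ and $v_0=-\sum_{i=1}^3 v_i$. I would also have to confirm that each surviving weight contributes exactly the dimension claimed --- here each new weight space is one-dimensional since $k=|I|$ for the new classes --- and that distinct weights give distinct monomials $\chi^I$, so that the displayed sets are genuinely linearly independent. As an arithmetic safeguard, the alternating sum $1-3+9-11$ of the $\dim H_{\pi}^k(X)$ must match the alternating sum of $\dim H^0(X,\wedge^k\mathcal{T}_X)$ read off from Theorem \ref{CPn-multiVec-thm}, which provides an independent check on the enumeration.
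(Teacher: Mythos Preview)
Your proposal is correct and follows exactly the approach the paper indicates: the paper does not give an explicit proof of this proposition but simply states that one applies Theorem \ref{CPn-cohomology-thm} (and implicitly Proposition \ref{CPn-cohomology-prop}), which is precisely what you carry out in detail. Your degree-$1$ check is the $n=3$ instance of the paper's Lemma \ref{ST-thm-lem}, and your top-degree observation that $(\imath_I\Pi_{st})\wedge\cE_I\in\wedge^4 N_\C=0$ for $|I|=3$ is the content of Remark \ref{CPn-cohomology-rem}(3); the Euler-characteristic sanity check you propose is a nice addition not present in the paper.
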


For general $\CP^n$ equipped with the standard Poisson structure, we can compute the Poisson cohomology groups by Theorem \ref{CPn-cohomology-thm}. Here we have a result for the first Poisson cohomology group of $(\CP^n, \pi_{st})$.

\begin{theorem}\label{ST-thm}
For $X=\CP^{n}$ equipped with the standard Poisson structure
$$\pi_{st}=\sum_{1\leq i<j\leq n}v_i\wedge v_j,$$
we have 
\begin{equation*}
 H_{\pi_{st}}^1(X)=W \quad \text{and}\quad \dim H_{\pi_{st}}^1(X)=n.
 \end{equation*}
\end{theorem}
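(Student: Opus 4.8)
The plan is to read $H^1_{\pi_{st}}(X)$ off directly from Proposition \ref{CPn-cohomology-prop} specialized to $k=1$. Since $H^0_{\pi_{st}}(X)=\C$ consists of constants, that proposition gives
$$H^1_{\pi_{st}}(X) = (H^0_{\pi_{st}}(X)\wedge W) \oplus \Big(\bigoplus_{I \in S(1,\pi_{st})} \C(\chi^I \cdot \cV_I)\Big) = W \oplus \Big(\bigoplus_{I \in S(1,\pi_{st})} \C(\chi^I \cdot \cV_I)\Big).$$
As $\dim W = n$, the whole theorem reduces to the single claim that $S(1,\pi_{st}) = \emptyset$; once this is established the second summand disappears and we obtain $H^1_{\pi_{st}}(X) = W$ with $\dim H^1_{\pi_{st}}(X) = n$.

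The next step is to make $S(1)$ explicit. An element $I=(m_1,\dots,m_n)\in M$ lies in $S(1)$ precisely when $|I|=1$ and $m_i\geq -1$ for all $0\leq i\leq n$, where $m_0=-\sum_{i=1}^n m_i$. Because $\sum_{i=0}^n m_i = 0$, condition \eqref{CPn-cohomology-thm-eqn2} together with $|I|=1$ (exactly one of the $m_i$ equal to $-1$) forces exactly one index $i$ with $m_i=-1$ and exactly one index $j\neq i$ with $m_j=1$, all remaining entries being $0$. For such $I$ we have $\cE_I = e_i$, and whether $I\in S(1,\pi_{st})$ is decided entirely by condition \eqref{mI-eqn}: we must test whether $(\imath_I\Pi_{st})\wedge e_i = 0$, i.e.\ whether the vector $\imath_I\Pi_{st}\in N_\C$ is a scalar multiple of $e_i$.

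The core computation is therefore to evaluate
$$\imath_I\Pi_{st} = \sum_{1\leq k<l\leq n}(m_k\, e_l - m_l\, e_k)$$
for each such $I$ and to verify it is never proportional to $e_i$. Working in the honest basis $e_1,\dots,e_n$ (recall $e_0=-\sum_{k=1}^n e_k$), a short case analysis according to whether $i,j\in\{1,\dots,n\}$ or $i=0$ shows that $\imath_I\Pi_{st}$ always carries a nonzero coefficient along a basis direction distinct from $e_i$. For instance, when $1\leq i<j\leq n$ one computes $\imath_I\Pi_{st} = -e_i - 2\sum_{i<l<j} e_l - e_j$, whose $e_j$-component is nonzero; the case $i>j$ is symmetric, and the case $i=0$ is handled similarly after expressing the result in $e_1,\dots,e_n$. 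Hence $(\imath_I\Pi_{st})\wedge e_i\neq 0$ for every $I\in S(1)$, so $S(1,\pi_{st})=\emptyset$ and the theorem follows.

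I expect the only delicate point to be this contraction computation, and in particular the case $i=0$, where $\cE_I = e_0 = -\sum_{k=1}^n e_k$ is the dependent generator: there one must rewrite $\imath_I\Pi_{st}$ in the basis $e_1,\dots,e_n$ and check it is not proportional to $\sum_{k=1}^n e_k$, the $e_j$-coefficient vanishing while some other coefficient does not. This is precisely the place where the special shape of the standard structure $\Pi_{st}=\sum_{1\leq i<j\leq n} e_i\wedge e_j$ (all coefficients equal to $1$) enters, and where the argument genuinely uses $n\geq 2$, since $\pi_{st}$ is nonzero only then.
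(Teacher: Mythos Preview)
Your proof is correct and follows the same route as the paper: both reduce the theorem to the claim $S(1,\pi_{st})=\emptyset$ (the paper isolates this as Lemma~\ref{ST-thm-lem}) and verify it by checking, for each $I\in S(1)$, that $\imath_I\Pi_{st}$ is never a scalar multiple of $\cE_I=e_i$. The only point to tighten is that your case split ``$i,j\in\{1,\dots,n\}$ or $i=0$'' omits the case $j=0$ (i.e.\ $m_0=1$, $m_i=-1$ with $1\le i\le n$); the paper treats this as a separate case, and your method handles it at once since $\imath_{(-e_i^*)}\Pi_{st}=\sum_{k<i}e_k-\sum_{l>i}e_l$ has vanishing $e_i$-component but is nonzero for $n\ge 2$.
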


To prove Theorem \ref{ST-thm}, we need the following lemma. 
 
\begin{lemma}\label{ST-thm-lem}
For $X=\CP^{n}$ equipped with the standard Poisson structure
$$\pi_{st}=\sum_{1\leq i<j\leq n}v_i\wedge v_j,$$
we have 
$$S(1,\pi_{st})=\emptyset,$$
where $S(1,\pi_{st})$ is the set of all $I\in S(1)$ satisfying the Equation
\begin{equation}\label{ST-thm-proof-eqn1}
(\imath_{I}\Pi_{st})\wedge \cE_I=0,
\end{equation}
here $\Pi_{st}=\sum_{1\leq i<j\leq n}e_i\wedge e_j$.
\end{lemma}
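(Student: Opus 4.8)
The plan is to reduce the vanishing $S(1,\pi_{st})=\emptyset$ to an explicit linear-algebra check: first I would describe $S(1)$ concretely, then compute the contraction $\imath_I\Pi_{st}$ in closed form, and finally verify that it is never proportional to $\cE_I$. Throughout I assume $n\ge 2$, since for $n=1$ the structure $\pi_{st}$ is the zero bivector and the statement fails.

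First I would unwind the definition of $S(1)$. An element $I=(m_1,\dots,m_n)\in M$ with $m_0=-\sum_{i=1}^n m_i$ lies in $S(1)$ exactly when $m_i\ge -1$ for all $0\le i\le n$ and $|I|=1$, i.e.\ exactly one of $m_0,\dots,m_n$ equals $-1$. Because $\sum_{i=0}^n m_i=0$ and the remaining $n$ entries are nonnegative, those entries sum to $1$; hence precisely one of them equals $1$ and all the rest vanish. Thus each $I\in S(1)$ is encoded by an ordered pair of distinct indices $a,b\in\{0,1,\dots,n\}$ with $m_a=-1$ and $m_b=1$, and in this case $\cE_I=e_a$ is a single vector (with $e_0=-\sum_{i=1}^n e_i$ when $a=0$).

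Next I would compute the contraction. Expanding $\imath_I\Pi_{st}=\sum_{1\le i<j\le n}\bigl(m_i e_j-m_j e_i\bigr)$ and collecting terms in the basis $e_1,\dots,e_n$ gives
\begin{equation*}
\imath_I\Pi_{st}=\sum_{k=1}^n c_k\, e_k,\qquad c_k=m_0+m_k+2\sum_{i<k}m_i .
\end{equation*}
Substituting the three possible shapes of $I$ (namely $b=0$, then $a=0$, then $a,b\in\{1,\dots,n\}$) produces explicit vectors whose coefficients take only values in $\{0,\pm1,\pm2\}$, and which are never identically zero for $n\ge 2$.

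Finally, since $|I|=1$ makes $\cE_I$ a single vector, the equation $(\imath_I\Pi_{st})\wedge\cE_I=0$ is equivalent to $\imath_I\Pi_{st}\in\C\,\cE_I$. For $a\in\{1,\dots,n\}$ (the cases $b=0$ and $a,b\ge1$) I would exhibit an index $k\neq a$ with $c_k\neq0$: when $a,b\ge1$ the coefficient $c_b$ does the job, and when $b=0$ one finds $c_k=\pm1$ for some $k\neq a$, so $\imath_I\Pi_{st}\notin\C e_a$. For $a=0$ one has $\cE_I=e_0=-\sum_{i=1}^n e_i$, so proportionality would force all $c_k$ to coincide, whereas the computed coefficients take the values $-1,0,1$ and are not all equal. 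In every case $(\imath_I\Pi_{st})\wedge\cE_I\neq0$, giving $S(1,\pi_{st})=\emptyset$. The one point requiring care—the main obstacle—is precisely the index $a=0$: there $\cE_I=e_0$ is not a basis vector, so the proportionality test must be read as ``all coordinates equal'' rather than ``a single coordinate survives'', which is why that case is handled by a separate criterion.
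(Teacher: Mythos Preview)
Your proof is correct and follows essentially the same strategy as the paper: both identify $S(1)$ with ordered pairs of indices, reduce the condition to $\imath_I\Pi_{st}\in\C\,e_a$, and then handle the three cases $a=0$, $b=0$, $a,b\ge 1$ separately. The only cosmetic difference is that you package the contraction in the closed formula $c_k=m_0+m_k+2\sum_{i<k}m_i$ and read off non-proportionality from these coefficients, whereas the paper checks individual pairings $\Pi_{st}(e_j^*,e_k^*)$; your explicit remark that $n\ge 2$ is needed is a useful clarification the paper leaves implicit.
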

\begin{proof}

Let us denote $\alpha_{i,j}$ $(0\leq i\neq j\leq n)$ as the element in $M$, with
$m_i=-1$, $m_j=1$ and $0$ elsewhere. Then we have 
$$S(1)=\{\alpha_{i,j}\mid 0\leq i\neq j\leq n\}.$$
In the case of $I=\alpha_{i,j}$, we have $\cE_I=e_i$. The Equation \eqref{ST-thm-proof-eqn1} becomes
\begin{equation}\label{ST-thm-proof-eqn2}
(\imath_{I}\Pi_{st})\wedge e_i=0,
\end{equation}
which is equivalent to
\begin{equation}\label{ST-thm-proof-eqn3}
\imath_{I}\Pi_{st}=\lambda e_i,  \qquad \lambda\in\C.
\end{equation}
Suppose that $\{e_1^*,e_2^*,...,e_n^*\}\subset M$ is the dual basis of 
$\{e_1,e_2,...,e_n\}$. Then we have
\begin{align*}
I=
\begin{cases}
e_j^*, \quad & \text{for} \quad i=0,~j\neq 0.\\
-e_i^*, \quad &\text{for}\quad i\neq 0,~ j=0.\\
-e_i^*+e_j^*,\quad &\text{for}\quad i\neq 0,~ j\neq 0.
\end{cases}
\end{align*} 
\begin{itemize}
\item[(a)] In the case $i=0$ and $j\neq 0$, Equation \eqref{ST-thm-proof-eqn3} becomes
$$\imath_{e_j^*}\Pi=\lambda e_0=-\lambda\sum_{k=1}^{n}e_k,$$
which implies
\begin{equation}\label{ST-thm-proof-eqn4}
\Pi(e_j^*,e_k^*)=\langle \imath_{e_j^*}\Pi,e_k^*\rangle
=\langle-\lambda\sum_{k=1}^{n}e_k, e_k^*\rangle=-\lambda
\end{equation}
for all $1\leq k\leq n$. 

But $$\Pi_{st}=\sum_{1\leq s<t\leq n}e_s\wedge e_t,$$
which imples
$$\Pi(e_j^*,e_j^*)=0$$
and $$ \Pi(e_j^*,e_k^*)=\pm1$$
for all $1\leq k\neq j\leq n$. Thus Equation \eqref{ST-thm-proof-eqn4} can not hold for all $1\leq k\leq n$.

So Equation \eqref{ST-thm-proof-eqn3} has no solution in this case.

\item[(b)] In the case $i\neq 0$ and $ j=0$, Equation \eqref{ST-thm-proof-eqn3} becomes
$$\imath_{(-e_i^*)}\Pi=\lambda e_i,$$
which implies
$$ \Pi(-e_i^*,e_k^*)=0$$
for all $1\leq k\neq i\leq n$.

But $$\Pi_{st}=\sum_{1\leq s<t\leq n}e_s\wedge e_t,$$ which implies
$$ \Pi(-e_i^*,e_k^*)=\pm 1$$
for all $1\leq k\neq i\leq n$. 

Thus Equation \eqref{ST-thm-proof-eqn3} has no solution in this case.

\item[(c)] In the case $ i\neq 0$ and $j\neq 0$, $I=-e_i^*+e_j^*$, 
Equation \eqref{ST-thm-proof-eqn3} becomes
$$\imath_{(-e_i^*+e_j^*)}\Pi=\lambda e_i.$$
It can not be true since that 
$$ \Pi(-e_i^*+e_j^*, e_j^*)=\Pi(-e_i^*, e_j^*)=\pm 1,$$
but $$ \langle\lambda e_i,e_j^*\rangle=0.$$
\end{itemize}
By the arguments above, we have $S(1,\pi_{st})=\emptyset.$
\end{proof}

{\bf Proof of Theorem \ref{ST-thm}:}
\begin{proof}
By Theorem \ref{CPn-cohomology-thm}, we have
$$H_{\pi_{st}}^1(X)=\bigoplus_{I\in S_1(\pi_{st})}\C (\chi^I\cdot \cV_I)\wedge W^{1-|I|}.$$
By Lemma \ref{ST-thm-lem}, $S(1,\pi_{st})=\emptyset$,
we have 
$S_1(\pi_{st})=S(0,\pi_{st})\cup S(1,\pi_{st})=S(0,\pi_{st})$.

As $S(0,\pi_{st})$ consists only one element $I=(0,\ldots,0)$, we have 
$$\bigoplus_{I\in S(0,\pi_{st})}\C (\chi^I\cdot \cV_I)\wedge W^{1-|I|}=W.$$

Therefore we have
$$ H_{\pi_{st}}^1(X)=W \quad \text{and} \quad \dim H_{\pi_{st}}^1(X)=n.$$
 \end{proof}

Let $\sigma: \CP^n\rightarrow\CP^n$ be an isomorphism of $\CP^n$ defined by
\begin{equation} \label{ST-CPn-Zn-eqn}
\sigma([z_0, z_1, \ldots, z_{n-1}, z_n])=[z_1, z_2, \ldots, z_n, z_0].
\end{equation}
 
As $\sum_{i=0}^{n}v_i=0$, we have that
\begin{align*}
\pi_{st}=&\sum_{1\leq i<j\leq n}v_i\wedge v_j=\sum_{0\leq i<j\leq n}v_i\wedge v_j\\
=&\sum_{0\leq i<j\leq n-1}v_i\wedge v_j\\
=&\sigma^{-1}_{*}(\pi_{st}),
\end{align*}
where the last step holds since that 
$$\sigma_*(v_i)=\sigma_*(z_i\frac{\partial}{\partial z_i})
=z_{i+1}\frac{\partial}{\partial z_{i+1}}=v_{i+1}\quad(0\leq i\leq n-1).$$

Therefore the standard Poisson structure is invariant under the map $\sigma$.

Since $\sigma^{n+1}=\id$, we can define a $\Z_{n+1}$-action on $X=\CP^n$, 
generated by $\sigma$. By the arguments above, 
$\pi_{st}$ is invariant under the $\Z_{n+1}$-action.

\begin{corollary}\label{ST-prop-invariant}
The standard Poisson structure
$$\pi_{st}=\sum_{1\leq i<j\leq n}v_i\wedge v_j$$
on $X=\CP^n$ is invariant under the $\Z_{n+1}$-action defined above,
which induces a  $\Z_{n+1}$-action on the Poisson cohomology groups $H^{k}_{\pi_{st}}(X)$ $(0\leq k\leq n)$.
\end{corollary}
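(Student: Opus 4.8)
The plan is to deduce the statement from the general principle that a Poisson automorphism of $(X,\pi)$ acts on Poisson cohomology, applied to the finite-dimensional model for $H^\bullet_{\pi_{st}}(X)$ furnished by Lemma \ref{LSX-lem} and Theorem \ref{TheoemVanish}.

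The invariance of $\pi_{st}$ under $\sigma$ has already been established in the computation preceding the statement: applying $\sigma_*$ to both sides of $\pi_{st}=\sigma^{-1}_*(\pi_{st})$ yields $\sigma_*\pi_{st}=\pi_{st}$. Since $\sigma$ is a biholomorphism of $X=\CP^n$, its pushforward carries a global holomorphic $k$-vector field to a global holomorphic $k$-vector field, so it defines a linear isomorphism $\sigma_*\colon H^0(X,\wedge^k\mathcal{T}_X)\to H^0(X,\wedge^k\mathcal{T}_X)$ for each $0\le k\le n$.

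The essential step is to check that $\sigma_*$ is a cochain map for the differential $d_{\pi_{st}}=[\pi_{st},\cdot]$. This rests on the naturality of the Schouten bracket under a diffeomorphism, $\sigma_*[\alpha,\beta]=[\sigma_*\alpha,\sigma_*\beta]$, together with the invariance just recorded: for any holomorphic multivector field $\alpha$,
$$\sigma_*\bigl(d_{\pi_{st}}\alpha\bigr)=\sigma_*[\pi_{st},\alpha]=[\sigma_*\pi_{st},\sigma_*\alpha]=[\pi_{st},\sigma_*\alpha]=d_{\pi_{st}}\bigl(\sigma_*\alpha\bigr).$$
By Lemma \ref{LSX-lem} and Theorem \ref{TheoemVanish}, $H^\bullet_{\pi_{st}}(X)$ is the cohomology of the finite complex $\bigl(H^0(X,\wedge^\bullet\mathcal{T}_X),\,d_{\pi_{st}}\bigr)$; hence $\sigma_*$ is a cochain automorphism of this complex and descends to a linear automorphism $\sigma_*\colon H^k_{\pi_{st}}(X)\to H^k_{\pi_{st}}(X)$ for every $0\le k\le n$.

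Finally I would verify that these maps assemble into a group action. Functoriality of pushforward gives $(\sigma^j)_*=(\sigma_*)^j$ on cochains and hence on cohomology, and since $\sigma^{n+1}=\id$ on $X$ we obtain $(\sigma_*)^{n+1}=\id$ on each $H^k_{\pi_{st}}(X)$; therefore $\overline{j}\mapsto(\sigma_*)^j$ is a well-defined $\Z_{n+1}$-action. The only point demanding care is the naturality of the Schouten bracket combined with the identification of $H^\bullet_{\pi_{st}}(X)$ with the cohomology of global holomorphic multivector fields, which is exactly what lets $\sigma_*$ act on the finite model; everything else is a routine verification.
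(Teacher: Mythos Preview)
Your proposal is correct and matches the paper's approach: the paper establishes $\sigma_*\pi_{st}=\pi_{st}$ and $\sigma^{n+1}=\id$ in the paragraph preceding the corollary and then states the corollary without further proof, leaving the passage to cohomology implicit. You have simply filled in the standard details (naturality of the Schouten bracket making $\sigma_*$ a cochain map on the finite model supplied by Lemma~\ref{LSX-lem} and Theorem~\ref{TheoemVanish}) that the paper takes for granted.
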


In the cases of $X=\CP^2$ (Proposition \ref{CP2-coh-prop}) 
and $X=\CP^3$ (Proposition \ref{CP3-coh-prop}), it is easy to verify the $\Z_3$-action 
and the $\Z_4$-action on the Poisson cohomology groups.

\section{Poisson cohomology of holomorphic toric Poisson structures on $\C^n$}\label{section5}

In this section, we will study the Poisson cohomology of $X=\C^n$.
Our setting is based on Example \ref{Cn-exa1} and Example \ref{Cn-exa2}.

Let $W=\rho(N_\C)$, $W^k=\wedge^k W$ and $W^0=\C$.
For any $I\in M$, let
$$m_i=\langle I, e_i\rangle~(1\leq i\leq n).$$
Suppose that $\{m_{i_1},m_{i_2},...,m_{i_l}\mid 0\leq i_1<i_2<...<i_l\leq n\}$ is the set of all the elements equal to $-1$ in $\{m_0, m_1,\ldots, m_n\}$. 
Set
\begin{align*}
|I|=l,\quad \chi^I=z_1^{m_1}\cdots z_n^{m_n},\quad \cV_I=v_{i_1}\wedge\ldots\wedge v_{i_l}\in W^l, \quad
\cE_I=e_{i_1} \wedge\ldots\wedge e_{i_l} \in\wedge^l N.
\end{align*}

Any holomorphic toric Poisson structure $\pi$ on $X=\C^n$ can be written as 
$$\pi=\sum_{1\leq i<j\leq n}a_{ij}v_i\wedge v_j.$$
The corresponding $\Pi=\sum_{1\leq i<j\leq n}a_{ij}e_i\wedge e_j\in\wedge^2 N_{\C}$
satisfies $\rho(\Pi)=\pi$.
Notice that  
$$\pi=\sum_{1\leq i<j\leq n}a_{ij}v_i\wedge v_j=\sum_{1\leq i<j\leq n}a_{ij}z_i z_j\frac{\partial}{\partial z_i}\wedge\frac{\partial}{\partial z_j},$$ 
which is called a diagonal quadratic Poisson structure in \cite{Monnier 02}.

For $X=\C^n$, since $H^i(X,\wedge^j TX)=0$ $(i>0)$, the Poisson cohomology  
$H^\bullet_\pi (X)$ is isomorphic to the cohomology of the complex
\begin{equation*}
H^{0}(X,\mathcal{O}_{X})\xrightarrow {d_{\pi}}
H^{0}(X,T_{X})\xrightarrow{d_{\pi}}
H^{0}(X,\wedge^{2}T_{X})\xrightarrow{d_{\pi}}
\ldots \xrightarrow{d_{\pi}}H^{0}(X,\wedge^{n}T_{X}),
\end{equation*}
where $d_{\pi}=[\pi,\cdot]$. 

The complex vector space $H^0(X,\wedge^k TX)$ $(0\leq k\leq n)$ is infinite dimensional. Any holomorphic $k$-vector field on $X$ can be written as  
\begin{equation*}
\sum_{1\le i_1<i_2<\ldots< i_k\le n}f_{i_1,i_2,\ldots,i_k}\frac{\partial}{\partial z_{i_1}}\wedge\ldots\wedge\frac{\partial}{\partial z_{i_k}},
\end{equation*}
where $f_{i_1,i_2,\ldots,i_k}$ are analytic functions on $\C^n$ with variables $z_1,z_2,\ldots, z_n$. 
Analytic functions on $\C^n$ can be written as 
$$\sum_{m_i\geq 0} a_{m_1,\ldots, m_n}z_1^{m_1}\ldots z_n^{m_n}$$
satisfying the convergence conditions, which makes the computation of Poisson cohomology difficult.

However, we can consider the algebraic Poisson coholomogy groups and the formal Poisson cohomology groups, which are easier to compute than the Poisson cohomology of holomorphic toric Poisson structures on $\C^n$. 
\begin{itemize}
\item In \cite{Monnier 02}, Monnier considered the formal $k$-vector fields on $R^n$ and computed the formal Poisson cohomology of diagonalizable quadratic Poisson structures. His results also applies for $X=\C^n$. 
A formal  $k$-vector field on $\C^n$ is in the following form
\begin{equation*}
\sum_{1\le i_1<i_2<\ldots< i_k\le n}f_{i_1,i_2,\ldots,i_k}\frac{\partial}{\partial z_{i_1}}\wedge\ldots\wedge\frac{\partial}{\partial z_{i_k}},
\end{equation*}
where $f_{i_1,i_2,\ldots,i_k}\in\C[[z_1,z_2,\ldots, z_n]]$. 
We denote $\F H^0(X,\wedge^k TX)$ $(0\leq k\leq n)$ as the complex vector space of all 
formal  $k$-vector fields on $X=\C^n$. Then $H^0(X,\wedge^k TX)$ can be considered as a subspace of $\F H^0(X,\wedge^k TX)$. 

The formal Poisson cohomology group $\F H_{\pi}^{\bullet}(X)$ is defined to be the cohomology of the following complex
\begin{equation*}
\F H^{0}(X,\mathcal{O}_{X})\xrightarrow {d_{\pi}}
\F H^{0}(X,T_{X})\xrightarrow{d_{\pi}}
\F H^{0}(X,\wedge^{2}T_{X})\xrightarrow{d_{\pi}}
\ldots \xrightarrow{d_{\pi}}\F H^{0}(X,\wedge^{n}T_{X}),
\end{equation*}
where $d_{\pi}=[\pi,\cdot]$. 
The formal Poisson cohomology group $\mathcal{F} H_{\pi}^{\bullet}(X)$ can also be considered as the Poisson cohomology group of the Poisson algebra $\C[[z_1,z_2,\ldots, z_n]]$ on $X=\C^{n}$, with the Poisson bracket defined by $\pi$.

\item In \cite{Aaron 13}, McMillan studied the Poisson algebra of the coordinate ring of an affine Poisson variety. Here we only focus on the case of $X=\C^n$. For an algebraic $k$-vector field on $\C^n$, we means a $k$-vector field with the following form
\begin{equation*}
\sum_{1\le i_1<i_2<\ldots< i_k\le n}f_{i_1,i_2,\ldots,i_k}\frac{\partial}{\partial z_{i_1}}\wedge\ldots\wedge\frac{\partial}{\partial z_{i_k}},
\end{equation*}
where $f_{i_1,i_2,\ldots,i_k}\in\C[z_1,z_2,\ldots, z_n]$. 
We denote $\mathcal{A} H^0(X,\wedge^k TX)$ $(0\leq k\leq n)$ as the complex vector space of all algebraic  $k$-vector fields on $X=\C^n$. 
Then $\A H^0(X,\wedge^k TX)$ can be considered as a subspace of 
$H^0(X,\wedge^k TX)$. 

The algebraic Poisson cohomology group $\mathcal{A} H_{\pi}^{\bullet}(X)$ is defined to be the cohomology of the following complex
\begin{equation*}
\A H^{0}(X,\mathcal{O}_{X})\xrightarrow {d_{\pi}}
\A H^{0}(X,T_{X})\xrightarrow{d_{\pi}}
\A H^{0}(X,\wedge^{2}T_{X})\xrightarrow{d_{\pi}}
\ldots \xrightarrow{d_{\pi}}\A H^{0}(X,\wedge^{n}T_{X}),
\end{equation*}
where $d_{\pi}=[\pi,\cdot]$. 
The algebraic Poisson cohomology group $\mathcal{A} H_{\pi}^{\bullet}(X)$ can also be considered as the Poisson cohomology group of the Poisson algebra $\C[z_1,z_2,\ldots, z_n]$ on $X=\C^{n}$, with the Poisson bracket defined by $\pi$.
\end{itemize}

\begin{lemma}\label{Cn-multiVec-lem}
Let $X=\C^n$. Let $V_I^k=\C (\chi^I\cdot \cV_I)\wedge W^{k-|I|}$ for $|I|\leq k\leq n$.
Then we have
\begin{enumerate}
\item The space of formal $k$-vector fields
\begin{equation}
\F H^0(X,\wedge^k \mathcal{T}_X)=\prod_{I\in S_k}V_I^k
\end{equation}
for $0\leq k\leq n$, where $S_k$ is the set consisting of all $I\in M$ satisfying the conditions
\begin{equation*} 
m_i\geq -1 \quad (1\leq i\leq n)
\end{equation*}
and 
\begin{equation*}
|I|\leq k.
\end{equation*}

\item The space of algebraic $k$-vector fields
\begin{equation}
\mathcal{A} H^0(X,\wedge^k \mathcal{T}_X)=\bigoplus_{I\in S_k}V_I^k
\end{equation}
for $0\leq k\leq n$, where $S_k$ is the set consisting of all $I\in M$ satisfying the conditions
\begin{equation*} 
m_i\geq -1 \quad (1\leq i\leq n)
\end{equation*}
and 
\begin{equation*}
|I|\leq k.
\end{equation*}

\end{enumerate}
\end{lemma}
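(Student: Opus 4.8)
The plan is to run the same weight-space bookkeeping as in the proof of Theorem \ref{CPn-multiVec-thm}, but in the affine setting the argument is purely elementary: there is no Euler sequence and no cohomology vanishing to invoke, because a holomorphic (resp.\ formal, resp.\ algebraic) $k$-vector field on $\C^n$ is already written out as $\sum_{i_1<\cdots<i_k} f_{i_1,\ldots,i_k}\,\frac{\partial}{\partial z_{i_1}}\wedge\cdots\wedge\frac{\partial}{\partial z_{i_k}}$ with $f_{i_1,\ldots,i_k}$ in $\C[z_1,\ldots,z_n]$ (resp.\ $\C[[z_1,\ldots,z_n]]$). The only tool needed is the change of frame $\frac{\partial}{\partial z_i}=z_i^{-1}v_i$, which converts the coordinate frame into the $v_i$-frame and back.

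First I would check the easy inclusion $V_I^k\subseteq\A H^0(X,\wedge^k\cT_X)$ for every $I\in S_k$. Exactly as in step (b) of the proof of Theorem \ref{CPn-multiVec-thm}, writing $J=\{i_1,\ldots,i_l\}$ for the positions of the $-1$'s of $I$ gives
\begin{equation*}
\chi^I\cdot\cV_I=\Big(\prod_{i\notin J}z_i^{m_i}\Big)\,\frac{\partial}{\partial z_{i_1}}\wedge\cdots\wedge\frac{\partial}{\partial z_{i_l}},
\end{equation*}
which is a polynomial $l$-vector field because $m_{i_s}=-1$ cancels the factor $z_{i_s}$ inside $v_{i_s}$ and $m_i\ge0$ off $J$; wedging with $W^{k-|I|}$ keeps us among polynomial, hence formal, $k$-vector fields.

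Second, the converse containment. Given an arbitrary algebraic (resp.\ formal) $k$-vector field, I would rewrite it in the $v$-frame as $\sum \tilde f\, v_{i_1}\wedge\cdots\wedge v_{i_k}$ with $\tilde f=f/(z_{i_1}\cdots z_{i_k})$, and expand each $\tilde f$ into monomials. A monomial summand $c\,z^m v_{i_1}\wedge\cdots\wedge v_{i_k}$ coming from $f\in\C[z_1,\ldots,z_n]$ (resp.\ $\C[[z_1,\ldots,z_n]]$) forces $m_i\ge0$ off $\{i_1,\ldots,i_k\}$ and $m_{i_s}\ge-1$, hence $m_i\ge-1$ for all $i$ with the $-1$'s confined to $\{i_1,\ldots,i_k\}$; collecting the $-1$-positions $J$ shows $|I|=|J|\le k$, so $I\in S_k$, and the summand equals $\pm\,\chi^I\cdot\cV_I\wedge w$ with $w\in W^{k-|I|}$, i.e.\ it lies in $V_I^k$. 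This mirrors step (a) of Theorem \ref{CPn-multiVec-thm}, but with no homogeneity constraint $\sum_i m_i=0$; that missing relation is exactly why $S_k$ is now an infinite set.

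Third, I would assemble the decomposition and separate the two cases, which is the only genuinely new point. The monomials $z^m$ carry pairwise distinct $(\C^*)^n$-weights and each $V_I^k$ is homogeneous of weight $I$, so distinct $V_I^k$ are linearly independent and the sum is direct. The map sending a pair $\big(z^m,\,\{i_1<\cdots<i_k\}\big)$ with $m\ge0$ to the resulting pair $(I,w)$ is a bijection onto the natural bases of the spaces $V_I^k$ with $I\in S_k$; under it, an algebraic $k$-vector field, whose coefficients are polynomials, has only finitely many nonzero weight components, giving $\A H^0(X,\wedge^k\cT_X)=\bigoplus_{I\in S_k}V_I^k$, whereas a formal $k$-vector field may have infinitely many, giving $\F H^0(X,\wedge^k\cT_X)=\prod_{I\in S_k}V_I^k$. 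The main thing to verify carefully is that an arbitrary family $(\psi_I\in V_I^k)_{I\in S_k}$ really reassembles into a legitimate formal $k$-vector field with no hidden growth condition; this holds precisely because a formal power series places no restriction on its coefficients, so the product is the full space of formal $k$-vector fields and nothing smaller.
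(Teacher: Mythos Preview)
Your proposal is correct and follows precisely the approach the paper itself indicates: the paper does not give a proof of this lemma but simply states that it ``can be proved by similar arguments as we have done for $\CP^n$'' and skips the details. You have carried out exactly that adaptation, correctly noting the simplifications in the affine case (no Euler sequence, no homogeneity constraint $\sum m_i=0$) and correctly handling the one genuinely new point, namely the distinction between the direct sum for polynomial coefficients and the unrestricted product for formal power series.
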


\begin{theorem}\label{Cn-cohomology-thm}
Let $\pi$ be a holomorphic toric Poisson structure on $X=\C^n$. 
Let $V_I^k=\C (\chi^I\cdot \cV_I)\wedge W^{k-|I|}$ for $|I|\leq k\leq n$.
Then we have
\begin{enumerate}
\item for $0\leq k\leq n$, we have
\begin{align}
&\F H_{\pi}^k(X)=\prod_{I\in S_k(\pi)}V_I^k,\\
&\mathcal{A} H_{\pi}^k(X)=\bigoplus_{I\in S_k(\pi)}V_I^k,
\end{align}
where $S_k(\pi)$ is the set consisting of all $I\in M$ satisfying 
\begin{align}
&m_i\geq -1\quad (0\leq i\leq n); \label{Cn-cohomology-thm-Con1}\\  
&|I|\leq k; \label{Cn-cohomology-thm-Con2}
\end{align}
and the equation
\begin{equation}
(\imath_{I}\Pi)\wedge \cE_I=0.\label{Cn-cohomology-thm-Con3}
\end{equation}
\item  $\F H_{\pi}^k(X)=\mathcal{A} H_{\pi}^k(X)=0$ for $k>n$.
\end{enumerate}
\end{theorem}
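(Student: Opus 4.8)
The plan is to transplant the weight-space argument behind Theorem \ref{CPn-cohomology-thm} to the formal and algebraic complexes on $\C^n$, the only genuinely new feature being that the set of admissible weights is now infinite, so the cohomology is infinite-dimensional and one must keep track of the distinction between the direct product (formal) and the direct sum (algebraic). I would begin from the weight decompositions supplied by Lemma \ref{Cn-multiVec-lem}, namely $\F H^0(X,\wedge^k\mathcal{T}_X)=\prod_{I\in S_k}V_I^k$ and $\A H^0(X,\wedge^k\mathcal{T}_X)=\bigoplus_{I\in S_k}V_I^k$, and re-establish the $\C^n$-analogue of Lemma \ref{Poisson-Coh-lem1}. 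The required identities $[\pi,\chi^I]=\chi^I\cdot\rho(\imath_I\Pi)$ and $[\pi,\chi^I\cdot\cV_I\wedge w]=\rho(\imath_I\Pi)\wedge(\chi^I\cdot\cV_I\wedge w)$ rest only on $v_i(\chi^I)=\langle e_i,I\rangle\chi^I$ and the commutativity $[v_i,v_j]=0$ of the torus vector fields; these are local computations on $T_N$ and hold verbatim here (indeed more simply, since on $\C^n$ there is no $z_0$-relation among the $v_i$), so they apply equally to formal and algebraic multivector fields.

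Next I would show, exactly as in Lemma \ref{Poisson-Coh-lem2}, that $d_\pi(V_I^k)\subseteq V_I^{k+1}$, so that $d_\pi$ respects the weight grading. Consequently the formal (resp. algebraic) cochain complex splits as a direct product (resp. direct sum) over $I\in M$ of the finite-dimensional subcomplexes
\begin{equation*}
C_I^\bullet:\quad V_I^{|I|}\xrightarrow{d_\pi}V_I^{|I|+1}\xrightarrow{d_\pi}\cdots\xrightarrow{d_\pi}V_I^n,
\end{equation*}
on which $d_\pi$ acts, up to sign and the fixed factor $\chi^I\cdot\cV_I$, as wedging by the vector $\xi_I:=\rho(\imath_I\Pi)$. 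Writing $U:=W/\mathrm{span}(v_{i_1},\dots,v_{i_l})$ with $l=|I|$, the identification $V_I^k\cong\wedge^{k-l}U$ turns $C_I^\bullet$ into the exterior-algebra complex of wedging by the image $\bar\xi_I$ of $\xi_I$ in $U$. This is precisely the content of Lemma \ref{Poisson-Coh-lem3} read one weight at a time: if $(\imath_I\Pi)\wedge\cE_I=0$, equivalently $\bar\xi_I=0$, then $d_\pi$ vanishes on $C_I^\bullet$ and $H^k(C_I^\bullet)=V_I^k$; if $(\imath_I\Pi)\wedge\cE_I\neq0$, then $\bar\xi_I\neq0$ and the Cartan homotopy $\iota_\phi(\bar\xi_I\wedge-)+\bar\xi_I\wedge\iota_\phi(-)=\mathrm{id}$ for any $\phi$ with $\phi(\bar\xi_I)=1$ shows $C_I^\bullet$ is acyclic.

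It then remains to reassemble the per-weight answers, and this is the one step that is not a purely formal repetition of the $\CP^n$ case: I must confirm that cohomology commutes with the infinite operations. For the algebraic complex this is the standard fact that $H^k$ commutes with arbitrary direct sums. For the formal complex I would argue that $H^k$ also commutes with arbitrary direct products: kernels commute with products automatically, and images commute with products because a family of preimages may be chosen component-by-component with no finiteness constraint, so $\mathrm{im}(\prod_I d_I)=\prod_I\mathrm{im}(d_I)$. Granting this, the surviving weights are exactly those $I$ with $(\imath_I\Pi)\wedge\cE_I=0$, i.e. $I\in S_k(\pi)$, yielding $\F H_\pi^k(X)=\prod_{I\in S_k(\pi)}V_I^k$ and $\A H_\pi^k(X)=\bigoplus_{I\in S_k(\pi)}V_I^k$; the vanishing for $k>n$ is immediate from $\wedge^k\mathcal{T}_X=0$. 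I expect this commutation-with-products step to be the main obstacle: once it is in place, everything else is the weight-by-weight Koszul computation already carried out for $\CP^n$.
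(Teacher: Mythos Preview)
Your proposal is correct and follows precisely the route the paper indicates: the paper does not give a separate proof but states that Lemma \ref{Cn-multiVec-lem} and Theorem \ref{Cn-cohomology-thm} ``can be proved by similar arguments as we have done for $\CP^n$'' and skips the details, so your transplantation of Lemmas \ref{Poisson-Coh-lem1}--\ref{Poisson-Coh-lem3} to the $\C^n$ setting is exactly what is intended. Your explicit verification that cohomology commutes with the infinite direct product (formal case) and direct sum (algebraic case) is the one point the paper leaves entirely implicit, and it is the right thing to isolate; once that is in hand the per-weight Koszul computation is identical to the $\CP^n$ argument, as you say.
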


Lemma \ref{Cn-multiVec-lem} and Theorem \ref{Cn-cohomology-thm} can be proved by similar arguments as we have done for $\CP^n$. Here we skip the proof.

\begin{remark}
In Lemma \ref{Cn-multiVec-lem} and Theorem \ref{Cn-cohomology-thm}, the results for formal $k$-vector fields and formal Poisson cohomology were proved by Monnier in \cite{Monnier 02}.  Here we write the results from the viewpoint of toric varieties.
\end{remark}

By Theorem \ref{Cn-cohomology-thm}, we have the following corollary.
\begin{corollary}
If $S_k(\pi)$ is a finite set, then we have $\F H_{\pi}^k(X)\cong \mathcal{A}H_{\pi}^k(X)$.
\end{corollary}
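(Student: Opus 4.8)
The plan is to read off both groups from Theorem \ref{Cn-cohomology-thm} and to observe that the only difference between them, a direct product versus a direct sum over the same index set $S_k(\pi)$, evaporates precisely when that set is finite. By Theorem \ref{Cn-cohomology-thm},
$$\F H_{\pi}^k(X)=\prod_{I\in S_k(\pi)}V_I^k \qquad\text{and}\qquad \mathcal{A} H_{\pi}^k(X)=\bigoplus_{I\in S_k(\pi)}V_I^k,$$
where in both expressions the factor $V_I^k=\C(\chi^I\cdot \cV_I)\wedge W^{k-|I|}$ is literally the same space; since $W$ is $n$-dimensional, each $V_I^k$ is finite-dimensional of dimension at most $\binom{n}{k-|I|}$.

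First I would set up the natural comparison map. Because $\pi$ is a polynomial (quadratic) bivector field, the operator $d_\pi=[\pi,\cdot]$ sends algebraic multivector fields to algebraic multivector fields, so the algebraic complex $\mathcal{A} H^0(X,\wedge^\bullet TX)$ is a subcomplex of the formal complex $\F H^0(X,\wedge^\bullet TX)$. The resulting inclusion of complexes induces a homomorphism $\mathcal{A} H_{\pi}^k(X)\to \F H_{\pi}^k(X)$, and under the identifications of Theorem \ref{Cn-cohomology-thm} this homomorphism is exactly the canonical inclusion $\bigoplus_{I\in S_k(\pi)}V_I^k\hookrightarrow \prod_{I\in S_k(\pi)}V_I^k$ of the sum into the product.

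The key step is then the elementary fact that the direct sum and the direct product of a family of vector spaces indexed by a finite set coincide. Hence, when $S_k(\pi)$ is finite the canonical inclusion of the previous paragraph is an equality, so the induced map on cohomology is an isomorphism and we obtain $\F H_{\pi}^k(X)\cong \mathcal{A} H_{\pi}^k(X)$.

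I do not anticipate a real obstacle: granting Theorem \ref{Cn-cohomology-thm}, the corollary is essentially the triviality that finite products equal finite sums. The single point worth a careful sentence is that the comparison map genuinely is the inclusion $\bigoplus\hookrightarrow\prod$ of the two weight-space decompositions; this is immediate once one notes that the generators $\chi^I\cdot \cV_I$ and the spaces $W^{k-|I|}$ are manifestly algebraic, so the same elements represent the algebraic and formal cohomology classes in each weight $I$.
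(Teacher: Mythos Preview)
Your argument is correct and is exactly the approach the paper takes: the corollary is stated as an immediate consequence of Theorem~\ref{Cn-cohomology-thm}, since a finite direct product equals the corresponding finite direct sum. Your additional remarks about the comparison map being the canonical inclusion $\bigoplus\hookrightarrow\prod$ are more detail than the paper provides, but they are accurate and harmless.
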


{\bf Question:} if $S_k(\pi)$ is a finite set, do we have 
$\F H_{\pi}^k(X)\cong \mathcal{A}H_{\pi}^k(X)\cong H_{\pi}^k(X)$?

\begin{bibdiv}
\begin{biblist}

\bib{Atiyah}{article}{
   author={Atiyah, M. F.},
   title={Complex analytic connections in fibre bundles},
   journal={Trans. Amer. Math. Soc.},
   volume={85},
   date={1957},
   pages={181--207},
   issn={0002-9947},
   review={\MR{0086359}},
}

\bib{Bondal}{article}{
   author={Bondal, A.},
   title={Noncommutative deformations and Poisson brackets on projective spaces},
   journal={Preprint MPI/ 93-67},
   date={1993},
}

\bib{Bott}{article}{
   author={Bott, Raoul},
   title={Homogeneous vector bundles},
   journal={Ann. of Math. (2)},
   volume={66},
   date={1957},
   pages={203--248},
   issn={0003-486X},
   review={\MR{0089473}},
}

\bib{B-G-Y 06}{article}{
   author={Brown, K. A.},
   author={Goodearl, K. R.},
   author={Yakimov, M.},
   title={Poisson structures on affine spaces and flag varieties. I. Matrix
   affine Poisson space},
   journal={Adv. Math.},
   volume={206},
   date={2006},
   number={2},
   pages={567--629},
   issn={0001-8708},
   review={\MR{2263715}},
   doi={10.1016/j.aim.2005.10.004},
}

\bib{Caine}{article}{
   author={Caine, Arlo},
   title={Toric Poisson structures},
   language={English, with English and Russian summaries},
   journal={Mosc. Math. J.},
   volume={11},
   date={2011},
   number={2},
   pages={205--229, 406},
   issn={1609-3321},
   review={\MR{2859234}},
}

\bib{C-F-P 16}{article}{
   author={Chen, Zhuo},
   author={Fino, Anna},
   author={Poon, Yat-Sun},
   title={Holomorphic Poisson structure and its cohomology on nilmanifolds},
   journal={Differential Geom. Appl.},
   volume={44},
   date={2016},
   pages={144--160},
   issn={0926-2245},
   review={\MR{3433981}},
   doi={10.1016/j.difgeo.2015.11.006},
}

\bib{Cox}{article}{
   author={Cox, David},
   title={What is a toric variety?},
   conference={
      title={Topics in algebraic geometry and geometric modeling},
   },
   book={
      series={Contemp. Math.},
      volume={334},
      publisher={Amer. Math. Soc., Providence, RI},
   },
   date={2003},
   pages={203--223},
   review={\MR{2039974}},
   doi={10.1090/conm/334/05983},
}


\bib{Fulton}{book}{
   author={Fulton, William},
   title={Introduction to toric varieties},
   series={Annals of Mathematics Studies},
   volume={131},
   note={The William H. Roever Lectures in Geometry},
   publisher={Princeton University Press, Princeton, NJ},
   date={1993},
   pages={xii+157},
   isbn={0-691-00049-2},
   review={\MR{1234037}},
   doi={10.1515/9781400882526},
}

\bib{G-Y 09}{article}{
   author={Goodearl, K. R.},
   author={Yakimov, M.},
   title={Poisson structures on affine spaces and flag varieties. II},
   journal={Trans. Amer. Math. Soc.},
   volume={361},
   date={2009},
   number={11},
   pages={5753--5780},
   issn={0002-9947},
   review={\MR{2529913}},
   doi={10.1090/S0002-9947-09-04654-6},
}

\bib{Goto 10}{article}{
   author={Goto, Ryushi},
   title={Deformations of generalized complex and generalized K\"ahler
   structures},
   journal={J. Differential Geom.},
   volume={84},
   date={2010},
   number={3},
   pages={525--560},
   issn={0022-040X},
   review={\MR{2669364}},
}

\bib{Gualtieri 11}{article}{
   author={Gualtieri, Marco},
   title={Generalized complex geometry},
   journal={Ann. of Math. (2)},
   volume={174},
   date={2011},
   number={1},
   pages={75--123},
   issn={0003-486X},
   review={\MR{2811595}},
   doi={10.4007/annals.2011.174.1.3},
}

\bib{Hitchin 06}{article}{
   author={Hitchin, Nigel},
   title={Instantons, Poisson structures and generalized K\"ahler geometry},
   journal={Comm. Math. Phys.},
   volume={265},
   date={2006},
   number={1},
   pages={131--164},
   issn={0010-3616},
   review={\MR{2217300}},
   doi={10.1007/s00220-006-1530-y},
}

\bib{Hitchin 11}{article}{
   author={Hitchin, Nigel},
   title={Poisson modules and generalized geometry},
   conference={
      title={Geometry and analysis. No. 1},
   },
   book={
      series={Adv. Lect. Math. (ALM)},
      volume={17},
      publisher={Int. Press, Somerville, MA},
   },
   date={2011},
   pages={403--417},
   review={\MR{2882431}},
}

\bib{Hitchin 12}{article}{
   author={Hitchin, Nigel},
   title={Deformations of holomorphic Poisson manifolds},
   language={English, with English and Russian summaries},
   journal={Mosc. Math. J.},
   volume={12},
   date={2012},
   number={3},
   pages={567--591, 669},
   issn={1609-3321},
   review={\MR{3024823}},
}

\bib{Hong-Xu 11}{article}{
   author={Hong, Wei},
   author={Xu, Ping},
   title={Poisson cohomology of del Pezzo surfaces},
   journal={J. Algebra},
   volume={336},
   date={2011},
   pages={378--390},
   issn={0021-8693},
   review={\MR{2802550}},
   doi={10.1016/j.jalgebra.2010.12.017},
}

\bib{Kim 14}{book}{
   author={Kim, Chunghoon},
   title={Deformations of Compact Holomorphic Poisson Manifolds and
   Algebraic Poisson Schemes},
   note={Thesis (Ph.D.)--University of California, Riverside},
   publisher={ProQuest LLC, Ann Arbor, MI},
   date={2014},
   pages={192},
   isbn={978-1303-85988-5},
   review={\MR{3232289}},
}

\bib{L-S-X 08} {article}{
   author={Laurent-Gengoux, Camille},
   author={Sti{\'e}non, Mathieu},
   author={Xu, Ping},
   title={Holomorphic Poisson manifolds and holomorphic Lie algebroids},
   journal={Int. Math. Res. Not. IMRN},
   date={2008},
   pages={Art. ID rnn 088, 46},
   issn={1073-7928},
   review={\MR{2439547}},
   doi={10.1093/imrn/rnn088},
}



\bib{Lu-Mouquin 15}{article}{
   author={Lu, Jiang-Hua},
   author={Mouquin, Victor},
   title={On the $T$-leaves of some Poisson structures related to products
   of flag varieties},
   journal={Adv. Math.},
   volume={306},
   date={2017},
   pages={1209--1261},
   issn={0001-8708},
   review={\MR{3581329}},
   doi={10.1016/j.aim.2016.11.008},
}

\bib{Mayansky 15}{article}{
   author={Mayanskiy, Evgeny},
   title={Poisson cohomology of two Fano threefolds},
   journal={J. Algebra},
   volume={424},
   date={2015},
   pages={21--45},
   issn={0021-8693},
   review={\MR{3293212}},
   doi={10.1016/j.jalgebra.2014.08.049},
}

\bib{Aaron 13}{article}{
   author = {McMillan, Aaron Fraenkel},
    title = {Extensions of Poisson Structures on Singular Hypersurfaces},
  journal = {ArXiv e-prints 1310.6083},
     date = {2013},
}

\bib{Monnier 02}{article}{
   author={Monnier, Philippe},
   title={Formal Poisson cohomology of quadratic Poisson structures},
   journal={Lett. Math. Phys.},
   volume={59},
   date={2002},
   number={3},
   pages={253--267},
   issn={0377-9017},
   review={\MR{1904986}},
   doi={10.1023/A:1015513632414},
}

\bib{Oda}{book}{
   author={Oda, Tadao},
   title={Convex bodies and algebraic geometry},
   series={Ergebnisse der Mathematik und ihrer Grenzgebiete (3) [Results in
   Mathematics and Related Areas (3)]},
   volume={15},
   note={An introduction to the theory of toric varieties;
   Translated from the Japanese},
   publisher={Springer-Verlag, Berlin},
   date={1988},
   pages={viii+212},
   isbn={3-540-17600-4},
   review={\MR{922894}},
}

\bib{Polishchuk}{article}{
   author={Polishchuk, A.},
   title={Algebraic geometry of Poisson brackets},
   note={Algebraic geometry, 7},
   journal={J. Math. Sci. (New York)},
   volume={84},
   date={1997},
   number={5},
   pages={1413--1444},
   issn={1072-3374},
   review={\MR{1465521}},
   doi={10.1007/BF02399197},
}

\bib{Poon 16}{article}{
   author={Poon, Yat Sun},
   title={Holomorphic Poisson structures on principle toric bundles},
   journal={private communication},
   date={2016},
}

\end{biblist}
\end{bibdiv}

\end{document}